\newcommand{\cF}{\mathcal{F}}
\newcommand{\ca}{\mathcal{A}}
\newcommand{\cb}{\mathcal{B}}
\newcommand{\cc}{\mathcal{C}}
\newcommand{\cg}{\mathcal{G}}
\newcommand{\cf}{\mathcal{F}}
\newcommand{\ck}{\mathcal{K}}
\newcommand{\fil}{\zeta} 
\newcommand{\wrpp}{\Omega}
\newcommand{\sgm}{\Sigma}
\newcommand{\nrm}{N}
\newcommand{\aff}{A}
\newcommand{\bff}{B}
\newcommand{\hmu}{\tilde{\mu}}
\newcommand{\scmb}{S}
\newcommand{\knh}{k}
\newtheorem{theorem}{Theorem}
\newtheorem{lemma}{Lemma}
\newtheorem{definition}{Definition}
\newtheorem{cor}{Corollary}
\newtheorem{prop}{Proposition}
\def\be{\begin{equation}}
\def\ee{\end{equation}}
\def\bea{\begin{eqnarray}}
\def\eea{\end{eqnarray}}
\newcommand{\td}{\mathrm{d}}
\title{ \bf{All separable supersymmetric AdS$_5$ black holes}}
\author{James Lucietti$^a$\footnote{j.lucietti@ed.ac.uk}\;, Praxitelis Ntokos$^a$\footnote{Praxitelis.Ntokos@ed.ac.uk} and Sergei G. Ovchinnikov$^{a,b}$\footnote{s.g.ovchinnikov@sms.ed.ac.uk} 
\\ \\ \small \sl $^a$School of Mathematics and Maxwell Institute for Mathematical Sciences, \\ \small \sl    University of Edinburgh, King's Buildings, Edinburgh, EH9 3FD, UK \\ \\ \small \sl $^b$Institute of Theoretical and Mathematical Physics, \\ \small \sl  Lomonosov Moscow State University, Moscow 119991, Russia }
\date{}
\begin{document}

\maketitle

\begin{picture}(0,0)(0,0)
\put(350, 240){}
\put(350, 225){}
\end{picture}

\begin{abstract} 
We consider the classification of supersymmetric black hole solutions to five-dimensional STU gauged supergravity that admit torus symmetry. This reduces to a problem in toric K\"ahler geometry on the base space. We introduce the class of separable toric K\"ahler surfaces that unify product-toric, Calabi-toric and orthotoric K\"ahler surfaces, together with an associated class of separable 2-forms.  We prove that  any  supersymmetric toric solution that is timelike, with a separable K\"ahler base space and Maxwell fields, outside a horizon with a compact (locally) spherical cross-section,  must  be   locally isometric to the known black hole or its near-horizon geometry. An essential part of the proof is a near-horizon analysis which shows that the only possible separable K\"ahler base space is Calabi-toric. In particular, this also implies that our previous black hole uniqueness theorem for minimal gauged supergravity applies to the larger class of separable K\"ahler base spaces.

\end{abstract}
\vskip1cm

\newpage

\tableofcontents

\newpage

\section{Introduction}

The AdS/CFT duality predicts that  asymptotically AdS$_5 \times S^5$ supersymmetric black holes in  type IIB supergravity correspond to a class of BPS states in maximally supersymmetric Yang-Mills theory on $\mathbb{R}\times S^3$~\cite{Maldacena:1997re}. A notable problem in this context is to derive the Bekenstein-Hawking entropy of the black holes from the dual Yang-Mills theory~\cite{Kinney:2005ej}.  In recent years, remarkable progress in this area  has led to such a holographic derivation of the entropy of the known black hole~\cite{Hosseini:2017mds, Cabo-Bizet:2018ehj, Choi:2018hmj, Benini:2018ywd, Zaffaroni:2019dhb, Ntokos:2021duk, David:2023gee}. However, a full resolution of this problem naturally requires a complete classification of black holes in this context, which itself is a difficult open problem as explained in~\cite{Lucietti:2021bbh}.

In fact, all known such black hole solutions are solutions to five-dimensional STU gauged supergravity, that is, minimal gauged supergravity coupled to two vector multiplets. This theory arises as a consistent dimensional reduction of type IIB supergravity on $S^5$ that retains only the KK zero modes on the sphere~\cite{Cvetic:1999xp}. The bosonic field content consists of a metric, three Maxwell fields and two real scalar fields. The theory has AdS$_5$ with vanishing matter fields  as the unique maximally supersymmetric solution~\cite{Gutowski:2004yv}. Asymptotically AdS$_5$ solutions in this theory can carry a number of conserved charges: the mass $M$, two angular momenta $J_1, J_2$ and three electric charges $Q_1, Q_2, Q_3$.  Supersymmetric AdS$_5$ solutions satisfy the BPS equality
\be
M= \frac{|J_1|}{\ell}+\frac{|J_2|}{\ell} + |Q_1|+|Q_2|+|Q_3|  \; , \label{BPS}
\ee
where $\ell$ is the AdS$_5$ radius. The most general {\it known} supersymmetric black hole solution in this theory  is a four parameter family, which carries angular momenta $J_1, J_2$ and charges $Q_1, Q_2, Q_3$ related by one nonlinear constraint, with the mass  determined by \eqref{BPS}~\cite{Kunduri:2006ek}.\footnote{A non-extremal black hole that carries six independent conserved charges $M, J_i, Q_I$ has been found~\cite{Wu:2011gq}.  As far as we are aware, it has not been checked that its BPS limit is equal to the known supersymmetric black hole~\cite{Kunduri:2006ek}.}  The special case with equal angular momenta was  first found by Gutowski and Reall~\cite{Gutowski:2004yv}. This immediately raises the following question: is this  the most general supersymmetric AdS$_5$ black hole?  The purpose of this paper is to address this question within five-dimensional STU gauged supergravity.\footnote{The possibility of hairy supersymmetric black holes in other truncations of supergravity has recently been investigated~\cite{Markeviciute:2018yal, Markeviciute:2018cqs,Dias:2022eyq}.}

There is a further truncation of IIB supergravity to five-dimensional minimal gauged supergravity. This  also corresponds to a consistent truncation of STU supergravity obtained by setting the three Maxwell fields equal and the scalars to zero (so the electric charges are equal).  The most general known supersymmetric black hole solution in this theory is the Chong-Cvetic-Lu-Pope (CCLP) solution~\cite{Chong:2005hr}, which corresponds to the equal charge case of the black hole found in~\cite{Kunduri:2006ek}. The special case with equal angular momenta is the Gutowski-Reall (GR) black hole which was the first example of a supersymmetric black hole in AdS$_5$~\cite{Gutowski:2004ez}. We have previously established two uniqueness theorems for  supersymmetric solutions to minimal gauged supergravity that are timelike outside a regular horizon.  The first result of this kind established uniqueness  of such solutions under the assumption of a compatible $SU(2)$ symmetry, establishing that the GR black hole  or its near-horizon geometry are the only solutions in this class~\cite{Lucietti:2021bbh}. The second result established uniqueness for solutions with a compatible toric symmetry and a Calabi-toric K\"ahler base space~\cite{Lucietti:2022fqj}, showing that the CCLP black hole or its near-horizon geometry are the only solutions in this class.  Both of these results make essential use of the uniqueness of the near-horizon geometry~\cite{Gutowski:2004ez, Kunduri:2006uh}, but no global assumptions on the exterior region of the spacetime.  

In this paper we will generalise these results to STU supergravity.  The main challenge is that, unlike in the minimal theory, the Maxwell fields and scalar fields are not fully determined by the K\"ahler base geometry and therefore these must be solved for simultaneously to the metric. The classification of timelike supersymmetric solutions to five-dimensional STU gauged supergravity is equivalent to a problem defined on a K\"ahler base space that is orthogonal to the supersymmetric Killing field~\cite{Gutowski:2004yv}.  This is analogous to the corresponding classification in minimal gauged supergravity~\cite{Gauntlett:2003fk}, where it has been further shown that  supersymmetry reduces to finding a class of K\"ahler metrics that satisfy a complicated fourth order ODE in the curvature~\cite{Cassani:2015upa}.  In the STU theory the full set of supersymmetric constraints has not been previously written down explicitly. We fill this gap (see Section \ref{sec:time-class}) and find that in this more general theory, supersymmetry does not imply an explicit constraint K\"ahler base geometry which is instead coupled to the scalars and the Maxwell fields.  In any case, for solutions that are also invariant under a toric symmetry this reduces to a complicated PDE problem for a  toric-K\"ahler metric.  To make progress we follow the same strategy as in the minimal theory and make further assumptions on the K\"ahler base space.

Motivated by this, we introduce a class of toric-K\"ahler metrics that we refer to as {\it separable} because they can be described in terms of single-variable functions in an orthogonal  coordinate system.  We will show that these naturally unify product-toric, Calabi-toric and ortho-toric K\"ahler metrics.  In fact, these three classes arose in a general study of K\"ahler surfaces admitting a Hamiltonian 2-form, a concept that was introduced in~\cite{Apostolov2001TheGO}. In particular, any toric-K\"ahler surface admitting a Hamiltonian 2-form must be one of these types~\cite{Legendre}.  Therefore, our definition of separable toric-K\"ahler surfaces provides an alternative approach to the study of K\"ahler surfaces that admit Hamiltonian 2-forms which may be worthy of further investigation.   We also define an associated class of separable 2-forms which are similarly described by single-variable functions in the same orthogonal coordinates (the K\"ahler and Ricci form are two such examples). We  then define timelike supersymmetric toric solutions to be separable if they have a separable K\"ahler base and compatible separable Maxwell fields.

We are now ready to state our main results which are summarised by the following  theorems.

\begin{theorem}\label{stu-theorem}
Any supersymmetric toric solution to five-dimensional  STU gauged supergravity, that is timelike and separable outside a smooth horizon with compact (locally) spherical cross-sections, is locally isometric to the known black hole~\cite{Kunduri:2006ek} or its near-horizon geometry.
\end{theorem}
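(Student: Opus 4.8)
The plan is to turn the separability hypothesis into a system of ordinary differential equations, use a near-horizon analysis to eliminate all but one of the allowed base geometries, and then integrate the surviving system. First I would take the general form of a timelike supersymmetric solution of STU gauged supergravity derived in Section~\ref{sec:time-class}: the metric is $ds^2 = -f^2(\td t + \omega)^2 + f^{-1} h$ with $h$ a Kähler metric on the four-dimensional base, $f$ a positive function fixed by the scalar fields, $\td\omega$ constrained by the base curvature and the $F^I$, and the Maxwell fields $F^I$ constrained but \emph{not} determined by $h$ alone (this is the new feature relative to the minimal theory). Imposing a compatible toric symmetry — a $T^3$ action commuting with the supersymmetric Killing field, which lies in its span — and then separability forces $h$ to be one of the three Apostolov–Calderbank–Gauduchon families (product-toric, Calabi-toric or orthotoric), written in canonical orthogonal coordinates with the metric governed by single-variable profile functions, and forces each $F^I$ to be a separable $2$-form in the same coordinates. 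Substituting these ans\"atze into the supersymmetry equations, the Bianchi identities and the $F^I$ equations of motion collapses the PDE problem to coupled ODEs for the profile functions of $h$, of $\omega$, and of the $F^I$.

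The heart of the argument is then the near-horizon analysis. A smooth Killing horizon sits on the locus where $f \to 0$, i.e.\ where the conformal factor of the base degenerates; regularity forces this locus to be an endpoint of the range of one of the separable coordinates, and the requirement that the cross-section be a \emph{compact} manifold with (locally) spherical topology constrains the zeros of the profile functions and their first derivatives there. I would show that in the product-toric and orthotoric cases these constraints cannot be met: either the putative horizon cross-section fails to close up smoothly (a conical singularity along a rotation axis, or the wrong $\mathrm{U}(1)$ circle collapsing), or the allowed topology is forced to be of $T^3$ or $S^1\times S^2$ type rather than a spherical space form $S^3/\Gamma$; only the Calabi-toric structure is consistent, and it reproduces exactly the near-horizon data of~\cite{Gutowski:2004ez,Kunduri:2006uh}. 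This is precisely the step that also upgrades the minimal-supergravity theorem of~\cite{Lucietti:2022fqj} from Calabi-toric bases to all separable bases.

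With $h$ established to be Calabi-toric, I would integrate the reduced ODE system in the STU theory explicitly. The Calabi-toric metric is governed by a single function of one variable (plus a constant fixed by normalisation), and the separable Maxwell fields contribute a small number of extra profile functions; the supersymmetry constraints together with the Maxwell equations become ODEs whose general solution is a finite-parameter family of rational/polynomial profiles. Imposing smooth extension across the horizon together with closure of the base on a compact cross-section fixes the integration constants, and the resulting local metric, scalars and Maxwell fields are those of the black hole of~\cite{Kunduri:2006ek}, degenerating to its near-horizon geometry when the horizon parameters are tuned. The main obstacle I anticipate is the near-horizon step rather than the final integration: distinguishing which patterns of vanishing profile-function data genuinely correspond to a smooth horizon — as opposed to a symmetry axis, an orbifold point, or an asymptotic region — and carrying this out uniformly across the full product-toric and orthotoric families is considerably more delicate than the subsequent global analysis, which is essentially forced once the base type is known.
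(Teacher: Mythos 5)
Your overall architecture matches the paper's: reduce to ODEs via separability, use a near-horizon analysis to eliminate the product-toric and orthotoric bases, then integrate the Calabi-toric system and match to the known solution. However, there is a genuine gap in the step you yourself identify as the heart of the argument: the mechanism you propose for ruling out the PT and OT bases is not the one that works, and as described it would fail. You suggest that the obstruction is a smoothness/topology failure of the horizon cross-section (conical singularities, the wrong $U(1)$ collapsing, or forced $T^3$/$S^1\times S^2$ topology), with the horizon sitting at an endpoint of the range of one separable coordinate. In fact, in the PT and OT cases the horizon maps to a \emph{corner} of the orbit space, a single point $(\xi_0,\eta_0)$ at which both profile functions vanish, $F(\xi_0)=G(\eta_0)=0$ (in the degenerate orthotoric case even $\xi_0=\eta_0$), so the horizon's angular coordinate is invisible at leading order in the separable coordinates and there is no axis or collapsing circle at the horizon to which a conical-singularity argument could be applied. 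The actual obstruction is quantitative, not topological: one must import the \emph{explicit} near-horizon solution of~\cite{Kunduri:2007qy}, which forces the moment maps to behave as in \eqref{linearform} and the Gram matrix of Killing fields to have the specific form \eqref{Gij-eta-dep}, namely a quadratic or cubic polynomial in $\hat{\eta}$ divided by $\hat{H}(\hat{\eta})^{1/3}$ at order $\lambda$. Expanding the separable ansatz \eqref{Gij-sep} about the corner, the PT and OT structures can only produce a \emph{linear} $\hat\eta$-dependence at that order (Lemmas \ref{product-no-go} and \ref{orthotoric-no-go}), or, in the coincident-root orthotoric case, force $\det G_{ij}=O(\lambda^3)$ instead of $O(\lambda^2)$; either way one gets a contradiction with the known near-horizon data. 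Without this explicit comparison the elimination step does not close, and the same data is also what fixes $G(\eta)$ and $\mu^I_\eta(\eta)$ completely in the surviving Calabi-toric case (Lemmas \ref{Calabi-horizon} and \ref{CT-NH-gauge}), which is what turns the residual problem into ODEs in $\rho$ alone.

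Two smaller points. First, the toric symmetry is a $U(1)^2$ action on top of the $\mathbb{R}$ generated by the supersymmetric Killing field $V$ (which need not have closed orbits and is not assumed to lie in the span of a $T^3$), not a $T^3$ action containing $V$. Second, your final integration step glosses over a real subtlety: for $\ca^2=\cb^2$ (the $SU(2)\times U(1)$-invariant case) the $\eta$-dependence of the constraints degenerates and no longer determines the radial ODEs directly; the paper must there assume an \emph{analytic} horizon and run an inductive argument on Taylor coefficients (case (ii) of Proposition \ref{prop:base}), which is why that case is stated separately as Theorem \ref{stu-su2-heorem} with the stronger hypothesis.
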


This is a generalisation of the aforementioned theorem proven in minimal supergravity for  Calabi-toric K\"ahler bases~\cite{Lucietti:2022fqj}. The strategy for the proof is as follows. First, we use the classification of near-horizon geometries~\cite{Kunduri:2007qy} to completely fix the single-variable functions of one of the orthogonal coordinates defined by separability (an angular coordinate). Then supersymmetry reduces to an ODE for the single-variable functions of the other orthogonal coordinate  (a radial coordinate), which can be explicitly solved for under  the relevant black hole boundary conditions. A key step in the proof is the near-horizon analysis, which shows that the only separable toric-K\"ahler base space compatible with a smooth horizon is Calabi-toric, that is, the product-toric and orthotoric bases are not allowed. This therefore also gives a stronger form of the theorem in minimal supergravity that we previously established~\cite{Lucietti:2022fqj},  as follows.

\begin{theorem}\label{minimal-theorem}
Any supersymmetric toric solution to five-dimensional minimal  gauged supergravity, that is timelike with a separable K\"ahler base outside a smooth horizon with compact cross-sections, is locally isometric to the CCLP black hole or its near-horizon geometry.
\end{theorem}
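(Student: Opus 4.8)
The plan is to obtain Theorem~\ref{minimal-theorem} as a specialisation of Theorem~\ref{stu-theorem} to equal charges. Five-dimensional minimal gauged supergravity is the consistent truncation of the STU theory obtained by setting the three Maxwell fields equal, $\mathcal{F}^1=\mathcal{F}^2=\mathcal{F}^3$, and freezing the two scalars to their AdS values; hence any supersymmetric toric solution of the minimal theory lifts to one of the STU theory with equal electric charges $Q_1=Q_2=Q_3$. Given such a solution that is timelike with a separable K\"ahler base outside a smooth horizon with compact cross-sections, the strategy is: (i) show the lifted STU solution is \emph{separable} in the sense required by Theorem~\ref{stu-theorem}; (ii) apply Theorem~\ref{stu-theorem}; (iii) identify the equal-charge member of the family~\cite{Kunduri:2006ek} with the CCLP black hole~\cite{Chong:2005hr}.

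For step (i) I would use that, in the minimal theory, the classification of timelike supersymmetric solutions~\cite{Gauntlett:2003fk, Gutowski:2004yv} fixes the Maxwell field strength in terms of the K\"ahler base: up to a fixed normalisation $\mathcal{F}^I$ is built from the K\"ahler form, the Ricci form, and the function $f$ (the norm of the supersymmetric Killing field), with $f$ itself determined algebraically by the scalar curvature of the base and the one-form $\omega$ determined by the Maxwell equation. When the base is a separable toric-K\"ahler surface, the K\"ahler and Ricci forms are separable 2-forms and $f$ is a separable function of the orthogonal coordinates; one then checks that the remaining freedom — the anti-self-dual part of $\mathrm{d}\omega$, fixed by the Maxwell equation — is likewise separable, so that the common Maxwell field strength $\mathcal{F}^I$ is a separable 2-form and the hypotheses of Theorem~\ref{stu-theorem} are met. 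I would also note that the compactness hypothesis is enough here: the near-horizon analysis entering the proof of Theorem~\ref{stu-theorem} forces any separable base compatible with a smooth horizon to be Calabi-toric, and the horizon cross-section of a Calabi-toric near-horizon geometry is a (locally) round $S^3$, so in the separable setting a compact cross-section is automatically (locally) spherical.

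Steps (ii) and (iii) are then immediate: Theorem~\ref{stu-theorem} gives that the solution is locally isometric to the known black hole~\cite{Kunduri:2006ek} or its near-horizon geometry, and equal Maxwell fields means equal charges, which is exactly the CCLP specialisation~\cite{Chong:2005hr} (together with its near-horizon geometry). Equivalently — and this is the route indicated in the text — one can argue entirely within the minimal theory: the near-horizon analysis forces the separable base to be Calabi-toric, after which the earlier Calabi-toric uniqueness result~\cite{Lucietti:2022fqj} yields the CCLP black hole or its near-horizon geometry.

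The main obstacle is the near-horizon analysis, which I would in any case have to run: working in Gaussian null coordinates about the smooth horizon with the general separable ansatz for the K\"ahler data, one matches the data to the classified supersymmetric near-horizon geometries and imposes regularity at the endpoints of the orthogonal coordinate intervals, where the $U(1)^2$ orbits degenerate. The hard part is showing that for the product-toric and orthotoric structures these regularity conditions are incompatible with a compact smooth horizon cross-section — one expects, for instance, that the relevant single-variable functions are forced to acquire the wrong sign or that the orbit space fails to close up — whereas the Calabi-toric case is consistent and reproduces the CCLP near-horizon geometry. A secondary, essentially routine, point is to verify carefully that each term of the minimal-theory Maxwell field really is a separable 2-form in the coordinates adapted to the separable K\"ahler structure, so that the reduction to Theorem~\ref{stu-theorem} is legitimate rather than circular.
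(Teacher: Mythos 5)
Your overall architecture is sound, and your closing ``equivalently'' route --- PT and OT ruled out by the near-horizon analysis, then the Calabi-toric uniqueness result of \cite{Lucietti:2022fqj} --- is in fact exactly how the paper proves Theorem \ref{minimal-theorem} (the no-go lemmas apply to the minimal theory because it is a consistent truncation of STU, and the separability of the Maxwell field is automatic there since it is fixed by the Ricci form). However, there are two genuine gaps. First, your justification for dropping the ``locally spherical'' hypothesis is circular: you argue that the near-horizon analysis forces the base to be Calabi-toric and hence the cross-section to be locally $S^3$, but Lemmas \ref{product-no-go} and \ref{orthotoric-no-go} are proved by comparing against the \emph{explicit} locally-$S^3$ near-horizon data of \cite{Kunduri:2007qy}, so they presuppose sphericity rather than establish it. The correct ingredient is the near-horizon uniqueness theorem in minimal gauged supergravity \cite{Kunduri:2006uh}, which says that compact cross-sections are automatically locally $S^3$ in that theory; this is precisely what fails in the STU theory (where $S^1\times S^2$ and $T^3$ horizons exist) and why Theorem \ref{stu-theorem} needs the extra hypothesis while Theorem \ref{minimal-theorem} does not.

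Second, the mechanism you propose for the PT/OT no-go --- regularity at the endpoints of the orthogonal coordinate intervals where the $U(1)^2$ orbits degenerate, or the orbit space failing to close up --- is not what works, and since this no-go is the only new content of Theorem \ref{minimal-theorem} beyond \cite{Lucietti:2022fqj}, the gap matters. The actual argument is entirely local at the horizon point of the orbit space and makes no reference to the axis: one expands the separable ansatz for the Gram matrix, equation \eqref{Gij-sep}, to leading order in the Gaussian null coordinate $\lambda$ and compares its dependence on the horizon angle $\hat\eta$ with the explicit near-horizon expression \eqref{Gij-eta-dep}, which has a quadratic/cubic numerator in $\hat\eta$ over $\hat H(\hat\eta)^{1/3}$. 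For PT, and for OT with $\xi_0\neq\eta_0$, the separable form can only produce a numerator \emph{linear} in $\hat\eta$ at order $\lambda$; for OT with $\xi_0=\eta_0$ one instead finds $\det G_{ij}=O(\lambda^3)$ via \eqref{detG}, contradicting the $O(\lambda^2)$ behaviour of the near-horizon data. You would need to replace your sketched endpoint/regularity argument with this angular-dependence comparison for the proof to go through.
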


A special case of the separable supersymmetric solutions of Calabi-type correspond to (locally) $SU(2)\times U(1)$ invariant solutions.  Surprisingly, the uniqueness proof for this class is more involved and requires the stronger assumption of analyticity (as in the minimal theory).

\begin{theorem}\label{stu-su2-heorem}
Any supersymmetric solution with a local  $SU(2)\times U(1)$ symmetry  to five-dimensional STU gauged supergravity, that is timelike outside an analytic horizon with compact (locally) spherical cross-sections, is locally isometric to the GR black hole~\cite{Gutowski:2004yv} or its near-horizon geometry.
\end{theorem}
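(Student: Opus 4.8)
The plan is to use the enhanced symmetry to collapse the separable framework to a system of ordinary differential equations in a single radial variable, and then to exploit the rigidity of the near-horizon geometry. Since $SU(2)\times U(1)$ contains the torus $U(1)^2$, such a solution is in particular toric; what is new over Theorem \ref{stu-theorem} is that one does not assume separability a priori. First I would show that $SU(2)\times U(1)$ invariance makes the K\"ahler base cohomogeneity one, with $SU(2)$ acting on the generic three-dimensional orbits, and that the near-horizon analysis underpinning Theorems \ref{stu-theorem} and \ref{minimal-theorem} forces the base to be of Calabi type near the horizon. The extra $SU(2)$ then fixes the ``angular'' single-variable function to be that of a round $S^2$, so the K\"ahler metric, the two scalars and the three Maxwell potentials all reduce to functions of one radial coordinate $r$. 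Supersymmetry together with $SU(2)$ invariance, which equates the two angular momenta, should moreover force the scalar fields to be constant and the three Maxwell fields equal --- either directly or as an output of the radial analysis --- so that the solution lies within minimal gauged supergravity and one is looking for the GR black hole.

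With this reduction in hand, one feeds the Calabi-type ansatz into the timelike supersymmetry constraints of Section \ref{sec:time-class} to obtain an ODE system for the remaining radial functions. The horizon sits at an endpoint $r = r_0$ of the radial interval, where the norm of the supersymmetric Killing field vanishes and the ODE system degenerates; boundary conditions there are supplied by the classification of supersymmetric near-horizon geometries \cite{Gutowski:2004ez, Kunduri:2006uh}, which for a compact (locally) spherical cross-section gives uniquely the GR near-horizon geometry and thereby fixes the leading behaviour of all radial functions at $r_0$. Compactness and regularity of the cross-section, together with the requirement that the exterior be genuinely five-dimensional, constrain the admissible Frobenius exponents at this point.

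The main obstacle --- and the reason analyticity is assumed --- is precisely the singular endpoint at the horizon. There the ODE system is of Fuchsian type, and, in contrast with the generic separable toric setting where the near-horizon classification of \cite{Kunduri:2007qy} already rigidifies the angular data and leaves essentially a first-order radial equation, here the radial system is of higher order and admits a priori non-analytic solutions decaying into the horizon that the near-horizon data alone cannot exclude. Under the analyticity hypothesis a Frobenius/indicial analysis shows that the analytic solution matching the GR near-horizon geometry is unique; integrating the ODEs explicitly then yields the GR black hole when the radial coordinate terminates at a second endpoint (closing off a second axis and an asymptotic region), or its near-horizon geometry when it does not. What remains is to carry out the indicial analysis carefully and to verify global regularity --- absence of conical singularities on the axes and the correct asymptotic fall-off --- which I expect to be routine once the ODE system has been written down.
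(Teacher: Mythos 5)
Your overall architecture matches the paper's: show that $SU(2)\times U(1)$ invariance forces the K\"ahler base and the magnetic fields to be Calabi-toric (hence separable), import the near-horizon boundary data, and then use analyticity at the horizon to kill the non-analytic radial modes by an order-by-order Frobenius-type argument. This is indeed how the paper proceeds: the theorem is reduced to the $\ca^2=\cb^2$ case of Theorem \ref{stu-theorem}, whose proof (case (ii) of Proposition \ref{prop:base}) expands the radial functions as power series in $\rho$ and shows inductively that all coefficients beyond the first vanish, the near-horizon classification having already fixed the angular data.

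However, there is a genuine error in your reduction. You assert that supersymmetry plus $SU(2)$ invariance ``should force the scalar fields to be constant and the three Maxwell fields equal'', so that the solution lies in minimal gauged supergravity. This is false, and it would collapse the theorem to an already-known statement while missing most of the solution space. The target of Theorem \ref{stu-su2-heorem} is the three-charge $SU(2)\times U(1)$-invariant black hole of \cite{Gutowski:2004yv} (the $\ca^2=\cb^2$, $K_I\neq 0$ case of the solution in Appendix \ref{sec:KLR}): its scalars \eqref{eq:stu_XI-Calabi} depend non-trivially on $\rho$ whenever the $\ck_I$ are not all zero, and the three Maxwell fields are unequal. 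The enhanced symmetry equates the two angular momenta but not the three charges. Consequently the radial problem is genuinely a coupled system for the metric function $F(\rho)$ and the three magnetic potentials $\mu^I_\rho(\rho)$, and the hard part of the paper's proof --- which your plan dismisses as routine --- is precisely the order-by-order analysis of this coupled system: at each order the Maxwell equations determine the coefficients $\nu^I_{n+2}$ in terms of $\cf_{n+1}$ (using the constraint \eqref{ck-constraint}), and the integrability condition then forces $\cf_{n+1}=0$, but only because the positivity bounds \eqref{c1c2-bounds} on the $\ck_I$ guarantee that the relevant charge-dependent numerators and denominators never vanish. If the scalars really were constant this subtlety would disappear, but then you would be proving a different (weaker) theorem; as stated, your route does not close.
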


We emphasise that these uniqueness theorems do not make any global assumptions on the exterior spacetime such as topology or asymptotics. Therefore, they also rule out the existence of smooth or analytic solutions that are asymptotically locally AdS$_5$ (other than trivial quotients of the known black hole). In particular, this  implies that the supersymmetric black holes with squashed boundary sphere do not have smooth horizons~\cite{Cassani:2018mlh, Bombini:2019jhp} (in the minimal theory it has been shown the horizons are $C^1$ but not $C^2$~\cite{Lucietti:2021bbh}).  We also emphasise that in the STU theory we need to impose the extra assumption that the cross-section is locally $S^3$ (spherical) because, in contrast to the minimal theory, there exist near-horizon geometries with $S^1\times S^2$ (ring) or $T^3$ (torus) topology~\cite{Kunduri:2007qy}. Unfortunately, our techniques do not apply to these other near-horizon geometries because they possess null supersymmetry. Therefore, our result does not address the existence of black rings in STU gauged supergravity, which remans an interesting open problem.

The organisation of this paper is as follows.  In Section \ref{sec:susytoric} we review supersymmetric solutions to five-dimensional gauged supergravity, define the subclass with a compatible toric symmetry, and derive the associated toric data for the near-horizon geometry. In Section \ref{sec:separable} we define the concept of separable toric-K\"ahler metrics and associated separable 2-forms; we have presented this section in a self-contained way as it may be of independent mathematical interest. In Section \ref{sec:separablesusy} we introduce separable supersymmetric solutions and prove the above black hole uniqueness theorems. In Section \ref{sec:discussion} we close with a Discussion. We also include an Appendix with  some auxiliary results on Hamiltonian 2-forms, and a simplified form of the known black hole~\cite{Kunduri:2006ek}  and its near-horizon geometry.

\section{Supersymmetric solutions with toric symmetry}
\label{sec:susytoric}

In this section we first recall the supergravity theory and the known constraints on timelike supersymmetric solutions.  We then impose toric symmetry and examine the constraints on the toric data arising from the presence of a smooth near-horizon geometry.

\subsection{Timelike supersymmetric solutions}\label{sec:time-class}

The bosonic content of five-dimensional $\mathcal{N}=1$ gauged supergravity coupled to $n-1$  vector multiplets comprises of a spacetime metric $\mathbf{g}$, $n$ abelian gauge fields $\mathbf{A}^{I},\,I=1, \dots, n$, and $n-1$ real scalar fields all defined on a five-dimensional manifold $M$.  We work in the conventions of \cite{Kunduri:2007qy} (see also~\cite{Gutowski:2004yv}).  The scalars can be represented by $n$ real positive scalar fields $X^I$ subject to the constraint  
\begin{equation}\label{eq:sc-constraint}
\frac{1}{6}C_{IJK}X^{I}X^{J}X^{K}=1\,,
\end{equation}
where $C_{IJK}=C_{(IJK)}$ are a set of real constants that obey
\be
C_{IJK}C_{J'(LM}C_{PQ)K'}\delta^{JJ'}\delta^{KK'} = 4\delta_{I(L}C_{MPQ)} \; ,  \label{eq:sym}
\ee
which means the scalars are in a symmetric space.   It is convenient to define
\be
X_{I} =\frac{1}{6}C_{IJK} X^J X^K  \; ,
\ee
so that \eqref{eq:sc-constraint} becomes $X_IX^I=1$.
The action is
\be
S= \frac{1}{16 \pi} \int \left( R_{\mathbf{g}}\star 1- Q_{IJ} \mathbf{F}^I \wedge \star \mathbf{F}^J - Q_{IJ} \td X^I \wedge \star \td X^J - \frac{1}{6} C_{IJK} \mathbf{F}^I\wedge \mathbf{F}^J \wedge \mathbf{A}^K  + 2 \ell^{-2} \mathcal{V} \star 1 \right) \; ,
\ee
where $\mathbf{F}^{I}=\td \mathbf{A}^{I}$ are the Maxwell fields and
\be
Q_{IJ}=\frac{9}{2} X_I X_J - \frac{1}{2} C_{IJK}X^K   \; .   \label{eq:QIJ}
\ee
Equation \eqref{eq:sym} ensures the latter is invertible with inverse
\be
Q^{IJ}= 2 X^I X^J- 6 C^{IJK} X_K  \;,
\ee
where $C^{IJK}:= C_{IJK}$ and it follows that
\be
X^I=\frac{9}{2} C^{IJK} X_J X_K  \; .   \label{eq:Xup}
\ee
The scalar potential is
\be
\mathcal{V}= 27 C^{IJK} \bar{X}_I \bar{X}_J X_K \; ,
\ee
where $\bar{X}_I$ are positive constants.  The unique maximally supersymmetric solution of this theory is AdS$_5$ with radius $\ell$ and vanishing Maxwell fields and constant scalars $X^I= \bar{X}^I$~\cite{Gutowski:2004yv}.  We will be mainly interested in STU gauged supergravity which is given by taking $n=3$, $C_{IJK}=1$ if $(IJK)$ is a permutation of $(123)$ and zero otherwise, and $\bar{X}^I=1$ (so $\bar{X}_I=1/3$). The truncation to minimal supergravity is given by constant scalars $X^I=\bar{X}^I$ and equal gauge fields $\mathbf{A}^I= \bar{X}^I\mathbf{A}$ (which is also a truncation of STU supergravity). We find it convenient to introduce a rescaled set of constants $\zeta_I:= 3 \ell^{-1} \bar{X}_I$.

The general form of supersymmetric solutions was determined in~\cite{Gutowski:2004yv}, following the analysis for minimal gauged supergravity~\cite{Gauntlett:2003fk}. Given a supercovariantly constant spinor one can construct several spinor bilinears: a real scalar $f$, a Killing vector field $V$ and three real 2-forms $J^{(i)}$, $i=1,2,3$. These satisfy 
\be
V^\mu V_\mu =-f^2  \; , \label{eq_Vnorm}
\ee
implying that $V$ is either timelike (in some open region) or globally null.  In this paper we will focus on the timelike class, that is, we assume there is some open region $U\subset M$ where $V$ is strictly timelike.   In the timelike case we can assume that $f>0$ in $U$ and write the metric as
\be\label{metricform}
\mathbf{g} = -f^2( \td t+\omega)^2+ f^{-1} h  \; , 
\ee
where $V=\partial_t$, $h$ is a Riemannian metric on the orthogonal base space $B$, and $\omega$ is a 1-form on $B$ defined by $\iota_V \omega=0$ and $\td \omega=- \td (f^{-2} V)$. The 2-forms $J^{(i)}$ can now be regarded as anti-self dual (ASD) 2-forms on $B$ with respect to the volume form $\text{vol}(h)$ on $B$, where  $(\td t +\omega)\wedge \text{vol}(h) $ is positively oriented in spacetime, that also satisfy the algebra of unit quaternions
\be
J^{(i)a}_{\phantom{(i)a}c} J^{(j)c}_{\phantom{(i)c}b} = - \delta^{ij}\delta^a_{~b}+ \epsilon^{ijk}J^{(k)a}_{\phantom{(k)a}b}  \; .  \label{quaternion}
\ee
Supersymmetry then implies that the base space $(B, h)$ is K\"ahler with a  K\"ahler form $J:=J^{(1)}$, 
\be
\nabla_a J^{(2)}_{bc}= P_a J^{(3)}_{bc}, \qquad \nabla_a J^{(3)}_{bc}= - P_a J^{(2)}_{bc}  \; , \label{eq:dJ23}
\ee
where $\nabla$ is the metric connection of $h$, 
\be
P= \zeta_I A^I  \; ,  \label{eq:P}
\ee
 the Maxwell field takes the form
\be
 \mathbf{F}^{I} =\td\big(f X^{I}(\td t+\omega)\big)+F^{I}   \; ,   \label{maxwellform}
\ee
and  the magnetic field $F^I=\td A^I$ is given by
\begin{equation}\label{FI_eq1}
F^{I}=\Theta^{I}-3 f^{-1} C^{IJK}\zeta_JX_K J\,  ,
\end{equation}
for some self-dual (SD) 2-forms $\Theta^{I}$ on $B$ that obey
\begin{equation}\label{GPlus}
G^{+}=-\frac{3}{2}X_{I}\Theta^{I}\,,
\end{equation}
where
\begin{equation}\label{GPlusMinus}
G^{\pm} :=\frac{1}{2}f(\td\omega\pm\star_{4}\td\omega) \; ,
\end{equation}
which encode the SD and ASD parts of $\td \omega$ and $\star_4$ is the Hodge dual on the base.  Equation (\ref{eq:dJ23}) implies that the Ricci form $\mathcal{R}_{ab} := \tfrac{1}{2} R_{abcd} J^{cd}$  is given by   $\mathcal{R}=\td P$ and hence (\ref{eq:P}), (\ref{FI_eq1})  in particular imply that
\be
\mathcal{R}- \frac{1}{4} R J = \zeta_I \Theta^I  \; ,  \label{ricci}
\ee
and
that the scalar $f$ is determined by
\begin{equation}\label{f-expression}
f=-\frac{12}{R}C^{IJK}\fil_{I}\fil_{J}X_{K}\, ,
\end{equation}
where $R$ is the scalar curvature of the base.

The conditions required by supersymmetry must be supplemented by the equations of motion. In particular, the Bianchi identity $\td \mathbf{F}^I=0$ and the Maxwell equations reduce to 
\be
\td F^I=0  \; ,  \label{eq_FIBianchi}
\ee
and
\begin{equation}\label{Maxwell-1}
\text{d}\star_{4}\td(f^{-1}X_{I})= -\frac{1}{6}C_{IJK} \Theta^{J}\wedge\Theta^{K}+\frac{2}{3}\fil_{I}f^{-1}G^{-}\wedge J + \frac{2}{3} f^{-2}( Q_{IJ}C^{JMN}\zeta_M \zeta_N+ \zeta_I \zeta_JX^J) \star_4 1 \; ,
\end{equation}
respectively.  In order to fully determine $\omega$ we expand its ASD part in the basis $J^{(i)}$ by writing
\begin{equation}\label{GMinus}
G^{-}=-\frac{\ell^{3}f}{48}\lambda_{i}J^{(i)}\,,
\end{equation}
where $\lambda_{i}$ are functions on $B$.  As we show below, the Maxwell equation \eqref{Maxwell-1} gives an expression for $\lambda_1$ in terms of $X^I, f$ and $\Theta^I$, whereas $\lambda_2, \lambda_3$ must satisfy a set of PDEs arising from the integrability (i.e. closure) of the equation
\be
\td \omega= f^{-1}( G^+ + G^-)  \; .  \label{dom}
\ee
Conversely, given a K\"ahler base $(B, h, J)$, scalars $X^I$ and SD two-forms $\Theta^I$ that satisfy the above conditions, one can reconstruct a timelike supersymmetric solution with $f, \omega$ determined by \eqref{f-expression} and (\ref{dom}) where $G^\pm$ are given by \eqref{GPlus} and \eqref{GMinus}.

It is convenient to repackage the scalars into a new set of scalar fields $\Phi^I$ defined by
\begin{equation}
\label{eq_Phidef}
\Phi^{I} :=3f^{-1}C^{IJK}\fil_{J}X_{K}\,.
\end{equation}
The inverse transformation is given by
\be
\label{XPhi_eq}
X_I = \frac{f \ell}{3} \left( \frac{1}{2} \ell^2 \zeta_I \zeta_J \Phi^J  -  G_{IJ} \Phi^J \right)  \; ,
\ee
where $G_{IJ}:=2\bar{Q}_{IJ}$ is defined by \eqref{eq:QIJ} with $\bar{X}^I$ in place of $X^I$ (note for STU supergravity $G_{IJ}= \delta_{IJ}$).   We can recover the function $f$ from the scalars $\Phi^I$ by using $C^{IJK}X_I X_J X_K=2/9$ (which follows from  $X^IX_I=1$ and \eqref{eq:Xup}) which gives
\be
f^{-3}= \frac{\ell^3}{6} C^{IJK} \left( \frac{1}{2} \ell^2 \zeta_I \zeta_P \Phi^P  -  G_{IP} \Phi^P \right)\left( \frac{1}{2} \ell^2 \zeta_J \zeta_Q \Phi^Q  -  G_{JQ} \Phi^Q \right)\left( \frac{1}{2} \ell^2 \zeta_K \zeta_R \Phi^R  -  G_{KR} \Phi^R \right)  \; .  \label{eq:fPhi}
\ee
We also introduce a basis of SD 2-forms $I^{(i)}$ that satisfy the quaternion algebra \eqref{quaternion} and expand
\be
\Theta^I= \Theta^I_i I^{(i)}  \; ,
\ee
for functions $\Theta^I_i$.
In terms of these scalars \eqref{FI_eq1} is simply
\begin{equation}\label{FI}
F^{I}=\Theta^{I}-\Phi^I J\,  ,
\end{equation}
so the Bianchi identity is
\be
\td \Theta^I= \td \Phi^I \wedge J  \; ,   \label{eq:bianchi}
\ee
equation \eqref{f-expression} becomes
\begin{equation}\label{Ricciscalar-1}
R=-4\fil_{I}\Phi^{I}\,, 
\end{equation}
and the Maxwell equation \eqref{Maxwell-1} reads
\be
-\nabla^2  \left(G_{IJ} \Phi^J + \frac{1}{8} \ell^2 \zeta_I R \right)   + \frac{1}{\ell} C_{IJK} (\Phi^J \Phi^K-\Theta^J_i\Theta^K_i)+\frac{\ell^2}{12}\zeta_I \lambda_1=0  \; ,   \label{eq:maxwell-2}
\ee
where we have used \eqref{eq:sym}, \eqref{GMinus} and the quaternion algebra \eqref{quaternion}\footnote{This in particular implies $J^{(i)}\wedge J^{(j)}=-2\delta^{ij} \star_4 1$ and $I^{(i)}\wedge I^{(j)}=2\delta^{ij} \star_4 1$.}. Multiplying by $\bar{X}^I$ we can solve the  equation \eqref{eq:maxwell-2} for $\lambda_1$ (note $\bar{X}^I \zeta_I= 3\ell^{-1}$) resulting in
\be
\lambda_1= \frac{1}{2}\nabla^2 R  + \frac{4}{\ell^2} C_{IJK} \bar{X}^I  (\Theta^J_i\Theta^K_i- \Phi^J \Phi^K ) \; .   \label{eq:lambda1}
\ee
Equation \eqref{dom} now reads
\be
\td \omega = \frac{\ell}{2} \left( G_{IJ} \Phi^J+ \frac{\ell^2}{8} \zeta_I R \right) \Theta^I- \frac{\ell^3}{48} \lambda_i J^{(i)}  \; .  \label{eq:dom-2}
\ee
The integrability condition for this latter equation, namely that the r.h.s. is a closed 2-form reduces to a constraint as follows.

Using the duality properties of $\Theta^I$ and $J^{(i)}$ we can write the integrability condition $\td^2 \omega=0$  in the equivalent form
\be
\nabla^a \left[ \left( G_{IJ} \Phi^J+ \frac{\ell^2}{8} \zeta_I R \right) \Theta_{ab}^I  J^{b}_{\phantom{\,b}c}  \right]  - \frac{\ell^2}{24} \left[ \nabla_c \lambda_1- (\nabla^a\lambda_3+P^a \lambda_2)J^{(2)}_{ac}  + (\nabla^a\lambda_2-P^a \lambda_3)J^{(3)}_{ac} \right]=0  \; , \label{lambda23eq}
\ee
where we have used (\ref{quaternion}) and (\ref{eq:dJ23}). Equation (\ref{lambda23eq}) should be interpreted as a PDE for $(\lambda_2, \lambda_3)$ with $\lambda_1$ fixed by (\ref{eq:lambda1}). The integrability condition for (\ref{lambda23eq})  is given by its divergence which is,
\be 
\nabla^c \nabla^a \left[ \left( G_{IJ} \Phi^J+ \frac{\ell^2}{8} \zeta_I R \right) \Theta_{ab}^I  J^{(1)b}_{\phantom{(1)b}c}  \right]  - \frac{\ell^2}{24}\nabla^2 \lambda_1 =0  \; ,    \label{eq:susyPDE}
\ee
where we have used \eqref{ricci} together with orthogonality of SD and ASD 2-forms.  Observe that the constraint (\ref{eq:susyPDE}) with $\lambda_1$ given by (\ref{eq:lambda1}) involves only the scalars, the SD two-forms $\Theta^I$ and the K\"ahler base geometry.  For minimal supergravity, it can be checked that \eqref{eq:susyPDE} reduces to the following constraint on the curvature of the K\"ahler base~\cite{Cassani:2015upa},
\be
\nabla^2 \left(\frac{1}{2}\nabla^2 R+ \frac{2}{3}R_{ab}R^{ab}- \frac{1}{3}R^2  \right)+ \nabla^c(R_{ca}\nabla^a R)=0\; ,
\ee
where we have used $\mathcal{R}_{ab}= R_{ac} J^c_{~b}$.  Interestingly, in contrast to the minimal theory, in the  general theory considered here, the constraint \eqref{eq:susyPDE} is not purely in terms of the base geometry, and therefore it is unclear what constraints (if any) there are on the K\"ahler base geometry.

We can now summarise the construction of any timelike supersymmetric solution.  Choose a K\"ahler base $(B, h, J)$ and a set of SD two-forms $\Theta^I$ and scalar fields $\Phi^I$ on $B$ that obey \eqref{ricci}, \eqref{Ricciscalar-1}, \eqref{eq:bianchi}, \eqref{eq:maxwell-2},  \eqref{eq:susyPDE}, where $\lambda_1$ is given by \eqref{eq:lambda1}.  Then we can solve \eqref{lambda23eq} for $\lambda_2, \lambda_3$, since \eqref{eq:susyPDE} is the integrability condition for this equation.  Next we can solve \eqref{eq:dom-2} for $\omega$, since \eqref{lambda23eq} is the integrability condition for this equation.  The function $f$ is simply given by \eqref{eq:fPhi}. The  spacetime metric is then given by \eqref{metricform} and the original scalars by \eqref{XPhi_eq}. Finally, the Maxwell field is given by \eqref{FI} and \eqref{maxwellform}.  

The decomposition of the solution in terms of the supersymmetric data described above is defined up to constant rescalings (since the Killing spinor is). These rescale the time coordinate adapted to $V$ as $t\to Kt$, where $K$ is a nonzero constant, and act on the supersymmetric data as 
\begin{equation}\label{time-resc}
h\to K^{-1}h\,,\qquad \Phi^{I}\to K \Phi^{I}\,, 
\qquad \lambda_{i}\to K^{2}\lambda_{i}\,, \qquad \Theta^{I}\to \Theta^{I}\, ,
\end{equation}
which also implies that $\omega\to K\omega$ and $f\to K^{-1}f$. It is easy to check that the above equations, and in particular the five-dimensional solution $(\mathbf{g}, \mathbf{F}^{I}, X^{I})$,  are invariant under such rescalings.

We close this section by noting a particular consequence of supersymmetry.   This result will be useful in constraining the form of the Maxwell field for toric solutions.
 \begin{lemma}\label{J-inv-Maxwell}
The magnetic part of the Maxwell fields $F^I$  are $J$-invariant, that is, they obey
\be
 F^I_{cd} J^c_{~a} J^d_{~b} = F^I_{ab}   \label{eq_FIJinv} \; .
\ee
\end{lemma}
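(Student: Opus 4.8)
The plan is to read off $J$-invariance directly from the structural decomposition \eqref{FI}, $F^I = \Theta^I - \Phi^I J$, by showing that the two 2-forms $\Theta^I$ and $J$ on the base are separately $J$-invariant; since the $\Phi^I$ are functions, $J$-invariance of $F^I$ then follows by linearity over functions: $F^I_{cd}J^c{}_aJ^d{}_b = \Theta^I_{cd}J^c{}_aJ^d{}_b - \Phi^I J_{cd}J^c{}_aJ^d{}_b$.

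First I would note that a 2-form $\alpha$ on $B$ satisfies $\alpha_{cd}J^c{}_aJ^d{}_b = \alpha_{ab}$ precisely when it is of type $(1,1)$ with respect to the complex structure determined by $J$ and $h$ (equivalently, when it lies in the $+1$-eigenspace of the involution $\alpha\mapsto \alpha(J\cdot,J\cdot)$ on $\Lambda^2 T^*B$). The K\"ahler form is of type $(1,1)$ by construction, so $J$ itself is $J$-invariant; this also follows from the one-line identity $J_{cd}J^c{}_a = h_{ad}$, which is a consequence of antisymmetry of $J_{ab}$ together with $J^a{}_cJ^c{}_b=-\delta^a{}_b$ (from \eqref{quaternion}), whence $J_{cd}J^c{}_aJ^d{}_b = h_{ad}J^d{}_b = J_{ab}$. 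The only substantive input is that each $\Theta^I$, being self-dual on the four-dimensional base, is automatically of type $(1,1)$: this is the standard pointwise fact for K\"ahler surfaces that with the orientation for which the K\"ahler form is anti-self-dual one has $\Lambda^+ = \Lambda^{1,1}_0$ (primitive $(1,1)$-forms), while $\mathbb{R}J\oplus(\Lambda^{2,0}\oplus\Lambda^{0,2})_{\mathbb{R}} = \Lambda^-$; in particular every self-dual 2-form is a primitive $(1,1)$-form, hence $J$-invariant. Alternatively, working in the orthonormal frame used to define the quaternionic bases one checks directly that $I^{(i)}_{cd}J^c{}_aJ^d{}_b = I^{(i)}_{ab}$, and then $\Theta^I = \Theta^I_i I^{(i)}$ gives the claim. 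Combining these, $F^I_{cd}J^c{}_aJ^d{}_b = \Theta^I_{ab} - \Phi^I J_{ab} = F^I_{ab}$.

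The result is essentially immediate once \eqref{FI} (equivalently \eqref{FI_eq1}) is in hand, so I do not expect a serious obstacle; the one point requiring care — and the only place the argument would fail — is the orientation bookkeeping, since self-duality of $\Theta^I$ forces type $(1,1)$ only because $J$ is anti-self-dual in the conventions adopted here (with the opposite, complex, orientation a self-dual 2-form would generically carry a $(2,0)+(0,2)$ component, which is $J$-anti-invariant).
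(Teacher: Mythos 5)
Your proposal is correct and follows essentially the same route as the paper: both read off $J$-invariance from the decomposition \eqref{FI} after observing that the $+1$ eigenspace of the involution $\alpha\mapsto\alpha(J\cdot,J\cdot)$ is spanned by the self-dual forms $I^{(i)}$ together with the (anti-self-dual) K\"ahler form, while $J^{(2)},J^{(3)}$ span the $-1$ eigenspace. Your extra remarks on the orientation convention and the identity $J_{cd}J^c{}_a=h_{ad}$ are accurate elaborations of steps the paper leaves as ``one can check''.
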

\begin{proof}
Consider the map on 2-forms $\mathcal{J}:\Omega^2(B)\to \Omega^2(B)$ defined by $(\mathcal{J}\alpha)_{ab}:= \alpha_{cd} J^c_{~a} J^d_{~b}$.  It is easy to see that $\mathcal{J}^2=\text{id}$ and hence $\Omega^2(B)$ decomposes into two eigenspaces of $\mathcal{J}=1$ eigen-2-forms ($J$-invariant) and $\mathcal{J}=-1$ eigen-2-forms.  One can check that the $\mathcal{J}=1$ eigenspace is spanned by the three SD 2-forms $I^{(i)}$ and the ASD  K\"ahler form $J=J^{(1)}$, whereas the $\mathcal{J}=-1$ eigenspace is spanned by remaning ASD forms $J^{(2)}, J^{(3)}$.   

The lemma now follows from equation (\ref{FI}), which show that the magnetic part of the Maxwell field is in the $\mathcal{J}=1$ eigenspace.  
\end{proof}

\subsection{Toric symmetry}\label{sec:toric}

We  now consider supersymmetric solutions $(M, \mathbf{g}, \mathbf{F}^I, X^I)$ to five-dimensional  gauged supergravity as above, that also possess toric symmetry in the following sense.

\begin{definition} A supersymmetric solution $(M, \mathbf{g}, \mathbf{F}^I, X^I)$ to five-dimensional minimal gauged supergravity coupled to vector multiplets is said to possess toric symmetry if:
\begin{enumerate}
\item There is a torus $T\cong U(1)^2$ isometry generated by spacelike Killing fields $m_i$, $i=1,2$, both normalised to have $2\pi$ periodic orbits. These are defined up to $m_i \to A_{i}^{~j} m_j$ where $A\in GL(2, \mathbb{Z})$;
\item  The supersymmetric Killing $V$ is complete and commutes with the $T$-symmetry, that is $[V, m_i]=0$, so there is a spacetime isometry group  $\mathbb{R}\times U(1)^2$;
\item 
The Maxwell fields and scalar fields are $T$-invariant $\mathcal{L}_{m_i} \mathbf{F}^I=0$, $\mathcal{L}_{m_i} X^I=0$; 
\item The axis defined by  $\{p\in M | \det \mathbf{g}(m_i, m_j)|_{p}=0 \}$
is nonempty.
\end{enumerate}
\end{definition}

We now restrict to timelike supersymmetric solutions, that is, we assume there is an open region $U\subset M$ on which $V$ is strictly timelike. We will assume that $U$ is simply connected and intersects the axis.  Therefore, on $U$, one can write the metric as \eqref{metricform}, where \eqref{eq_Vnorm} holds and the K\"ahler metric on the orthogonal base space $B$ is $h_{\mu\nu}=g_{\mu\nu}+ V_\mu V_{\nu}/f^2$. It follows from the above assumptions that the data $f, h$ on the base are invariant under the toric symmetry.  Thus, it also follows that the scalar fields $\Phi^I$ defined by \eqref{eq_Phidef} are  invariant under the toric symmetry. The 1-form $\omega$ is defined up to gauge transformations $\omega\to \omega+ \td \lambda$ and $t\to t-\lambda$, where $\lambda$ is a function on $B$, so one can choose  a gauge such that $\mathcal{L}_{m_i}\omega=0$ and $\mathcal{L}_{m_i} t=0$.  The form of the Maxwell field (\ref{maxwellform}), (\ref{FI}), (\ref{GPlus}) implies
\begin{equation}
X_I \mathbf{F}^I = X_I \td ( f X^I (\td t +\omega))- \frac{2}{3} G^+-  X_I \Phi^I J  \; ,
\end{equation}
so we deduce that the K\"ahler form is also invariant under the toric symmetry $\mathcal{L}_{m_i} J=0$, i.e.  the toric symmetry is holomorphic. It follows that $\iota_{m_i} J$ is a globally defined closed 1-form so must be exact on $B$. This shows that the toric symmetry is Hamiltonian  and  hence $(B, h, J)$ is a K\"ahler toric manifold. 

It is convenient to introduce symplectic coordinates $(x_i, \phi^i)$ on $B$, $i=1,2$, such that $m_i=\partial_{\phi^i}$~\cite{Abreu, Lucietti:2022fqj},
\bea
&&h = G^{ij}(x) \td x_{i} \td x_{j}+ G_{ij}(x) \td\phi^i \td\phi^j, \label{h_toric}  \\
&&G^{ij} = \partial^{i}\partial^{j} g  \; ,\label{hessiang} \\
&& J= \td x_{i} \wedge \td\phi^i , \label{KahlerForm-gen}
\eea
where $g=g(x)$ is the symplectic potential, $G_{ij}$ is the matrix inverse of the Hessian $G^{ij}$ and we  have introduced the notation $\partial^{i}:=\partial/\partial{x_i}$.    In these coordinates the Ricci form potential is 
\begin{equation}
P=P_{i}\td\phi^{i}\,,\qquad P_{i}=-\frac{1}{2}G_{ij}\partial^{j}\log\det G=-\frac{1}{2}\partial^{j}G_{ij}\,,    \label{eq_ricci_toric}
\end{equation}
where $\det G := \det G_{ij}$.  It is useful to note that symplectic coordinates are related to holomorphic coordinates $z^i:=y^i+i \phi^i$ by the transformation 
\be
y^{i}=\partial^{i}g \; ,
\ee
and the Legendre transform of the symplectic potential,
\begin{equation}\label{Kahler-Legendre}
\ck =x_{i}y^{i}-g\,  ,
\end{equation}
gives the K\"ahler potential $\mathcal{K}$~\cite{Abreu} (see also appendix A of \cite{Lucietti:2022fqj}).

Let us now turn our attention to closed 2-forms on toric K\"ahler manifolds. If such a closed 2-form $F$ is invariant under the torus symmetry, then $\iota_{m_i} F$ is a closed 1-form on $B$ and hence must be exact, so $\iota_{m_i} F= -\td \mu_i$ for some functions $\mu_i$ (moment maps for $F$). Furthermore, we must also have that $\iota_{m_1} \iota_{m_2} F$ is a constant, and since we assume the axis is nonempty this constant  is zero so  $\iota_{m_1} \iota_{m_2} F=0$. Therefore, the functions $\mu_i$ are also invariant under the $T$-symmetry. Observe that in general, $F$ is not uniquely determined by its moment maps. To this end, it is convenient to introduce the following class of 2-forms.

\begin{definition}\label{def-toric-2forms}
A  closed 2-form $F$ on a toric K\"ahler manifold is said to be toric if it is invariant under the toric symmetry and satisfies the orthogonality condition $m_1^\flat\wedge m_2^\flat \wedge F=0$, where $m_i^\flat$ are 1-forms that are metric dual to the Killing fields $m_i$.
\end{definition}

It is straightforward to show that in symplectic coordinates \eqref{h_toric} the condition $m_1^\flat\wedge m_2^\flat \wedge F=0$ is equivalent to $F_{x_i x_j}=0$, and hence a toric closed 2-form $F$ takes the general form
\be  \label{mu-i-introduction}
F= \td (\mu_i \td \phi^i)  \; ,
\ee
so in particular it is  determined by the moment maps $\mu_i$. 

An example of a toric closed 2-form is $\td\omega$ as we now show. First observe that the magnetic fields $F^I$  are $T$-invariant, since the Maxwell fields \eqref{maxwellform} are, and satisfy the Bianchi identity (\ref{eq_FIBianchi}), hence we deduce $\iota_{m_1}\iota_{m_2} F^I=0$. Since we also have $\iota_{m_1}\iota_{m_2} J=0$, (\ref{FI}) implies $\iota_{m_1} \iota_{m_2} \Theta^I=0$ and in turn (\ref{GPlus}) implies $\iota_{m_1}\iota_{m_2} G^+=\iota_{m_1}\iota_{m_2}\ast_{4}\td\omega=0$~\footnote{Recall that $\td\omega$ is closed and $T$-invariant and hence $\iota_{m_1}\iota_{m_2}\td\omega=0$.}. The latter condition is equivalent to $m_1^\flat\wedge m_2^\flat \wedge \td\omega=0$ which shows that $\td\omega$ is toric and therefore can be written as in \eqref{mu-i-introduction}, so
\begin{equation}
\label{om-toric}
\omega= \omega_i(x) \td \phi^i  \; ,
\end{equation}
for functions $\omega_i$  that are invariant under the toric symmetry.

We now establish the following useful result for a class of 2-forms on the K\"ahler base.
\begin{lemma} \label{lem_JinvF_toric}
Let $F \in \Omega^2(B) $ be closed, $J$-invariant, and invariant under the toric symmetry.  Then $F$ is toric in the sense of Definition \ref{def-toric-2forms}, and in symplectic coordinates
\be 
\partial^i (G^{jk}\mu_k)-\partial^j (G^{ik}\mu_k)=0  \; ,
\label{eq_JinvF}
\ee
where $\mu_{i}$ are the $T$-invariant moment-maps of $F$. Thus we may write
\be
\mu_i= G_{ij}\partial^j \Lambda  \; ,  \label{eq_Lambda}
\ee
where $\Lambda$ is a $T$-invariant function.
\end{lemma}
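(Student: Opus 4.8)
\emph{Proof proposal.} The plan is to work entirely in the symplectic coordinates $(x_i,\phi^i)$ of \eqref{h_toric}--\eqref{KahlerForm-gen}, and to extract the three assertions (that $F$ is toric, the curl identity \eqref{eq_JinvF}, and the existence of $\Lambda$) from three ingredients only: the $\mathcal{J}=\pm1$ eigenspace decomposition of $\Omega^2(B)$ established in the proof of Lemma~\ref{J-inv-Maxwell}, the coordinate expressions for $h$ and $J$, and the Hessian structure \eqref{hessiang}, $G^{ij}=\partial^i\partial^j g$.

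First I would pin down the shape of $F$. Since $F$ is closed and $T$-invariant, the argument given just before the statement shows $\iota_{m_1}\iota_{m_2}F$ is constant, and the nonempty axis forces it to vanish; hence $\iota_{m_i}F=-\td\mu_i$ for $T$-invariant functions $\mu_i$. Writing $F=a\,\td x_1\wedge \td x_2+C_{ij}\,\td x_i\wedge\td\phi^j+b\,\td\phi^1\wedge\td\phi^2$ with $a,b,C_{ij}$ functions of $x$ alone (by $T$-invariance), a one-line contraction gives $\iota_{m_1}\iota_{m_2}F=-b$, so $b=0$: the $\td\phi^i\wedge\td\phi^j$-part of $F$ vanishes. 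Next I would bring in $J$-invariance. Computing the endomorphism $J^a{}_b$ from $J=\td x_i\wedge\td\phi^i$ yields $J^{\phi^k}{}_{x_i}=-G^{ki}$ and $J^{x_l}{}_{\phi^j}=G_{lj}$, hence for any 2-form $\alpha$ one gets $(\mathcal{J}\alpha)_{x_ix_j}=G^{ki}G^{lj}\alpha_{\phi^k\phi^l}$. Applied to $F$, which satisfies $\mathcal{J}F=F$, this gives $F_{x_ix_j}=G^{ki}G^{lj}F_{\phi^k\phi^l}=0$. Thus $m_1^\flat\wedge m_2^\flat\wedge F=0$ and $F$ is toric in the sense of Definition~\ref{def-toric-2forms}, so $F=\td(\mu_i\td\phi^i)$ and $C_{ij}=F_{x_i\phi^j}=\partial^i\mu_j$.

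The curl identity then comes from a second use of $\mathcal{J}F=F$, now on the mixed component: the same computation gives $(\mathcal{J}F)_{x_i\phi^j}=G^{ki}G_{lj}F_{x_l\phi^k}$, so $\mathcal{J}F=F$ forces the matrix $C_{ik}G^{kj}$ to be symmetric in $i,j$, i.e.\ $G^{jk}\partial^i\mu_k=G^{ik}\partial^j\mu_k$. Combining this with the fact that $\partial^iG^{jk}=\partial^i\partial^j\partial^k g$ is totally symmetric, whence $\partial^iG^{jk}=\partial^jG^{ik}$, gives $\partial^i(G^{jk}\mu_k)-\partial^j(G^{ik}\mu_k)=0$, which is \eqref{eq_JinvF}. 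Finally, \eqref{eq_JinvF} says the $1$-form $(G^{ik}\mu_k)\,\td x_i$ is closed on the (simply connected) chart of symplectic coordinates, so it equals $\td\Lambda$ for a function $\Lambda$ of $x$ alone, hence $T$-invariant; lowering the index with $G_{ij}$ gives $\mu_i=G_{ij}\partial^j\Lambda$, i.e.\ \eqref{eq_Lambda}.

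The one point requiring care, rather than a genuine obstacle, is that $J$-invariance \emph{alone} does not give $F_{x_ix_j}=0$: the $\mathcal{J}$-invariant $2$-forms at a point form a four-dimensional space and include forms with nonzero $\td x_i\wedge\td x_j$ part. One must first invoke closedness and the nonempty axis to kill the $\td\phi^i\wedge\td\phi^j$-part, after which $J$-invariance propagates the vanishing to the $\td x_i\wedge\td x_j$-part, and then a second application of $J$-invariance to the mixed part, combined with the Hessian structure of $G^{ij}$, produces the curl identity. Everything else is bookkeeping, the main place to be careful being the signs in $J^a{}_b$ and in the induced action $\mathcal{J}$ on $2$-forms.
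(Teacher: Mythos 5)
Your proposal is correct and follows essentially the same route as the paper: first use closedness, $T$-invariance and the nonempty axis to kill $F_{\phi^i\phi^j}$, then read off $F_{x_ix_j}=0$ and the curl identity \eqref{eq_JinvF} from $\mathcal{J}F=F$ in symplectic coordinates (using the Hessian symmetry of $G^{ij}$), and finally integrate the closed 1-form $G^{ij}\mu_j\,\td x_i$ to get $\Lambda$. The only difference is that you spell out the components of $J^a{}_b$ and the induced action on 2-forms explicitly, and your cautionary remark that $J$-invariance alone does not force $F_{x_ix_j}=0$ is a correct and worthwhile clarification of the paper's terser wording.
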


\begin{proof}
 By the comments preceding Definition \ref{def-toric-2forms}, in symplectic coordinates $0=\iota_{m_i}\iota_{m_j}F=F_{\phi^i\phi^j}$.  Next, $J$-invariance means $J_{a}^{~c}J_{b}^{~d}F_{cd}= F_{ab}$, which reduces to $F_{x_i x_j}=0$ and (\ref{eq_JinvF}).   The latter equation just says that the 1-form $G^{ij}\mu_j \td x_i$ is closed and hence must equal $\td \Lambda$ for some function $\Lambda$ invariant under the toric symmetry.
\end{proof}

  It is worth noting that in the holomorphic coordinates $y^i+i \phi^i$  equations \eqref{eq_JinvF} and \eqref{eq_Lambda} are simply $\partial_{[i}\mu_{j]}=0$ and $\mu_i= \partial_i \Lambda$ respectively where $\partial_i:= \partial/\partial y^i$. We also note that the moment maps $\mu_i$ are defined up to gauge transformations,
  \begin{equation}\label{U1-gauge}
\mu_{i}\to\mu_{i}+\alpha_{i}\, ,
\end{equation}
where $\alpha_i$ are constants.
Lemma \ref{lem_JinvF_toric} in particular applies to geometric quantities such as the K\"ahler form $J$ and the Ricci form $\mathcal{R}$, since they are both closed $J$-invariant 2-forms invariant under the toric symmetry.  The associated potentials (\ref{eq_Lambda})  for the K\"ahler form can be read off from $x_i= G_{ij}\partial^j \mathcal{K}$ (this follows from (\ref{Kahler-Legendre})), whereas for the Ricci form it can be read off from (\ref{eq_ricci_toric}), and are
\begin{equation}
\Lambda_{\text{K\"ahler-form}}= \mathcal{K}, \qquad \Lambda_{\text{Ricci-form}}=-\frac{1}{2}\log\det G\, .
\end{equation}
Lemma \ref{J-inv-Maxwell} shows that Lemma  \ref{lem_JinvF_toric} also applies to the magnetic fields, so  they can be written as
 \begin{equation}\label{}
 F^I  = \td (\mu_i^I \td \phi^i) \; ,  \qquad \mu_i^I= G_{ij}\partial^j \Lambda^I \; ,
 \end{equation}
 where the `magnetic' potentials $\Lambda^I$ are invariant under the toric symmetry.
  
 To summarise, we have shown that for a timelike supersymmetric toric solution we can write the spacetime metric and gauge fields in symplectic coordinates as 
\begin{align}\label{toric-generic}
\mathbf{g} & =-f^{2}(\td t+\omega_{i}\td\phi^{i})^{2}+f^{-1}G_{ij}\td\phi^{i}\td\phi^{j}+G^{ij}\td x_{i}\td x_{j}\,,\nonumber \\
\mathbf{A}^{I} & =fX^{I}(\td t+\omega_{i}\td\phi^{i})+\mu_{i}^{I}\td\phi^{i}\,.
\end{align}
We emphasise that any such solution is determined by the following $T$-invariant real functions,  the symplectic potential $g$, the magnetic potentials $\Lambda^I$, and the scalar fields $X^I$ (or $\Phi^I$), subject to the constraints presented in Section \ref{sec:time-class}.
It is useful to record the following spacetime invariants
\begin{align}
\mathbf{g}(V,V) &=-f^{2}\,,\qquad\mathbf{g}(V,m_{i})=-f^{2}\omega_{i}\,,\qquad\mathbf{g}(m_{i},m_{j})=f^{-1}G_{ij}-f^{2}\omega_{i}\omega_{j}\,, \nonumber\\
\iota_V \mathbf{A}^{I} &=fX^{I}\,,\qquad \iota_{m_i}\mathbf{A}^{I}=fX^{I}\omega_{i}+\mu_{i}^{I}\, ,  \label{eq_invariants}
\end{align}
and observe that these are all invariant under the toric symmetry.

The axis of the $T$-symmetry has a simple description in symplectic coordinates~\cite{Lucietti:2022fqj}.  In particular, each component corresponds to a line segment $\ell_v(x):= v^ix_i+c_v = \text{const}$, where its slope $v^i$ corresponds to the vector $v^i\partial_\phi^i$ that vanishes on the particular axis component. The singular behaviour of the symplectic potential near any component of the axis then takes a canonical form $g =  \frac{1}{2} \ell_v(x) \log \ell_v(x) +g_{\text{smooth}}$ where the latter term is smooth at the said axis. This analysis does not depend on the matter content, for more details see~\cite{Lucietti:2022fqj}.

\subsection{Near-horizon geometry}\label{sec:NH}

We are interested in solutions that possess black hole regions.  The event horizon of a black hole must be invariant under any Killing field and hence in particular under the supersymmetric Killing field.  Thus the horizon is a supersymmetric horizon and hence must be a degenerate Killing horizon with respect to $V$~\cite{Kunduri:2013gce}.  We will assume that a connected component of the  horizon has a spacelike cross-section $S$ transversal to $V$. Then the metric near this horizon component can be written in Gaussian null coordinates (GNC)~\cite{Moncrief:1983xua}, $(v, \lambda, y^a)$, (see also~\cite{Kunduri:2007qy}),
\begin{equation}\label{NH-metric-main}
\mathbf{g}=-\lambda^{2}\Delta^{2}\td v^{2}+2\td v\td\lambda+2\lambda h_{a}\td v\td y^{a}+\gamma_{ab}\td y^{a}\td y^{b}\,,
\end{equation}
where $V=\partial_v$,  $\partial_\lambda$ is a transverse null geodesic field synchronised so $\lambda=0$ at the horizon, and $(y^a)$ are coordinates on $S$.  We assume the horizon is smooth which means that the metric data $\Delta, h_a, \gamma_{ab}$  are smooth functions of $(\lambda, y^a)$ at $\lambda=0$. The near-horizon geometry is given by replacing this data in the spacetime metric with their values at $\lambda=0$, denoted by $(\Delta^{(0)}, h^{(0)}_a, \gamma_{ab}^{(0)})$, which correspond to a function, 1-form and Riemannian metric on $S$ respectively. The gauge fields in GNC can be written in the gauge~\cite{Kunduri:2007qy},
\begin{equation}\label{NH-gauge-main}
\mathbf{A}^{I}=\lambda \Delta X^{I}\td v + A^I_\lambda \td \lambda +a_{a}^{I}\td y^{a}\,,
\end{equation}
where $A_\lambda^I$ does not appear in the near-horizon limit, so the near-horizon data of the  gauge fields is given by the  functions $\Delta^{(0)} X^{I (0)}$ and 1-forms $a^{I(0)}$ on $S$ where  $(0)$ denotes again evaluation at $\lambda=0$. The toric Killing fields $m_i$ must be tangent to the horizon and since they have closed orbits we may choose them tangent to $S$ (hence one can adapt coordinates $(y^a)$ on $S$ to the toric Killing fields).

A horizon component corresponds to a single point in symplectic coordinates $(x_1, x_2)$ on the orbit space. The proof of this does not require detailed knowledge of the near-horizon geometry and is identical to that in the minimal theory, see~\cite[Lemma 3]{Lucietti:2022fqj}.  Its proof uses  the fact that the K\"ahler form in GNC is
\begin{equation}
\label{eq_JGNC}
J=\td\lambda\wedge Z+\lambda(h\wedge Z-\Delta\star_{3}Z)\,,
\end{equation}
where $Z=Z_a \td y^a$ is a unit 1-form and $\star_3$ is the Hodge star operator with respect to $\gamma_{ab}$. The symplectic coordinates are determined by $\td x_i=  -\iota_{m_i} J$, so using the above form for $J$ gives
\begin{equation}\label{xiZ0}
x_{i}=\lambda Z_{i}+O(\lambda^{2})\,, 
\end{equation}
where we have assumed that $m_i$ are tangent to $S$, $Z_i :=\iota_{m_i} Z$, and fixed an integration constant. Thus the horizon corresponds to a point in symplectic coordinates on the orbit space, which we have assumed to be at the origin.

We now turn to the explicit form of the near-horizon geometry.  The classification of near-horizon geometry admitting a torus rotational isometry that commutes with $V$, and possessing a smooth compact  cross-section $S$, was derived in~\cite{Kunduri:2007qy}.   We will only consider the case where $S$ is topologically $S^3$ or a quotient, which includes the possibility of lens spaces. We present it here in a coordinate system that also describes the special case with (local) $SU(2)\times U(1)$ symmetry, see Appendix \ref{NH-comparison} for details.  For simplicity, henceforth we will restrict ourselves to the STU supergravity. 

The near-horizon geometry depends on the parameters $0< \ca^{2},\cb^{2}<1$ and $\ck_I$, subject to the constraints
\begin{equation}\label{kappa2}
\kappa^{2}(\ca^{2},\cb^{2},\cc_{1},\cc_{2})>0\,,
\end{equation}
where
\begin{align}\label{kappa-def}
\kappa^{2} (\ca^{2},\cb^{2},\cc_{1},\cc_{2}) & :=-9\mathcal{A}^{4}\mathcal{B}^{4}+6\mathcal{A}^{2}\mathcal{B}^{2}(\mathcal{A}^{2}+\mathcal{B}^{2}+1)^{2}-(\mathcal{A}^{2}+\mathcal{B}^{2}+1)^{3}\Big(\mathcal{A}^{2}+\mathcal{B}^{2}-\frac{1}{3}\Big)\nonumber \\
 & +\frac{4\cc_{2}}{3}-2\cc_{1}\Big(\mathcal{A}^{4}+\mathcal{B}^{4}-\mathcal{A}^{2}\mathcal{B}^{2}+\frac{\cc_{1}}{2}-1\Big)\, ,
\end{align}
 the parameters $\cc_{1},\cc_{2}$ are defined by
\begin{equation}\label{ccal-def}
\cc_{1}=\frac{\ell}{6}C^{IJK}\fil_{I}\ck_{J}\ck_{K}\,,\qquad\cc_{2}=\frac{1}{6}C^{IJK}\ck_{I}\ck_{J}\ck_{K}\, ,
\end{equation}
and
\begin{equation}\label{ck-constraint}
C^{IJK}\fil_{I}\fil_{J}\ck_{K}=0\,.
\end{equation}
We now give the explicit expression for the near-horizon data. 

The metric data are given by
\begin{align}\label{NH-expl1}
\Delta^{(0)} & =\frac{3\kappa}{\ell\hat{H}(\hat{\eta})^{2/3}}\,, \nonumber \\
h^{(0)} & =\frac{3\kappa\Delta_{3}(\hat{\eta})}{4\hat{H}(\hat{\eta})}\hat{\sigma}+\frac{3(\ca^{2}-\cb^{2})}{2\hat{H}(\hat{\eta})}\Big(\big(\Delta_{2}(\hat{\eta})^{2}+\cc_{1}\big)\td\hat{\eta}-\frac{3}{2}\kappa\hat{\tau}\Big)\,, \nonumber \\
\gamma^{(0)} & =\frac{\ell^{2}}{12(1-\hat{\eta}^{2})\Delta_{1}(\hat{\eta})}\Big(\hat{H}(\hat{\eta})^{1/3}\td\hat{\eta}^{2}+\frac{3}{4}\frac{\Delta_{3}(\hat{\eta})^{2}+\kappa^{2}}{\hat{H}(\hat{\eta})^{2/3}}\hat{\tau}^{2}\Big)\nonumber \\
 & +\frac{\ell^{2}\big(4\hat{H}(\hat{\eta})-3\Delta_{3}(\hat{\eta})^{2}\big)}{48\hat{H}(\hat{\eta})^{2/3}}\hat{\sigma}^{2}+\frac{3\ell^{2}\Delta_{3}(\hat{\eta})(\ca^{2}-\cb^{2})}{8\hat{H}(\hat{\eta})^{2/3}}\hat{\sigma}\hat{\tau}\,,
\end{align}
the gauge field data by
\begin{align}\label{NH-expl2}
a_{(0)}^{I} & =-\frac{\ell\Delta_{3}(\hat{\eta})}{4(\ell\Delta_{2}(\hat{\eta})\fil_{I}+\ck_{I})}\hat{\sigma}+\ell\hat{\eta}\frac{C^{IJK}\ell\fil_{J}\big(\ell(2-\ca^{2}-\cb^{2})\fil_{K}+2\ck_{K}\big)}{12(1-\hat{\eta}^{2})\Delta_{1}(\hat{\eta})}\hat{\tau}\nonumber \\
 & +\ell\frac{(\ca^{2}-\cb^{2})}{4}\Big(\frac{3}{\ell\Delta_{2}(\hat{\eta})\fil_{I}+\ck_{I}}+\ell^{2}\frac{C^{IJK}\fil_{J}\fil_{K}}{2\Delta_{1}(\hat{\eta})}\Big)\hat{\tau}\,,
\end{align}
and the scalars by
\begin{equation}\label{NH-expl3}
X_{I}^{(0)}=\frac{\ell\Delta_{2}(\hat{\eta})\zeta_I +\ck_{I}}{3\hat{H}(\hat{\eta})^{1/3}}\,.
\end{equation}
In the above expressions we have defined the 1-forms
\begin{equation}
\hat{\sigma}=\frac{1-\hat{\eta}}{\ca^{2}}\td\hat{\phi}^{1}+\frac{1+\hat{\eta}}{\cb^{2}}\td\hat{\phi}^{2}\,,\qquad\hat{\tau}=(1-\hat{\eta}^{2})\Delta_{1}(\hat{\eta})\Big(\frac{\td\hat{\phi}^{1}}{\ca^{2}}-\frac{\td\hat{\phi}^{2}}{\cb^{2}}\Big)\,,
\end{equation}
three linear functions of $\hat{\eta}$,
\begin{align}\label{Delta-functions}
\Delta_{1}(\hat{\eta}) & =\frac{1+\hat{\eta}}{2}\ca^{2}+\frac{1-\hat{\eta}}{2}\cb^{2}\,,\nonumber \\
\Delta_{2}(\hat{\eta}) & =1-\frac{1+3\hat{\eta}}{2}\ca^{2}-\frac{1-3\hat{\eta}}{2}\cb^{2}\,,\nonumber \\
\Delta_{3}(\hat{\eta}) & =1-2\Delta_{2}(\hat{\eta})+\ca^{2}\cb^{2}-\ca^{4}-\cb^{4}-\cc_{1}\,,
\end{align}
and the cubic polynomial of $\hat{\eta}$,
\begin{equation}
\label{eq:H-function}
\hat{H}(\hat{\eta})=\prod_{I=1}^{3}\big(\ell \Delta_{2}(\hat{\eta})\zeta_I+\ck_{I}\big)=\Delta_{2}(\hat{\eta})^{3}+3\cc_{1}\Delta_{2}(\hat{\eta})+\cc_{2}\,.
\end{equation}
Here  $(\hat{\eta}, \hat{\phi}^{i})$ are coordinates on $S$ with $-1\leq\hat{\eta}\leq1$ and ${ \hat{\phi}^{i}}\sim{ \hat{\phi}^{i}}+2\pi$ are adapted to the Killing fields $m_{i}=\partial_{ \hat{\phi}^{i}}$ and  $\Delta_1$ and $\Delta_2$ are strictly positive functions. The 1-form that determines the K\"ahler form \eqref{eq_JGNC} is given by
\begin{equation}\label{Z0}
Z^{(0)}=\frac{\ell}{4\hat{H}(\hat{\eta})^{1/3}}\big(\kappa\hat{\sigma}-3(\ca^{2}-\cb^{2})\td\hat{\eta}\big)\,.
\end{equation}
Note that solutions with $\ca^{2}\neq\cb^{2}$ are doubly counted with the two copies related by \eqref{exchange}.  The solutions with $\ca^2=\cb^2$ have enhanced (local) $SU(2)\times U(1)$ symmetry.  

 It is important to emphasise that  positivity of the scalars $X^I$ places further constraints on the parameters as follows.  Without loss of generality we may  assume $\mathcal{A}^2 \geq \mathcal{B}^2$ (which removes the double counting mentioned above), in which case positivity of the scalars is equivalent to
\begin{equation}\label{c1c2-bounds}
\mathcal{K}_{I}>2\ca^{2}-\cb^{2}-1\,,
\end{equation}
where note that $2\ca^{2}-\cb^{2}-1<0$. It is useful to note that (\ref{c1c2-bounds}) implies that 
\begin{equation}
-(1+\cb^{2}-2\ca^{2})^{2}<\mathcal{C}_{1}\leq0\,,\qquad-\frac{1}{4}(1+\cb^{2}-2\ca^{2})^{3}<\mathcal{C}_{2}<2(1+\cb^{2}-2\ca^{2})^{3}\,, 
\end{equation}
which again holds assuming $\mathcal{A}^2 \geq \mathcal{B}^2$.

We are now ready to extract the toric data for the near-horizon geometry.  By computing the spacetime invariants (\ref{eq_invariants}) in GNC we deduce 
\begin{align}
f &=\lambda\Delta\,,\qquad\omega_{i}=-\frac{h_{i}}{\lambda\Delta^{2}}\,,\qquad G_{ij}=\lambda\Delta\Big(\gamma_{ij}+\frac{h_{i}h_{j}}{\Delta^{2}}\Big)\,, \\ 
\mu_{i}^{I} &=a_{i}^{I}+\frac{f X^{I}h_{i}}{\Delta^{2}}\, .   \label{mu-NH}
\end{align}
We may now compute the near-horizon behaviour of $\omega_i, G_{ij}, X^I$ from the explicit near-horizon data and we find
\begin{align}
\omega_{i} &=-\frac{\ell^{2}\hat{H}(\hat{\eta})^{1/3}}{12\kappa}\Big(\Delta_{3}(\hat{\eta})\hat{\sigma}-3(\ca^{2}-\cb^{2})\hat{\tau}\Big)_{i}\,\frac{1}{\lambda}+O(1)\,,  \label{omNH} \\
\label{Gdd-NH}
G_{ij} &=\frac{\ell\kappa}{4\hat{H}(\hat{\eta})^{1/3}}\Big(\hat{\sigma}^{2}+\frac{\hat{\tau}^{2}}{(1-\hat{\eta}^{2})\Delta_{1}(\hat{\eta})}\Big)_{ij}\,\lambda+O(\lambda^{2})\, , 
\\
X^{I} &=\frac{\hat{H}(\hat{\eta})^{1/3}}{\ell\Delta_{2}(\hat{\eta})\zeta_I+\ck_{I}}+O(\lambda)\,.
\end{align}
The near-horizon behaviour of the scalars $\Phi^I$ defined by \eqref{eq_Phidef} is easily deduced to be
\begin{equation}\label{Phi-NH}
\Phi^{I}=\frac{\ell \hat H(\hat\eta)^{1/3}}{3\kappa} C^{IJK} \zeta_J (\ell\Delta_2(\hat\eta)\zeta_K +\mathcal{K}_K) \lambda^{-1}+O(1)\, .
\end{equation}
Observe that although $\Phi^I$ are singular on the horizon the scalars $X^I$ defining the theory are smooth.

Inserting \eqref{Z0} into \eqref{xiZ0} we find the leading order coordinate change is
\begin{equation}\label{xi-NH-expl}
x_{1}=\frac{\ell\kappa}{4\hat{H}(\hat{\eta})^{1/3}}\frac{1-\hat{\eta}}{\ca^{2}}\lambda+O(\lambda^{2})\,,\qquad x_{2}=\frac{\ell\kappa}{4\hat{H}(\hat{\eta})^{1/3}}\frac{1+\hat{\eta}}{\cb^{2}}\lambda+O(\lambda^{2})\,.
\end{equation}
One can now compute $G_{ij}$,  $G^{ij}$ and hence $g$ as functions of the symplectic coordinates near the horizon.  We find that the symplectic potential takes precisely the same form as in the minimal theory, namely~\cite{Lucietti:2022fqj}, 
\be
g= \frac{1}{2} x_1\log x_1+ \frac{1}{2} x_2\log x_2-\frac{1}{2} (x_1+x_2) \log (x_1+x_2)+ \frac{1}{2} (\mathcal{A}^2x_1+\mathcal{B}^2 x_2) \log(\mathcal{A}^2x_1+\mathcal{B}^2 x_2)  + \tilde{g}
\ee
where $\tilde{g}$ is smooth at the origin (horizon). 

We have now determined the behaviour of the symplectic potential near any component of a horizon. Combining this with the near axis behaviour discussed earlier, we can write down the singular part of the symplectic potential for any black hole solution in this class. The result is the same as in the minimal theory~\cite[Theorem 2]{Lucietti:2022fqj}.

\section{Separability on toric K\"ahler manifolds}\label{sec:separable}

In this section we will introduce a special class of toric K\"ahler manifolds and associated 2-forms both of which we dub \emph{separable}, since they are determined by single-variable functions in a preferred orthogonal coordinate system. We will show that separable K\"ahler metrics naturally unify the known concepts of product-toric, Calabi-toric and ortho-toric K\"ahler metrics, which are intimately related to the existence of  a Hamilton 2-form~\cite{Apostolov2001TheGO, Legendre} (we explore this connection in Section \ref{sec:ham-2-forms}). This section is written to be self-contained as it may be of interest more widely.

\subsection{Separable toric K\"ahler metrics}

Consider a toric K\"ahler manifold $B$ with K\"ahler metric $h$ and K\"ahler form $J$.  In symplectic coordinates $(x_i, \phi^i)$ this takes the form (\ref{h_toric}), \eqref{hessiang}, \eqref{KahlerForm-gen}, where the toric Killing fields are $m_i=\partial_{\phi^i}$ which we assume to have  $2\pi$-periodic orbits, and the associated moment maps are $x_i$.  In order to introduce the concept of separability it turns out to be more convenient to use an orthogonal coordinate system, as follows. 

Let $\xi, \eta$  be nonconstant functions that are invariant under the toric symmetry and orthogonal in the sense,
\begin{equation}\label{eq:orthogonality-gen}
\td\xi\cdot\td\eta=0\,,
\end{equation}
where $\cdot$ denotes the inner product defined by $h$.
It follows that we can use $(\xi, \eta, \phi^i)$ as local coordinates on $B$, so the  Jacobian of the transformation $(\xi, \eta)\mapsto (x_1(\xi, \eta), x_2(\xi, \eta))$, 
 \begin{equation}\label{warp}
\wrpp  :=\langle\partial_{\xi}x,\partial_{\eta}x\rangle\neq 0 \,,
\end{equation}
where we use the notation $\langle \alpha,\beta\rangle=\epsilon^{ij}\alpha_{i} \beta_{j}$ and the alternating symbol is such that  $\epsilon^{12}=1$.  In the $(\xi, \eta)$ coordinates \eqref{eq:orthogonality-gen} is equivalent to $h_{\xi \eta}=0$, that is, it defines an orthogonal coordinate system.  
It is useful to denote the other metric components by
\begin{equation}\label{eq:calFG-def}
\mathcal{F}(\xi,\eta):=h_{\xi\xi}^{-1}\,,\qquad\mathcal{\cg}(\xi,\eta):=h_{\eta\eta}^{-1}\,.
\end{equation}
 Changing coordinates $(\xi,\eta)\mapsto (x_1, x_2)$, the $x_i x_j$ components of the K\"ahler metric  \eqref{h_toric} give an expression for $G^{ij}$ and its inverse is
\begin{equation}\label{Gdd-general}
G_{ij}=\cf\partial_{\xi}x_{i}\partial_{\xi}x_{j}+\cg\partial_{\eta}x_{i}\partial_{\eta}x_{j} \; .
\end{equation}
Using this, it follows that the K\"ahler metric and K\"ahler form in such an orthogonal coordinate system take the simple form
\begin{align}\label{Kahler-orthogonal}
h &=\cf^{-1}\td\xi^{2}+\cg^{-1}\td\eta^{2}+\cf\sigma_{\xi}^{2}+\cg\sigma_{\eta}^{2}\,, \\
J &=   \td \xi \wedge \sigma_\xi+ \td\eta\wedge \sigma_\eta  \; , \label{Kahlerform_orthogonal}
\end{align}
where  we have defined the 1-forms 
\begin{equation}\label{sigma-1forms}
\sigma_{\xi} :=\partial_{\xi}x_{i}\td\phi^{i} 
\,,\qquad\sigma_{\eta} :=\partial_{\eta}x_{i}\td\phi^{i} \; .  
\end{equation}
Using the natural orthonormal frame for the metric, we can easily write down a basis of SD 2-forms, 
\begin{align}\label{SDtoric-2-forms}
I^{(1)}& =\td\xi\wedge\sigma_{\xi}-\td\eta\wedge\sigma_{\eta}\,,\nonumber \\
I^{(2)} &=(\cf\cg)^{1/2}\sigma_{\xi}\wedge\sigma_{\eta}+(\cf\cg)^{-1/2}\td\xi\wedge\td\eta\,, \nonumber \\
I^{(3)} &=(\cf/\cg)^{-1/2}\td\xi\wedge\sigma_{\eta}+(\cf/\cg)^{1/2}\td\eta\wedge\sigma_{\xi} \; ,
\end{align}
and ASD 2-forms (recall $J=J^{(1)}$), 
\begin{align}\label{ASDtoric-2-forms}
 J^{(1)} &=\td\xi\wedge\sigma_{\xi}+\td\eta\wedge\sigma_{\eta}\,,\nonumber \\
J^{(2)} &=(\cf\cg)^{1/2}\sigma_{\xi}\wedge\sigma_{\eta}-(\cf\cg)^{-1/2}\td\xi\wedge\td\eta\,,\nonumber \\
J^{(3)} &=(\cf/\cg)^{-1/2}\td\xi\wedge\sigma_{\eta}-(\cf/\cg)^{1/2}\td\eta\wedge\sigma_{\xi},
\end{align}
where the orientation is $\td \xi \wedge \td \eta\wedge \sigma_\xi\wedge \sigma_\eta$, both of which satisfy the quaternion algebra \eqref{quaternion}.

Now recall that $G^{ij}$ is the Hessian of the symplectic potential \eqref{hessiang} and writing this in orthogonal coordinates we find that \eqref{eq:orthogonality-gen} is equivalent  to a PDE for the symplectic potential, namely, 
\begin{equation}\label{symplectic-PDE1}
\partial_{\xi}\partial_{\eta}g=  (\partial_{\xi}\partial_{\eta}x_{i})  \partial^{i}g\, , 
\end{equation}
and the other components give
\begin{equation}\label{symplectic-calFG1}
\cf^{-1}=\partial_{\xi}^{2}g-(\partial_{\xi}^{2}x_{i})\partial^{i}g 
\,,\qquad\cg^{-1}=\partial_{\eta}^{2}g-(\partial_{\eta}^{2}x_{i} )\partial^{i}g  \; .
\end{equation}
All we have done so far is rewritten a general toric K\"ahler metric in orthogonal coordinates on the 2d orbit space, which is always possible. We are now ready to introduce the concept of separability.

\begin{definition}\label{separable-def}
A toric K\"ahler manifold $(B, h, J)$ is separable if there exists an orthogonal coordinate system $(\xi, \eta)$, as in \eqref{eq:orthogonality-gen}, such that the moment maps $x_i$ of the toric Killing fields satisfy,
\begin{equation}\label{eq:separable-def}
\partial_{\xi}^{2}x_{i}=0\,,\qquad\partial_{\eta}^{2}x_{i}=0\,, 
\end{equation}
that is,  the moment maps are linear in each of $\xi, \eta$.
\end{definition}

Integrating the above we can write 
\begin{equation}\label{moments-separable}
x_{i}=c_{i}+a_{i}\xi+\tilde{a}_{i}(\xi+\eta)+b_{i}\xi\eta\,,
\end{equation}
for some constants $c_{i}, a_{i}, \tilde{a}_{i}, b_{i}$. This definition reduces the freedom in the choice of orthogonal coordinates $\xi$ and $\eta$ to just affine transformations
\begin{equation}\label{separable-transformations}
\xi\to K_{\xi}\xi+C_{\xi}\,,\qquad\eta\to K_{\eta}\eta+C_{\eta}\,,
\end{equation}
where $K_{\xi}\neq 0, K_{\eta}\neq 0$ and $C_\xi, C_\eta$ are constants, 
as well as exchanging their roles
\begin{equation}\label{eq:exchange-xieta}
\xi\to \eta\,,\qquad\eta\to \xi\,.
\end{equation}
The constants in \eqref{moments-separable} also transform under \eqref {separable-transformations}  as,
\begin{align}\label{constants-transformation}
b_{i} & \to K_{\xi}^{-1}K_{\eta}^{-1}b_{i}\,,\nonumber \\
\tilde{a}_{i} & \to K_{\eta}^{-1}(\tilde{a}_{i}-K_{\xi}^{-1}C_{\xi}b_{i})\,,\nonumber \\
a_{i} & \to K_{\xi}^{-1}a_{i}+(K_{\xi}^{-1}-K_{\eta}^{-1})\tilde{a}_{i}+K_{\xi}^{-1}K_{\eta}^{-1}(C_{\xi}-C_{\eta})b_{i}\,,\nonumber \\
c_{i} & \to c_{i}-K_{\xi}^{-1}C_{\xi}a_{i}-(K_{\xi}^{-1}C_{\xi}+K_{\eta}^{-1}C_{\eta})\tilde{a}_{i}+K_{\xi}^{-1}K_{\eta}^{-1}C_{\xi}C_{\eta}b_{i}\, ,
\end{align}
while under \eqref{eq:exchange-xieta} as,
\begin{equation}\label{exchange-constants}
b_{i}\to b_{i}\,,\qquad\tilde{a}_{i}\to\tilde{a}_{i}+a_{i}\,,\qquad a_{i}\to-a_{i}\,,\qquad c_{i}\to c_{i}\,.
\end{equation}
These transformations will allow us classify separable metrics into three types.

To this end, consider the  Jacobian \eqref{warp} for a separable K\"ahler metric. The moment maps $x_i$ are given by \eqref{moments-separable} so we find
\be
\Omega =  \langle a, \tilde a \rangle+\langle a, b \rangle \xi+ \langle \tilde{a}, b \rangle (\xi-\eta)  \;,    \label{Om_separable}
\ee
in particular, note that $\partial_\xi \Omega$ and $\partial_\eta\Omega$ are both constant.  There are three qualitatively different cases to consider depending on whether both, one or none of the constants  $\partial_\xi \Omega$ and $\partial_\eta\Omega$ vanish. For each case, there is a canonical choice such that exactly one of the vectors  $ a_{i}, \tilde{a}_{i}, b_{i}$ vanishes identically.  First, if both constants vanish $\partial_{\xi}\wrpp=\partial_{\eta}\wrpp=0$ then $\langle a, b \rangle=\langle \tilde a, b \rangle=0$ and $\langle a, \tilde a \rangle \neq 0$, which implies $b_{i}=0$ (since $b_i$ can't be parallel to both $a_i$ and $\tilde a_i$). Secondly, if one of the constants vanish, without loss of generality we may assume $\partial_\eta\Omega=0$  due to \eqref{eq:exchange-xieta},  so $\partial_{\eta}\wrpp=\langle \tilde{a}, b \rangle = 0$ and $\partial_\xi \Omega=\langle a, b \rangle \neq 0$, which means that  $b \neq 0$ and $\tilde{a}$ is parallel to $b$, so from the second line of \eqref{constants-transformation} we can always fix $\tilde{a}_{i}=0$. Thirdly, if both $\partial_{\xi} \wrpp = \langle a, b \rangle +\langle \tilde a, b \rangle \neq 0$ and $\partial_{\eta}\wrpp =- \langle \tilde a, b \rangle \neq 0$, then $\tilde a, b$ are linearly independent, and hence from the third line of \eqref{constants-transformation} we can always fix  $a_{i}=0$. 

These three cases are summarised in Table \ref{table:sep-def}, where the Jacobian  is written as
\begin{equation}\label{eq:wrpp}
\wrpp=\nrm\sgm\,,
\end{equation}
for  a  nonzero constant $\nrm$ and function $\sgm(\xi, \eta)$ which are given for each case in Table \ref{table:sep-def}.  We now show that these cases correspond to \emph{product-toric} (PT), \emph{Calabi-toric} (CT) and \emph{ortho-toric}  (OT), respectively, therefore justifying the names in Table \ref{table:sep-def}~\cite{Legendre}. In particular, we show that in orthogonal coordinates $\xi, \eta$, each case is completely characterised in terms of two functions of a single variable $F(\xi), G(\eta)$ (this is a generalisation of the Calabi-toric case in Proposition 1 in \cite{Lucietti:2022fqj}).  

\begin{table}[h!]
\centering
\begin{tabular}{|c|c|c|c|c|}
\hline 
Class & Definition & \multicolumn{3}{c|}{Canonical choice}\tabularnewline
\hline 
\hline 
Product-toric (PT) & $\partial_{\xi}\wrpp = \partial_{\eta}\wrpp =0$ & $b_{i}=0$ & $N=\langle a,\tilde{a}\rangle \neq 0$ & $\Sigma= 1$\tabularnewline
\hline 
Calabi-toric (CT) & $\partial_{\xi}\wrpp \neq 0$, $\partial_{\eta}\wrpp=0 $  & $\tilde{a}_{i}=0$ & $N=\langle a,b\rangle \neq 0$ & $\Sigma= \xi$\tabularnewline
\hline 
Ortho-toric (OT) & $\partial_{\xi}\wrpp \neq 0$, $\partial_{\eta}\wrpp\neq 0$  & $a_{i}=0$ & $N=\langle\tilde{a},b\rangle \neq 0$ & $\Sigma=\xi-\eta$ \tabularnewline
\hline 
\end{tabular}
\caption{The three classes of separable toric K\"ahler structures.}
\label{table:sep-def}
\end{table}

\begin{prop}
\label{prop_separable}
Any separable toric K\"ahler metric can be written in the form
\begin{equation}\label{eq:metric-separable}
h=\sgm\Big(\frac{\td\xi^{2}}{F(\xi)}+\frac{\td\eta^{2}}{G(\eta)}\Big)+\frac{1}{\sgm}\Big(F(\xi)\sigma_{\xi}^{2}+G(\eta)\sigma_{\eta}^{2}\Big)\,,
\end{equation}
where
\begin{align}
  \sigma_\xi= \begin{cases}  \td \psi \\ \td \psi+ \eta \td \varphi \\  \td \psi+ \eta \td \varphi \end{cases} , \qquad   \sigma_\eta= \begin{cases}  \td \varphi  & \qquad \text{PT}\\  \xi  \td \varphi &  \qquad \text{CT} \\  \td \psi+ \xi \td \varphi & \qquad \text{OT}  \end{cases},  \label{eq:sigmaSigma_defs}
\end{align} 
 the K\"ahler form is
\begin{align}
J= \begin{cases} \td\big(\xi\td\psi+\eta\td\varphi\big) & \qquad PT  \\ \td\big(\xi\td\psi+\xi\eta\td\varphi\big) &  \qquad CT \\ \td\big((\xi+\eta)\td\psi+\xi\eta\td\varphi \big)&  \qquad OT\end{cases} ,
\end{align}
with the cases PT, CT, OT and the corresponding function $\Sigma$ defined in Table  \ref{table:sep-def}, and $\partial_\psi, \partial_\phi$ are a basis for the toric Killing fields.\footnote{Note that the Killing fields $\partial_\psi$ and $\partial_\varphi$ do not necessarily have closed orbits.}
\end{prop}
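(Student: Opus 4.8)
# Proof Proposal for Proposition \ref{prop_separable}

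The plan is to treat each of the three cases (PT, CT, OT) by starting from the canonical normalization of the constants $c_i, a_i, \tilde a_i, b_i$ given in Table \ref{table:sep-def}, then use the residual coordinate freedom \eqref{separable-transformations}–\eqref{eq:exchange-xieta} together with a suitable $GL(2,\mathbb{Z})$-type relabelling of the angular coordinates $\phi^i$ to bring the linear-in-$(\xi,\eta)$ moment maps \eqref{moments-separable} to a standard form, read off $\sigma_\xi, \sigma_\eta$ from their defining formulas \eqref{sigma-1forms}, and finally identify $\mathcal{F}, \mathcal{G}$ with single-variable functions using the orthogonality equation.

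\textbf{Step 1: Normalize the moment maps in each case.} In the PT case we have $b_i = 0$, so $x_i = c_i + (a_i + \tilde a_i)\xi + \tilde a_i \eta$; since $\langle a + \tilde a, \tilde a\rangle = \langle a,\tilde a\rangle = N \neq 0$ the vectors $a_i + \tilde a_i$ and $\tilde a_i$ form a basis, so after a constant shift of $\xi,\eta$ (absorbing $c_i$) and rescaling (using \eqref{separable-transformations} to fix the normalization) plus a linear change of the angular variables $\phi^i \mapsto (\psi,\varphi)$ adapted to this basis, we get $x_1 = \xi$, $x_2 = \eta$, whence $\sigma_\xi = \td\psi$, $\sigma_\eta = \td\varphi$. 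In the CT case $\tilde a_i = 0$ and $\langle a,b\rangle = N \neq 0$, so $x_i = c_i + a_i\xi + b_i\xi\eta$; here $a_i, b_i$ are a basis, and after shifting away $c_i$ and relabelling angles we obtain $x_1 = \xi$, $x_2 = \xi\eta$, giving $\sigma_\xi = \partial_\xi x_i\,\td\phi^i = \td\psi + \eta\,\td\varphi$ and $\sigma_\eta = \partial_\eta x_i\,\td\phi^i = \xi\,\td\varphi$. In the OT case $a_i = 0$ and $\langle\tilde a, b\rangle = N \neq 0$, so $x_i = c_i + \tilde a_i(\xi+\eta) + b_i\xi\eta$; with $\tilde a_i, b_i$ a basis we arrive at $x_1 = \xi+\eta$, $x_2 = \xi\eta$, giving $\sigma_\xi = \td\psi + \eta\,\td\varphi$ and $\sigma_\eta = \td\psi + \xi\,\td\varphi$. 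In all three cases the K\"ahler form then follows immediately from \eqref{KahlerForm-gen}, $J = \td x_i \wedge \td\phi^i = \td(x_i\,\td\phi^i)$, reproducing the stated expressions.

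\textbf{Step 2: Reduce $\mathcal{F}, \mathcal{G}$ to single-variable functions.} With the canonical moment maps in hand, the Jacobian \eqref{warp} becomes $\Omega = N\Sigma$ with $\Sigma$ as in Table \ref{table:sep-def} (this is already established in the text, equations \eqref{Om_separable}–\eqref{eq:wrpp}). It remains to show that $\mathcal{F} = \Sigma^{-1} F(\xi)$ and $\mathcal{G} = \Sigma^{-1}G(\eta)$ for some functions $F$ of $\xi$ alone and $G$ of $\eta$ alone. The cleanest route is to impose the closedness of the Ricci form, or more directly to use the fact that for a K\"ahler metric the symplectic potential $g(x)$ exists with $G^{ij} = \partial^i\partial^j g$; rewriting this in orthogonal coordinates via \eqref{symplectic-calFG1} and \eqref{symplectic-PDE1} and substituting the linear moment maps, the mixed equation \eqref{symplectic-PDE1} becomes $\partial_\xi\partial_\eta g = 0$ (PT), or a first-order equation whose general solution forces the stated product structure. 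Concretely, one computes $G_{ij}$ from \eqref{Gdd-general} in the canonical coordinates, inverts to get $G^{ij}$, and checks that the integrability condition $\partial^{[i}G^{j]k} = \cdots$ (equivalently the statement that $G^{ij}$ is a Hessian) is equivalent to $\partial_\eta(\Sigma\mathcal{F}) = 0$ and $\partial_\xi(\Sigma\mathcal{G}) = 0$. Renaming $F := \Sigma\mathcal{F}$, $G := \Sigma\mathcal{G}$ and plugging into \eqref{Kahler-orthogonal} yields \eqref{eq:metric-separable}.

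\textbf{Main obstacle.} The routine part is Step 1; the substantive point is Step 2, namely showing the Hessian (Monge–Amp\`ere–type) integrability condition on $G^{ij}$ forces $\Sigma\mathcal{F}$ to be independent of $\eta$ and $\Sigma\mathcal{G}$ independent of $\xi$ — a priori $\mathcal{F}, \mathcal{G}$ are functions of both variables. I expect this to come down to a short computation: expressing $g$ in terms of $F, G$ and verifying that \eqref{symplectic-PDE1} and the compatibility of \eqref{symplectic-calFG1} are automatically satisfied precisely when $F = F(\xi)$, $G = G(\eta)$. One must also be slightly careful that the angular coordinate change used to diagonalize the moment maps is globally admissible (an element of $GL(2,\mathbb{Z})$ up to the rescalings absorbed into $F, G$), and that the footnote's caveat — that $\partial_\psi, \partial_\varphi$ need not have closed orbits — is exactly what licenses the non-integer rescalings in Step 1; this should be remarked on but requires no further argument.
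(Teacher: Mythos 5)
Your proposal is correct and follows essentially the same route as the paper: normalise the moment maps to the canonical forms of Table \ref{table:sep-def}, pass to the angles $\psi,\varphi$ by the linear (generally non-$GL(2,\mathbb{Z})$) redefinition \eqref{angletransf}, and then use the orthogonality/symplectic-potential structure to force $\Sigma\mathcal{F}$ and $\Sigma\mathcal{G}$ to depend on $\xi$ and $\eta$ alone. The only difference is in your Step 2: the paper explicitly solves the PDE \eqref{symplectic-PDE2} for the symplectic potential, obtaining \eqref{eq:sympl-sep}, and then reads off $\mathcal{F}=\Sigma^{-1}F(\xi)$, $\mathcal{G}=\Sigma^{-1}G(\eta)$ from \eqref{symplectic-calFG1}, whereas you impose the Hessian integrability of $G^{ij}$ directly --- these are equivalent, and your variant happens to sidestep the small trick (differentiating \eqref{symplectic-PDE2} in both variables) that the paper invokes in the orthotoric case.
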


\begin{proof}By definition, for a separable toric K\"ahler structure the moment maps are given by \eqref{moments-separable}.  
The orthogonality condition \eqref{symplectic-PDE1} now yields a PDE for the symplectic potential which for the canonical cases listed in Table   \ref{table:sep-def} takes the form
\begin{equation}\label{symplectic-PDE2}
\sgm\partial_{\xi}\partial_{\eta}g-(\partial_{\eta}\sgm)\partial_{\xi}g-(\partial_{\xi}\sgm)\partial_{\eta}g=0\,,
\end{equation}
where $\Sigma$ is defined as in Table \ref{table:sep-def} and we have used that $b_{i}\partial^{i}=\sgm^{-1}(\partial_{\eta}\sgm\partial_{\xi}+\partial_{\xi}\sgm\partial_{\eta})$. The general solution to \eqref{symplectic-PDE2} in each case can be written as
\begin{equation}\label{eq:sympl-sep}
g=\sgm^{3}\Big[\partial_{\xi}\Big(\frac{\aff(\xi)}{\sgm^{2}}\Big)+\partial_{\eta}\Big(\frac{\bff(\eta)}{\sgm^{2}}\Big)\Big]\,,
\end{equation}
where $\aff(\xi)$ and $\bff(\eta)$ are arbitrary functions.\footnote{To prove this for the OT case it helps to first differentiate \eqref{symplectic-PDE2} with respect to both $\xi$ and $\eta$.} We can evaluate the functions $\cf$ and $\cg$ appearing in  \eqref{Kahler-orthogonal} using \eqref{symplectic-calFG1} to obtain
\begin{equation}\label{cfcg-FG}
\cf=\sgm^{-1}F(\xi)\,,\qquad\cg=\sgm^{-1}G(\eta)\,,
\end{equation}
where we have defined
\begin{equation}
F(\xi):=\frac{1}{\aff'''(\xi)}\,,\qquad G(\eta):=\frac{1}{\bff'''(\eta)}\,.
\end{equation}
Inserting \eqref{cfcg-FG} into  \eqref{Kahler-orthogonal} we obtain \eqref{eq:metric-separable} as required.  Finally, defining 
\begin{align}
\label{angletransf}
\psi= \begin{cases}   (a_{i}+\tilde{a}_{i})\phi^{i} \\  a_{i}\phi^{i} \\ \tilde{a_{i}}\phi^{i} \end{cases}, \qquad \varphi= \begin{cases} \tilde{a}_{i}\phi^{i} \\ b_{i}\phi^{i} \\ b_{i}\phi^{i} \end{cases}, \qquad \begin{array}{c} PT \\ CT \\ OT \end{array}  \; , 
\end{align}
we deduce the claimed form  for the 1-forms  \eqref{sigma-1forms} and the K\"ahler form \eqref{Kahlerform_orthogonal}.
\end{proof}

It is useful to note that for separable metrics the Gram matrix of Killing fields \eqref{Gdd-general} simplifies to,
\be
\label{Gij-sep}
G_{ij}= \frac{F(\xi)}{\Sigma} (a+\tilde a + b \eta)_i (a+\tilde a + b \eta)_j +  \frac{G(\eta)}{\Sigma} (\tilde a + b \xi)_i (\tilde a + b \xi)_j   \; ,
\ee
where we have used \eqref{moments-separable} and \eqref{cfcg-FG}.  Thus, its determinant takes the simple form
\begin{align}
\det G_{ij} = N F(\xi) G(\eta)  \; ,   \label{detG}
\end{align}
where $N$ is the constant in each of the three cases given in Table \ref{table:sep-def}.
Furthermore, the functions $F(\xi)$ and $G(\eta)$ can be expressed in terms of invariants on the K\"ahler base via the following projections:
\begin{align}
\label{proj-PT}
F(\xi)  &= \frac{G_{ij} \tilde a_k \tilde a_l \epsilon^{ik}\epsilon^{jl} }{\langle \tilde a, a \rangle^2} , \qquad G(\eta)=\frac{G_{ij}(a_k +\tilde a_k)(a_l +\tilde a_l) \epsilon^{ik}\epsilon^{jl} }{\langle \tilde a, a \rangle^2} , \qquad \text{if PT}  \\
\label{proj-CT}
\frac{F(\xi)}{\xi}  &= \frac{G_{ij} b_k b_l \epsilon^{ik}\epsilon^{jl} }{\langle a, b \rangle^2} , \quad - \frac{\eta F(\xi)}{\xi} = \frac{G_{ij} a_k b_l \epsilon^{ik}\epsilon^{jl} }{\langle a, b \rangle^2} \; , \quad  \xi G(\eta) +\frac{\eta^2 F(\xi)}{\xi} =  \frac{G_{ij} a_k a_l \epsilon^{ik}\epsilon^{jl} }{\langle a, b \rangle^2}
, \quad \text{if CT} \\
\label{proj-OT}
F(\xi)  &= \frac{G_{ij}( \tilde a_k +b_k \xi) (\tilde a_l+b_l\xi) \epsilon^{ik}\epsilon^{jl} }{(\xi - \eta) \langle \tilde a, b \rangle^2} , \qquad G(\eta)=\frac{G_{ij}(\tilde a_k+b_k\eta)(\tilde a_l+b_l\eta) \epsilon^{ik}\epsilon^{jl} }{(\xi -\eta) \langle \tilde a, b \rangle^2} , \qquad \text{if OT}
\end{align}
where $\epsilon^{ij}$ is the alternating symbol with $\epsilon^{12}=1$ and recall the three cases are defined in Table \ref{table:sep-def}.  

\subsection{Separable 2-forms}

We  now introduce a class of separable 2-forms on toric K\"ahler manifolds.  Note that we will  define this independently to the notion of separable K\"ahler metrics introduced in the previous section, that is, we do not assume Definition \ref{separable-def}.    

We start with a toric closed 2-form as in \eqref {mu-i-introduction} which, in orthogonal coordinates \eqref{Kahler-orthogonal}, reads
\begin{equation}\label{eq:F-expr}
F=\td\big(\wrpp^{-1}(\mu_{\xi}\sigma_{\xi}+\mu_{\eta}\sigma_{\eta})\big)\,,
\end{equation}
where we have defined
\begin{equation}\label{eq:muxieta-def}
\mu_{\xi}:=\langle\mu,\partial_{\eta}x\rangle 
\,,\qquad\mu_{\eta}:=-\langle\mu,\partial_{\xi}x\rangle 
\,.
\end{equation}
These transform under the gauge transformations \eqref{U1-gauge} as,
\begin{equation}\label{eq:gauge-mu-xieta}
\mu_{\xi}\to\mu_{\xi}+\partial_{\eta}\langle\alpha,x\rangle\,,\qquad\mu_{\eta}\to\mu_{\eta}-\partial_{\xi}\langle\alpha,x\rangle\,.
\end{equation}
Now, by  Lemma \ref{lem_JinvF_toric} it follows that closed, $J$-invariant, 2-forms that are invariant under toric symmetry, are a subclass of toric closed 2-forms as introduced in Definition \ref{def-toric-2forms}. The condition   \eqref{eq_JinvF} required for $J$-invariance becomes
 \begin{equation}\label{J-invariance-orthogonal}
\cf\langle\partial_{\xi}x,\partial_{\xi}\mu\rangle+\cg\langle\partial_{\eta}x,\partial_{\eta}\mu\rangle=0\, ,
\end{equation}
which  is also equivalent to the (local) existence of a  potential $\Lambda$ defined by \eqref{eq_Lambda} in terms of which 
\begin{equation}\label{mu-xieta-Lambda}
\mu_{\xi}=\wrpp\cf\partial_{\xi}\Lambda\,,\qquad\mu_{\eta}=\wrpp\cg\partial_{\eta}\Lambda\, .
\end{equation}
It is useful to note that \eqref{J-invariance-orthogonal} can be written as
\begin{equation}\label{eq_J-inv-alt}
\cf\partial_{\xi}\mu_{\eta}-\cg\partial_{\eta}\mu_{\xi}=\langle\cf\partial_{\xi}^{2}x-\cg\partial_{\eta}^{2}x,\mu\rangle\,.
\end{equation}
Therefore, on a K\"ahler surface that is separable with respect to orthogonal coordinates $(\xi, \eta)$, the r.h.s. of \eqref{eq_J-inv-alt} vanishes due to \eqref{eq:separable-def}  and hence using \eqref{cfcg-FG} the $J$-invariance condition reduces to 
\begin{equation}\label{sep-on-sep}
F(\xi)\partial_{\xi}\mu_{\eta} - G(\eta)\partial_{\eta}\mu_{\xi}=0\, ,
\end{equation}
while further using \eqref{eq:wrpp}, equations \eqref{mu-xieta-Lambda} reduce to
\begin{equation}\label{mu-xieta-Lambda-2}
\mu_{\xi}=\nrm F(\xi)\partial_{\xi}\Lambda\,,\qquad\mu_{\eta}=\nrm G(\eta)\partial_{\eta}\Lambda\,.
\end{equation}
We are now ready to define a class of separable 2-forms.

\begin{definition}\label{separable1form-def}
A  toric closed 2-form $F$, on a toric K\"ahler surface, is separable if there exist orthogonal coordinates as in \eqref{eq:orthogonality-gen} such that
\begin{equation}\label{eq:separable1form-def}
\langle\partial_{\xi}x,\partial_{\xi}\mu\rangle=0\,,\qquad\langle\partial_{\eta}x,\partial_{\eta}\mu\rangle=0\, ,
\end{equation}
where $\mu_i$ are the moment maps associated to $F$, that is, $\iota_{m_i} F= -\td \mu_i$.
\end{definition}

The motivation for this definition will become apparent shortly\footnote{It is worth noting that  separability of 2-forms also admits a definition in terms of holomorphic coordinates $y^i+i
\phi^i$, that is,  \eqref{eq:separable1form-def} can be written as
$\partial_{\xi}y^{i}\partial_{\eta}\mu_{i}=0$ and $\partial_{\eta}y^{i}\partial_{\xi}\mu_{i}=0$.}. First, observe that according to this definition, a separable toric closed 2-form is necessarily $J$-invariant, in particular, both terms in the $J$-invariance condition \eqref{J-invariance-orthogonal} vanish separately.  Therefore, separable 2-forms are a special subclass of $J$-invariant 2-forms, as illustrated  in Figure \ref{fig:2-forms}.  The archetypal separable 2-form on a generic toric K\"ahler surface is the K\"ahler form itself, since in this case $\mu_{i}=x_{i}$ which trivially solves \eqref{eq:separable1form-def}. Observe that the definition of separable 2-forms \eqref{eq:separable1form-def} is invariant under the gauge transformations \eqref{U1-gauge}.

\begin{figure}[h!]
\centering
   \includegraphics[width=0.50\textwidth]{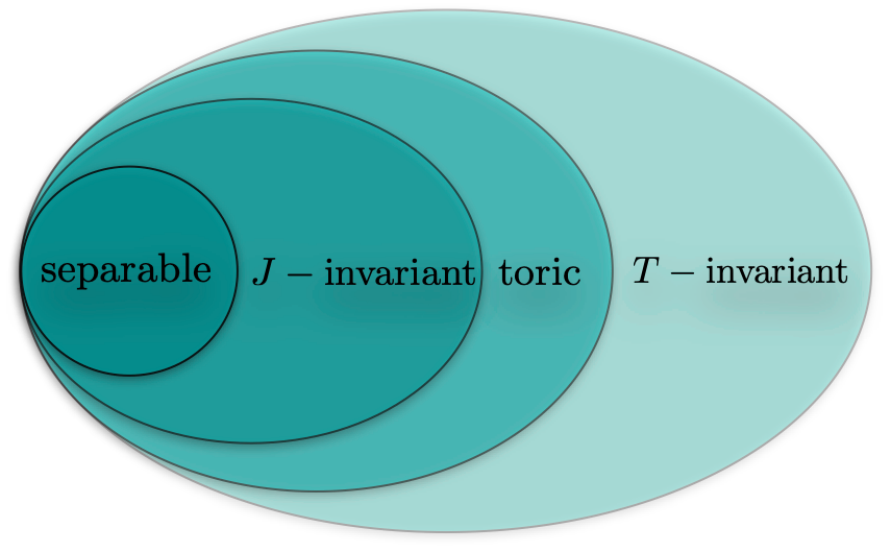}
\caption{Various classes of closed 2-forms we use in this paper. Notice that separable 2-forms are a subclass of the $J$-invariant ones.}
\label{fig:2-forms}
\end{figure}

 The following result  shows that if the concepts of a separable 2-form and metric are compatible, the 2-form can also be described in terms of functions of a single variable, thus justifying the use of the term ``separable''. 
 
 \begin{prop}\label{sep-Lambda}
Let $(B, h, J)$ be a separable toric K\"ahler surface as in Proposition \ref{prop_separable}.  Then, a toric closed 2-form $F$, that is separable with respect to the same orthogonal coordinates $(\xi, \eta)$,  takes the form
\be
F= \td \left( \frac{1}{N\Sigma} (\mu_{\xi}\sigma_{\xi}+\mu_{\eta}\sigma_{\eta})\right)\,,
\ee
where
\begin{equation}\label{eq:sep-ansatz}
\mu_{\xi}=\mu_{\xi}(\xi)\,,\qquad \mu_{\eta}=\mu_{\eta}(\eta)\,.
\end{equation}
Furthermore, the potential for $F$ defined by \eqref{eq_Lambda} is additively separable,
\begin{equation}\label{eq:Lambda-sep}
\Lambda(\xi,\eta)=\Lambda_{\xi}(\xi)+\Lambda_{\eta}(\eta)\,,
\end{equation}
for functions $\Lambda_{\xi}(\xi)$ and $\Lambda_{\eta}(\eta)$. Conversely, on a separable toric K\"ahler surface \eqref{eq:sep-ansatz} or \eqref{eq:Lambda-sep} imply that $F$ is separable.
\end{prop}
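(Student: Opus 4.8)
The plan is to unwind the definitions on a separable K\"ahler surface and show that the single-variable character of the moment-map components is forced by combining the $J$-invariance constraint with the ``separable 2-form'' constraint, both of which simplify drastically once the metric is separable. First I would recall that, since $(B,h,J)$ is separable, equations \eqref{eq_J-inv-alt}--\eqref{sep-on-sep} apply: the $J$-invariance of a separable closed $2$-form reduces to $F(\xi)\partial_\xi\mu_\eta=G(\eta)\partial_\eta\mu_\xi$. Meanwhile Definition \ref{separable1form-def} for $F$ with respect to the \emph{same} coordinates $(\xi,\eta)$ says precisely that both terms of the $J$-invariance condition \eqref{J-invariance-orthogonal} vanish separately; using \eqref{cfcg-FG} this means $\langle\partial_\xi x,\partial_\xi\mu\rangle=0$ and $\langle\partial_\eta x,\partial_\eta\mu\rangle=0$.

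The key computational step is to express $\partial_\xi\mu_\eta$ and $\partial_\eta\mu_\xi$ using \eqref{eq:muxieta-def} and the separable form \eqref{moments-separable} of the moment maps. Since $x_i$ is linear in $\xi$ (and in $\eta$), the second derivatives $\partial_\xi^2 x_i$, $\partial_\eta^2 x_i$ vanish, so
\begin{equation}
\partial_\xi\mu_\eta=-\langle\partial_\xi\mu,\partial_\xi x\rangle-\langle\mu,\partial_\xi^2 x\rangle=-\langle\partial_\xi\mu,\partial_\xi x\rangle\,,
\end{equation}
which is exactly the first condition in \eqref{eq:separable1form-def}, hence vanishes; similarly $\partial_\eta\mu_\xi=0$. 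Therefore $\mu_\eta$ is independent of $\xi$ and $\mu_\xi$ is independent of $\eta$, giving \eqref{eq:sep-ansatz}. The expression $F=\td\big(\tfrac{1}{N\Sigma}(\mu_\xi\sigma_\xi+\mu_\eta\sigma_\eta)\big)$ is then immediate from \eqref{eq:F-expr} together with $\Omega=N\Sigma$ from \eqref{eq:wrpp}.

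For the potential statement, I would use \eqref{mu-xieta-Lambda-2}: $\mu_\xi=N F(\xi)\partial_\xi\Lambda$ and $\mu_\eta=N G(\eta)\partial_\eta\Lambda$. From $\mu_\xi=\mu_\xi(\xi)$ we get $\partial_\xi\Lambda=\mu_\xi(\xi)/(N F(\xi))$ is a function of $\xi$ only, so $\partial_\eta\partial_\xi\Lambda=0$; hence $\Lambda$ is additively separable as in \eqref{eq:Lambda-sep}, with $\Lambda_\xi'(\xi)=\mu_\xi(\xi)/(NF(\xi))$ and $\Lambda_\eta'(\eta)=\mu_\eta(\eta)/(NG(\eta))$. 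The converse runs the same identities backwards: if either \eqref{eq:sep-ansatz} or \eqref{eq:Lambda-sep} holds, then using \eqref{mu-xieta-Lambda-2} (equivalently directly from $\partial_\xi\mu_\eta=-\langle\partial_\xi\mu,\partial_\xi x\rangle$ and its counterpart) both separability conditions \eqref{eq:separable1form-def} follow, so $F$ is separable.

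The only place demanding any care — and the likely ``main obstacle'', though a mild one — is the algebraic identity $\partial_\xi\mu_\eta=-\langle\partial_\xi\mu,\partial_\xi x\rangle$ on a separable metric: one must differentiate \eqref{eq:muxieta-def} and verify the $\langle\mu,\partial_\xi^2 x\rangle$ term is the unique obstruction, which vanishes precisely by Definition \ref{separable-def}, i.e. \eqref{eq:separable-def}. Everything else is bookkeeping with the bracket $\langle\cdot,\cdot\rangle$ and the relations already assembled in \eqref{eq_J-inv-alt}--\eqref{mu-xieta-Lambda-2}. I would also note that the result presupposes $F$ is separable with respect to the \emph{same} orthogonal coordinates adapted to the metric; the statement as phrased builds this in, so no compatibility needs to be established.
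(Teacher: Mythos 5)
Your proposal is correct and follows essentially the same route as the paper's proof: differentiate the definitions \eqref{eq:muxieta-def} of $\mu_\xi,\mu_\eta$, use metric separability \eqref{eq:separable-def} to kill the $\langle\mu,\partial^2 x\rangle$ terms and 2-form separability \eqref{eq:separable1form-def} to kill the rest, then pass to $\Lambda$ via \eqref{mu-xieta-Lambda-2}, with the converse obtained by running the same identities backwards. You merely spell out the differentiation step more explicitly than the paper does.
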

\begin{proof} Recall that the functions $\mu_\xi, \mu_\eta$ are defined by \eqref{eq:muxieta-def}. Differentiating these by $\eta$ and $\xi$ respectively, using  metric separability \eqref{eq:separable-def} and 2-form separability \eqref{eq:separable1form-def}, we deduce
\begin{equation}\label{eq:sep-ansatz2}
\partial_{\eta}\mu_{\xi}=0 \,, \qquad \partial_{\xi}\mu_{\eta}=0\,,
\end{equation}
which establishes \eqref{eq:sep-ansatz}.\footnote{This statement is invariant under the gauge transformations \eqref{eq:gauge-mu-xieta} (recall $\partial_{\xi}^{2}x_{i}=\partial_{\eta}^{2}x_{i}=0$). } The form for $F$ then follows from \eqref{eq:F-expr} and \eqref{eq:wrpp}. Next, using \eqref{mu-xieta-Lambda-2},  we see that both equations in \eqref{eq:sep-ansatz2} reduce to 
\begin{equation}
\partial_{\xi}\partial_{\eta}\Lambda=0\,,
\end{equation}
thus proving \eqref{eq:Lambda-sep}.

Conversely, if $\partial_\eta \mu_\xi=0$ and $\partial_\xi \mu_\eta=0$, then \eqref{eq:muxieta-def} implies that $\langle\partial_{\xi}x,\partial_{\xi}\mu\rangle=0$ and $\langle\partial_{\eta}x,\partial_{\eta}\mu\rangle=0$ and hence $F$ is separable.
\end{proof}

Observe that this lemma shows that a  2-form that is separable, with respect to the same orthogonal coordinates for which a K\"ahler metric is separable, corresponds to one where both terms on the l.h.s. of \eqref{sep-on-sep} vanish separately.

We have already noted below Lemma \ref{lem_JinvF_toric} that the K\"ahler potential is the $\Lambda$-potential for the K\"ahler form. We therefore deduce the following corollary.
\begin{cor}\label{sep-Kahler-pot}
For a toric K\"ahler surface that is separable with respect to orthogonal coordinates $(\xi, \eta)$, the K\"ahler potential is additively separable, 
\begin{equation}
\ck(\xi,\eta)=\ck_{\xi}(\xi)+\ck_{\eta}(\eta)\,.
\end{equation}
\end{cor}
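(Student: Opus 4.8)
The plan is to invoke the general correspondence, established below Lemma \ref{lem_JinvF_toric}, that the K\"ahler form $J$ is itself a toric closed $J$-invariant 2-form whose associated $\Lambda$-potential \eqref{eq_Lambda} is precisely the K\"ahler potential $\ck$; that is, $\Lambda_{\text{K\"ahler-form}}=\ck$. The corollary is then an immediate application of Proposition \ref{sep-Lambda} to this particular 2-form.

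Concretely, the steps are as follows. First I would observe that the K\"ahler form $J$ is always a separable 2-form, in the sense of Definition \ref{separable1form-def}, with respect to \emph{any} orthogonal coordinates for which the metric is separable. Indeed, for $F=J$ the moment maps are simply $\mu_i=x_i$ (up to the gauge freedom \eqref{U1-gauge}), so the separability conditions \eqref{eq:separable1form-def} read $\langle\partial_{\xi}x,\partial_{\xi}x\rangle=0$ and $\langle\partial_{\eta}x,\partial_{\eta}x\rangle=0$, which hold trivially since $\langle\cdot,\cdot\rangle$ is antisymmetric. (This is already remarked in the paragraph following Definition \ref{separable1form-def}.) Hence the hypothesis of Proposition \ref{sep-Lambda} is satisfied by $J$ on any separable toric K\"ahler surface. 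Second, I would apply the conclusion \eqref{eq:Lambda-sep} of Proposition \ref{sep-Lambda}: the $\Lambda$-potential of $J$ is additively separable, $\Lambda(\xi,\eta)=\Lambda_\xi(\xi)+\Lambda_\eta(\eta)$. Third, I would identify this potential with the K\"ahler potential using the computation recorded just below Lemma \ref{lem_JinvF_toric}, namely $x_i=G_{ij}\partial^j\ck$, which follows from the Legendre transform relation \eqref{Kahler-Legendre}; comparing with \eqref{eq_Lambda} gives $\Lambda_{\text{K\"ahler-form}}=\ck$. Setting $\ck_\xi:=\Lambda_\xi$ and $\ck_\eta:=\Lambda_\eta$ then yields $\ck(\xi,\eta)=\ck_\xi(\xi)+\ck_\eta(\eta)$, as claimed.

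There is essentially no obstacle here; the corollary is a direct specialisation of Proposition \ref{sep-Lambda}. The only point requiring a word of care is that the K\"ahler potential is defined only up to the affine ambiguity inherited from the moment-map gauge freedom \eqref{U1-gauge} (which shifts $\ck$ by a linear function of the $x_i$, hence by a term linear in $\xi,\eta$ after the coordinate change \eqref{moments-separable}); this is harmless because such a shift is itself additively separable, so the stated form is preserved. Thus the proof is one line once Proposition \ref{sep-Lambda} and the identification $\Lambda_{\text{K\"ahler-form}}=\ck$ are in hand.
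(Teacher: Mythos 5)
Your proof is correct and follows exactly the paper's route: the Kähler form is the archetypal separable 2-form (with $\mu_i=x_i$ trivially satisfying \eqref{eq:separable1form-def}), its $\Lambda$-potential is the Kähler potential, and Proposition \ref{sep-Lambda} gives additive separability. The paper deduces the corollary in precisely this one-line fashion, only afterwards verifying it independently via the Legendre transform of \eqref{eq:sympl-sep}.
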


We can verify this statement by directly evaluating the Legendre transform \eqref{Kahler-Legendre} of the symplectic potential for separable metrics given in \eqref{eq:sympl-sep} and we find 
\begin{equation}
\ck=\begin{cases}
\Big(\xi^{3}\big(\xi^{-2}\aff(\xi)\big)'\Big)'+\Big(\eta^{3}\big(\eta^{-2}\bff(\eta)\big)'\Big)' & \text{for PT}\,,\\
\partial_{\xi}\sgm\,\xi^{2}\Big(\xi^{2}\big(\xi^{-2}\aff(\xi)\big)'\Big)'+\partial_{\eta}\sgm\,\eta^{2}\Big(\eta^{2}\big(\eta^{-2}\bff(\eta)\big)'\Big)' & \text{for CT and OT}\, , 
\end{cases}
\end{equation}
where without loss of generality we set $c_{i}=0$ in \eqref{moments-separable}.\footnote{Constant shifts of $x_{i}$ correspond to K\"ahler transformations.} 
It is interesting to note that the K\"ahler potential for CT type depends only on $\xi$.

The next result gives another generic example of a separable 2-form.

 \begin{lemma}\label{sep-Ricci}
Consider a toric K\"ahler surface separable with respect to orthogonal coordinates $(\xi, \eta)$ as in Proposition \ref{prop_separable}. The Ricci 2-form $\mathcal{R}=\td P$ is also separable with respect to $(\xi, \eta)$ and given by 
\begin{equation}
P=-\frac{1}{2\sgm}\big(F'(\xi)\sigma_{\xi}+G'(\eta)\sigma_{\eta}\big)\, . \label{eq_P_sep}
\end{equation}
Furthermore, the $\Lambda$-potential for $\mathcal{R}$ is given by
\begin{equation}
\Lambda_{\text{Ricci-form}} =-\frac{1}{2}\log F(\xi)G(\eta) \,.   \label{eq_Lambda_Ricci}
\end{equation}
\end{lemma}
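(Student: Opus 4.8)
The plan is to compute the Ricci form directly from the standard toric formula \eqref{eq_ricci_toric}, namely $P = P_i \td \phi^i$ with $P_i = -\tfrac12 G_{ij}\partial^j \log \det G$, and then transcribe the result into orthogonal coordinates. Since we have the explicit separable forms \eqref{Gij-sep} for $G_{ij}$ and, crucially, the factorized determinant \eqref{detG}, $\det G = N F(\xi)G(\eta)$, the logarithmic derivative simplifies dramatically: $\partial^i \log \det G = (F'(\xi)/F(\xi))\,\partial^i \xi + (G'(\eta)/G(\eta))\,\partial^i \eta$. The only ingredients then needed are $\partial^i \xi$ and $\partial^i \eta$, which follow from the orthogonal-coordinate change. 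Explicitly, since $h^{ij}$ in the $x$-coordinates is $G^{ij} = \partial^i\partial^j g$, and in orthogonal coordinates $h = \cf^{-1}\td\xi^2 + \cg^{-1}\td\eta^2 + \cdots$, one has $\partial^i \xi = \cf\, \partial_\xi x_j \,\epsilon^{ji}\cdot(\text{something})$; more directly, from $G_{ij}$ in \eqref{Gij-sep} and the pairing with $\sigma_\xi, \sigma_\eta$, the dual coframe relation gives $G^{ij}\partial_\xi x_j = \cf\, (\text{coefficient in } \sigma_\xi)^i$ etc. The cleanest route is to use that $\td\xi$ is dual to $\cf^{-1}\partial_\xi$-direction, so that $G^{ij}\partial_j(\text{moment-map combination})$ pulls through.

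First I would contract $G_{ij}$ (in the form \eqref{Gij-sep}) with $\partial^j \xi$ and $\partial^j \eta$: using that $\langle a+\tilde a + b\eta, \tilde a + b\xi\rangle = \pm N\Sigma$ (this is exactly the Jacobian relation, up to sign, recorded in \eqref{Om_separable}–\eqref{eq:wrpp}) and the definitions \eqref{sigma-1forms}, one finds $G_{ij}\partial^j\xi$ is proportional to $\partial_\xi x_i = (a+\tilde a+b\eta)_i$ with coefficient $F(\xi)/\Sigma$, and similarly $G_{ij}\partial^j\eta \propto (\tilde a + b\xi)_i$ with coefficient $G(\eta)/\Sigma$. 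Substituting into $P_i = -\tfrac12 G_{ij}\big[(F'/F)\partial^j\xi + (G'/G)\partial^j\eta\big]$ then yields $P_i = -\tfrac{1}{2\Sigma}\big[F'(\xi)(\partial_\xi x)_i + G'(\eta)(\partial_\eta x)_i\big]$, which is precisely \eqref{eq_P_sep} once one recalls $\sigma_\xi = (\partial_\xi x_i)\td\phi^i$ and $\sigma_\eta = (\partial_\eta x_i)\td\phi^i$. This simultaneously exhibits $\mathcal{R} = \td P$ as a separable 2-form: its moment maps are $\mu_\xi \propto F'(\xi)$, $\mu_\eta \propto G'(\eta)$, each a function of a single variable, which by the converse direction of Proposition \ref{sep-Lambda} is equivalent to separability of $\mathcal R$. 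Alternatively, one can simply invoke \eqref{eq_ricci_toric} together with \eqref{detG}: the general $\Lambda$-potential of the Ricci form is $-\tfrac12\log\det G$ (noted below Lemma \ref{lem_JinvF_toric}), which here equals $-\tfrac12\log(NF(\xi)G(\eta)) = -\tfrac12\log F(\xi) - \tfrac12\log G(\eta) + \text{const}$, manifestly additively separable; dropping the irrelevant additive constant gives \eqref{eq_Lambda_Ricci}, and by Corollary \ref{sep-Kahler-pot}-type reasoning (i.e. Proposition \ref{sep-Lambda} applied in reverse via \eqref{eq:Lambda-sep}) this additive separability of $\Lambda$ is exactly what separability of $\mathcal R$ means.

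I would structure the write-up in two short steps: (i) derive \eqref{eq_Lambda_Ricci} from \eqref{detG} and the known Ricci $\Lambda$-potential, noting the additive constant $-\tfrac12\log N$ is pure gauge \eqref{U1-gauge}; (ii) derive \eqref{eq_P_sep} either by feeding $\Lambda_{\text{Ricci-form}}$ into \eqref{mu-xieta-Lambda-2} and then \eqref{eq:F-expr}/\eqref{sigma-1forms}, or by the direct contraction of \eqref{eq_ricci_toric} with the separable $G_{ij}$ sketched above. Separability of $\mathcal R$ then follows immediately from \eqref{eq:Lambda-sep} (the "conversely" clause of Proposition \ref{sep-Lambda}). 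The main obstacle — really the only place any care is needed — is bookkeeping the relation between $\partial^i\xi, \partial^i\eta$ and the vectors $(a+\tilde a + b\eta)_i, (\tilde a+b\xi)_i$, i.e. correctly inverting the Jacobian and tracking the factors of $\Sigma$ and $N$ (and their signs under the $\xi\leftrightarrow\eta$ exchange \eqref{eq:exchange-xieta}); but this is exactly the computation already packaged in \eqref{Gij-sep}–\eqref{detG}, so it reduces to a one-line linear-algebra check rather than genuine new work.
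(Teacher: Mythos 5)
Your proposal is correct and follows essentially the same route as the paper: compute $\Lambda_{\text{Ricci-form}}=-\tfrac12\log\det G_{ij}=-\tfrac12\log F(\xi)G(\eta)$ (up to an irrelevant constant) from \eqref{eq_ricci_toric} and \eqref{detG}, feed it through \eqref{mu-xieta-Lambda-2} and \eqref{eq:F-expr} to obtain \eqref{eq_P_sep}, and conclude separability via Proposition \ref{sep-Lambda}. The alternative direct contraction you sketch is a harmless variant of the same computation, and your bookkeeping of the factors of $N$ and $\Sigma$ checks out.
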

\begin{proof}
The Ricci form potential for a toric K\"ahler metric in symplectic coordinates is given by \eqref{eq_ricci_toric}, so in particular its $\Lambda$-potential is given by $\Lambda= -\frac{1}{2} \log \det G_{ij}$. The matrix $G_{ij}$ can be computed from \eqref{eq:metric-separable} and \eqref{angletransf}, which in all cases gives \eqref{eq_Lambda_Ricci} (up to an irrelevant additive constant). Then, using \eqref{mu-xieta-Lambda}, \eqref{cfcg-FG}, and \eqref{eq:F-expr} gives \eqref{eq_P_sep} as claimed.  Therefore, by Proposition \ref{sep-Lambda} the Ricci form is separable.
\end{proof}

\subsection{Hamiltonian 2-forms}\label{sec:ham-2-forms}

In the preceding two subsections, we introduced the notion of separable metrics and 2-forms on toric K\"ahler surfaces. In this subsection, we will establish a connection between metric separability and the theory of Hamiltonian 2-forms~\cite{Apostolov2001TheGO}.  In fact, in~\cite{Apostolov2001TheGO, Legendre} it has been shown that 
any toric K\"ahler metric that admits a Hamiltonian 2-form must be precisely one of PT, CT or OT, and therefore separable according to our definition. We  deduce the following  theorem which gives a geometrical characterisation of our notion of separability.

 \begin{theorem}\label{hamiltonian-2-forms-theorem}
A toric K\"ahler metric is separable if and only if it admits a Hamiltonian 2-form. 
\end{theorem}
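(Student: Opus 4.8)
The plan is to prove the theorem in two directions, using the characterization of separable metrics already established in Proposition~\ref{prop_separable} together with the definition of a Hamiltonian 2-form.

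\textbf{Recalling the definition.} A Hamiltonian 2-form $\phi$ on a K\"ahler manifold $(B,h,J)$ is a $J$-invariant 2-form satisfying the equation
\begin{equation*}
\nabla_X \phi = \tfrac{1}{2}\big( \td (\tr_J \phi) \wedge J X^\flat - J \td (\tr_J\phi) \wedge X^\flat \big)
\end{equation*}
for all vector fields $X$, where $\tr_J\phi$ denotes the trace of $\phi$ with respect to $J$. The key structural fact I will invoke (from~\cite{Apostolov2001TheGO}) is that a Hamiltonian 2-form, when nontrivial, determines a commuting family of Killing fields via its "momentum" functions, and in the toric case these generate (a subtorus of) the torus action; conversely the eigenvalues of $\phi$ (relative to the K\"ahler form) serve as an orthogonal coordinate system.

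\textbf{($\Leftarrow$) Hamiltonian $\Rightarrow$ separable.} Suppose $(B,h,J)$ is toric and admits a Hamiltonian 2-form $\phi$. I would appeal directly to the classification theorem of Apostolov--Calderbank--Gauduchon and Legendre~\cite{Apostolov2001TheGO, Legendre}: any toric K\"ahler surface admitting a Hamiltonian 2-form is locally one of the product-toric, Calabi-toric or ortho-toric types. Each of these is, by construction in Section~\ref{sec:separable} (in particular Table~\ref{table:sep-def} and Proposition~\ref{prop_separable}), separable in the sense of Definition~\ref{separable-def}. Indeed the explicit local forms for PT/CT/OT metrics in the ACG/Legendre literature coincide, after matching conventions, with \eqref{eq:metric-separable}, so the eigenvalue functions of $\phi$ provide precisely the orthogonal coordinates $(\xi,\eta)$ in which the moment maps $x_i$ are affine in each variable. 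This direction is therefore essentially a citation plus a bookkeeping check that the normal forms agree.

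\textbf{($\Rightarrow$) Separable $\Rightarrow$ Hamiltonian.} Conversely, suppose $(B,h,J)$ is separable, so by Proposition~\ref{prop_separable} the metric takes the form \eqref{eq:metric-separable} with $\Sigma$, $\sigma_\xi$, $\sigma_\eta$ as in one of the three cases. Here I would exhibit the Hamiltonian 2-form explicitly. The natural candidate is built from the orthogonal coordinate $\xi$ (or a suitable affine combination of $\xi$ and $\eta$): set $\phi$ to be the $J$-invariant 2-form whose nonvanishing components pair $\td\xi$ with $\sigma_\xi$ and $\td\eta$ with $\sigma_\eta$ with coefficients linear in $\xi,\eta$ — concretely something of the shape $\phi = \xi\,(\td\xi\wedge\sigma_\xi)/\Sigma + \eta\,(\td\eta\wedge\sigma_\eta)/\Sigma$ up to normalization and additive multiples of $J$, chosen so that $\tr_J\phi$ is (an affine function of) $\xi+\eta$. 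One then verifies the defining PDE $\nabla_X\phi = \tfrac12(\td\sigma\wedge JX^\flat - J\td\sigma\wedge X^\flat)$ with $\sigma=\tr_J\phi$. The most efficient route is to use the known fact (again from~\cite{Apostolov2001TheGO}) that the Hamiltonian condition for a $J$-invariant 2-form with a single pair of distinct momenta is equivalent to an orthotoric/Calabi-type ansatz for the metric in the eigenvalue coordinates; since our metric \eqref{eq:metric-separable} is already in exactly that form (with $F(\xi)$, $G(\eta)$ arbitrary single-variable functions), the verification reduces to a direct computation of the covariant derivative of $\phi$ in the orthonormal frame $\{\mathcal{F}^{-1/2}\td\xi, \mathcal{G}^{-1/2}\td\eta, \mathcal{F}^{1/2}\sigma_\xi, \mathcal{G}^{1/2}\sigma_\eta\}$ adapted to \eqref{Kahler-orthogonal}, using the Christoffel symbols of that frame.

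\textbf{Main obstacle.} The hard part is the explicit verification in the ($\Rightarrow$) direction: computing $\nabla\phi$ and matching it against the right-hand side of the Hamiltonian equation requires the connection coefficients of the separable metric in all three cases PT/CT/OT, and handling them uniformly (the factor $\Sigma$ and the differing $\sigma_\xi,\sigma_\eta$ in \eqref{eq:sigmaSigma_defs} make the three cases superficially distinct). I expect to circumvent most of this grind by invoking the ACG normal-form characterization rather than brute-forcing the covariant derivative: since Proposition~\ref{prop_separable} already puts the metric in the canonical orthotoric/Calabi/product shape with arbitrary $F,G$, and since those shapes are precisely the ones ACG prove carry a Hamiltonian 2-form, the theorem follows by quoting their local classification in both directions. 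The one genuine check that remains is conventions-matching — confirming that "separable" as defined here is not strictly larger than the union PT $\cup$ CT $\cup$ OT, which is guaranteed by the case analysis following \eqref{Om_separable} that exhausts all possibilities for $(\partial_\xi\Omega,\partial_\eta\Omega)$.
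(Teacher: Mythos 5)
Your proposal is logically sound but routes the harder direction through the literature, whereas the paper does it by hand. For ``Hamiltonian $\Rightarrow$ separable'' you and the paper do the same thing: cite the classification of~\cite{Apostolov2001TheGO, Legendre} that a toric K\"ahler surface with a (non-trivial) Hamiltonian 2-form is locally PT, CT or OT, and note via the case analysis after \eqref{Om_separable} that these exhaust the separable types. For the converse, however, the paper does \emph{not} quote ACG again: it writes the defining equation \eqref{def-ham-2form} in symplectic coordinates as the tensor equation \eqref{h2form-toric}, imposes metric separability, reduces to the ODE system \eqref{h2form-toric-comb}, \eqref{h2form-toric-ind} for $\mu_\xi,\mu_\eta$ (with the extra parameter $p$ controlling non-separable solutions), and solves it explicitly case by case, producing Proposition \ref{sep-ham} and Table \ref{table:sep-ham}. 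This buys more than the theorem: it gives the full moduli of Hamiltonian 2-forms on a generic separable surface (a one-parameter family beyond multiples of $J$) and isolates the exceptional metrics of Appendix \ref{sec:separability-appendix} that admit non-separable ones. Your approach is shorter but is essentially a double citation plus a normal-form matching, and it yields no new information.

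Two cautions. First, your concrete candidate $\phi=\xi\,(\td\xi\wedge\sigma_\xi)/\Sigma+\eta\,(\td\eta\wedge\sigma_\eta)/\Sigma$ is not closed (e.g.\ in the CT case $\td\phi=\td\eta\wedge\td\xi\wedge\td\varphi\neq 0$), so it cannot be a Hamiltonian 2-form; the correct non-trivial representatives are those of Table \ref{table:sep-ham}, whose potentials are quadratic/cubic in $\xi,\eta$ rather than linear. Since you ultimately defer to the ACG normal-form equivalence rather than verifying $\nabla\phi$ for this candidate, this does not break your argument, but as written the ``natural candidate'' paragraph would fail if carried out. Second, for the statement to be an equivalence one must read ``Hamiltonian 2-form'' as ``non-trivial Hamiltonian 2-form,'' since $J$ itself always satisfies \eqref{def-ham-2form} on any K\"ahler surface; you should make that explicit when invoking the classification.
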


We will provide a self-contained proof of one direction of this theorem, namely, that any separable toric K\"ahler surface admits a Hamiltonian 2-form. We will show this by an explicit calculation and in fact determine all possible Hamiltonian 2-forms on such geometries.

A Hamiltonian 2-form on a K\"ahler surface $(B, h, J)$ may be defined as a closed $J$-invariant 2-form $\Psi$ that satisfies~\cite{Apostolov2001TheGO},
\begin{equation}\label{def-ham-2form}
\nabla_{a}\Psi_{bc}=\frac{2}{3}\big((\partial_{a}\sigma)J_{bc}-J_{a[b}\partial_{c]}\sigma-h_{a[b}J_{c]}^{\,\,d}\partial_{d}\sigma\big)\,,
\end{equation}
where\footnote{This should not be confused with other quantities denoted by $\sigma$ in this paper.} 
\begin{equation}
\sigma :=\frac{1}{4}\Psi_{ab}J^{ab}\, .
\end{equation}
Notice that the K\"ahler form $J$ always satisfies \eqref{def-ham-2form} and hence is trivially a  Hamiltonian 2-form. However, in certain cases \eqref{def-ham-2form} may admit more interesting solutions which we refer to as non-trivial Hamiltonian 2-forms.

In the case of a toric K\"ahler surface, if we assume $\Psi$ also has toric symmetry, then by Lemma \ref{lem_JinvF_toric} we can write 
\begin{equation}
\Psi=\td(\mu_{i}\td\phi^{i})\,,
\end{equation}
for  some moment maps $\mu_i$. Since by definition a Hamiltonian 2-form is $J$-invariant, $\mu_{i}$ are required to satisfy \eqref{eq_JinvF}. A computation then reveals that  in symplectic coordinates equation \eqref{def-ham-2form} is equivalent to
\begin{equation}\label{h2form-toric}
E_{\,\,\,\,k}^{ij}=0\,,
\end{equation}
where we have defined
\begin{align}
E_{\,\,\,\,k}^{ij} :&=\partial^{i}\partial^{j}\mu_{k}-\frac{1}{2}G_{\ell p}\partial^{p}G^{ij}\partial^{\ell}\mu_{k}+\frac{1}{2}G_{kp}\partial^{p}G^{i\ell}\partial^{j}\mu_{\ell} \nonumber \\ 
&-\frac{1}{3}(\delta_{k}^{j}\partial^{i}\partial^{\ell}\mu_{\ell}+\frac{1}{2}\delta_{k}^{i}\partial^{j}\partial^{\ell}\mu_{\ell}+\frac{1}{2}G^{ij}G_{kp}\partial^{p}\partial^{\ell}\mu_{\ell})\, .
\end{align}
In order to find non-trivial Hamiltonian 2-forms, we need to solve \eqref{h2form-toric} together with \eqref{eq_JinvF}. We stress that this system does not admit non-trivial solutions for general toric K\"ahler metrics, indeed, by Theorem \ref{hamiltonian-2-forms-theorem} it admits non-trivial solutions precisely for separable K\"ahler metrics. This follows from the integrability properties of \eqref{h2form-toric} which have been studied in~\cite{Apostolov2001TheGO,Legendre}. 

Our goal here is to assume separability of the K\"ahler metric and explicitly solve \eqref{h2form-toric} together with \eqref{eq_JinvF}. With this assumption, we find the following components of \eqref{h2form-toric},
\begin{align}\label{h2form-toric-comb}
0=(2\partial_{\eta}x_{i}\partial_{\xi}x_{p}-\partial_{\xi}x_{i}\partial_{\eta}x_{p})\partial_{\xi}x_{j}\epsilon^{pk}E_{\,\,\,\,k}^{ij} & =\nrm G(\eta)\partial_{\xi}\partial_{\eta}^{2}\Lambda\,,\nonumber \\
0=-(2\partial_{\xi}x_{i}\partial_{\eta}x_{p}-\partial_{\eta}x_{i}\partial_{\xi}x_{p})\partial_{\eta}x_{j}\epsilon^{pk}E_{\,\,\,\,k}^{ij} & =\nrm F(\xi)\partial_{\xi}^{2}\partial_{\eta}\Lambda\,.
\end{align}
In order to arrive at the r.h.s., we have used the fact that a separable metric can be PT, CT or OT and written the result in a unified way. Moreover, we have traded $\mu_{i}$ for $\mu_{\xi}$ and $\mu_{\eta}$ through \eqref{eq:muxieta-def} and expressed the latter in terms of the potential $\Lambda$ as in \eqref{mu-xieta-Lambda-2} exploiting the $J$-invariance of the Hamiltonian 2-forms. The general solution to \eqref{h2form-toric-comb} is 
\begin{equation}
\Lambda=p\xi\eta+\Lambda_{\xi}(\xi)+\Lambda_{\eta}(\eta)\,,
\end{equation}
for functions $\Lambda_{\xi}(\xi), \Lambda_{\eta}(\eta)$ and a constant $p$. Note that for $p=0$ the above expression reduces to \eqref{eq:Lambda-sep}. We can then use again \eqref{mu-xieta-Lambda-2} to find
\begin{equation}\label{eq:mu-aux}
\mu_{\xi}=\nrm pF(\xi)\eta+\hmu_{\xi}(\xi)\,,\qquad \mu_{\eta}= \nrm pG(\eta)\xi+\hmu_{\eta}(\eta)\,,
\end{equation}
where $\hmu_{\xi}(\xi):=\nrm F(\xi)\Lambda_{\xi}'(\xi)$ and $\hmu_{\eta}(\eta):=\nrm G(\eta)\Lambda_{\eta}'(\eta)$. Therefore, by Proposition \ref{sep-Lambda}, we deduce that the corresponding Hamiltonian 2-form is itself separable if and only if $p=0$.\footnote{In this subsection, when we say that the Hamiltonian 2-form is separable, we always mean separability with respect to the orthogonal coordinates for which  the metric is separable.}   It remains to solve the rest of the equations in \eqref{h2form-toric} for $\hmu_{\xi}(\xi)$ and $\hmu_{\eta}(\eta)$. The independent components are
\begin{align}\label{h2form-toric-ind}
0 & =E_{1}:=6\frac{F(\xi)}{G(\eta)}\partial_{\xi}x_{i}\partial_{\xi}x_{j}\partial_{\xi}x_{p}\epsilon^{pk}E_{\,\,\,\,k}^{ij}\,,\nonumber \\
0 & =E_{2}:=3\partial_{\xi}x_{i}\partial_{\xi}x_{j}\partial_{\eta}x_{p}\epsilon^{pk}E_{\,\,\,\,k}^{ij}\,.
\end{align}
It turns out that for a generic separable toric K\"ahler metric we have $p=0$ and hence the Hamiltonian 2-form is also separable. However, for certain specific choices of separable metrics (i.e. specific functions $F(\xi), G(\eta)$), which we dub \emph{exceptional}, there exist Hamiltonian 2-forms with $p\neq 0$, that is, they also admit non-separable Hamiltonian 2-forms. We present these exceptional cases in the Appendix  \ref{sec:separability-appendix}.

We now consider separable solutions to \eqref{h2form-toric}, i.e. we set $p=0$, which as we will see does not impose any restrictions on the functions $F(\xi)$ and $G(\eta)$. Our results are summarised by the following proposition which gives an explicit proof of one direction in Theorem \ref{hamiltonian-2-forms-theorem}.  Observe that this shows that the space of non-trivial Hamiltonian 2-forms on generic separable K\"ahler surfaces is 1-dimensional.
 
 \begin{prop}\label{sep-ham}
The most general Hamiltonian 2-form on a separable toric K\"ahler surface, that is separable with respect to the same orthogonal coordinates $(\xi, \eta)$, is given by
\begin{equation}
\Psi=\gamma\,J+\delta\,\Psi_{\ast}\,,
\end{equation}
where $\gamma$ and $\delta$ are constants, $J$ is the K\"ahler form (as in Proposition \ref{prop_separable}) and $\Psi_{\ast}$ is a non-trivial Hamiltonian 2-form given in Table \ref{table:sep-ham}.
 \begin{table}[h!]
\centering
\begin{tabular}{|c|c|}
\hline 
Class & Separable Hamiltonian 2-form\tabularnewline
\hline 
\hline 
PT & $\Psi_{\ast}=\td(\xi\td\psi-\eta\td\varphi)$\tabularnewline
\hline 
CT & $\Psi_{\ast}=\td\big(\xi^{2}(\td\psi+\eta\td\varphi)\big)$\tabularnewline
\hline 
OT & $\Psi_{\ast}=\td\Big(\frac{\xi^{3}}{\xi-\eta}(\td\psi+\eta\td\varphi\big)-\frac{\eta^{3}}{\xi-\eta}(\td\psi+\xi\td\varphi\big)\Big)$\tabularnewline
\hline 
\end{tabular}
\caption{Non-trivial Hamiltonian 2-forms for separable toric K\"ahler metrics.}
\label{table:sep-ham}
\end{table}
\end{prop}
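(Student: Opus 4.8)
By the discussion immediately preceding the proposition we may take for granted the following reduction. A toric, closed, $J$-invariant 2-form $\Psi=\td(\mu_i\td\phi^i)$ on a separable toric K\"ahler surface that solves the components \eqref{h2form-toric-comb} of the Hamiltonian condition \eqref{h2form-toric} has moment maps of the form \eqref{eq:mu-aux}, and by Proposition \ref{sep-Lambda} it is itself separable (with respect to the orthogonal coordinates for which the metric is separable) precisely when the constant $p$ vanishes. Setting $p=0$, such a $\Psi$ is therefore determined by a pair of single-variable functions $\mu_\xi=\hmu_\xi(\xi)$, $\mu_\eta=\hmu_\eta(\eta)$, equivalently by an additively separable potential $\Lambda=\Lambda_\xi(\xi)+\Lambda_\eta(\eta)$ through $\hmu_\xi=\nrm F\Lambda_\xi'$, $\hmu_\eta=\nrm G\Lambda_\eta'$. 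The task is then to impose the two remaining independent components $E_1=E_2=0$ of \eqref{h2form-toric}, recorded in \eqref{h2form-toric-ind}, and to show that their general solution is the claimed two-parameter family $\gamma J+\delta\Psi_\ast$.

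The plan is to substitute the explicit separable metric of Proposition \ref{prop_separable} — in particular the Gram matrix \eqref{Gij-sep} and its inverse $G^{ij}$ — together with the separable ansatz into $E_{\,\,\,\,k}^{ij}$, trading the derivatives $\partial^i=\partial/\partial x_i$ for $\partial_\xi,\partial_\eta$ via the chain rule with $\partial_\xi x_i,\partial_\eta x_i$ read off from \eqref{moments-separable}. Since $E_{\,\,\,\,k}^{ij}$ is linear in $\mu_k$, the equations $E_1=0$ and $E_2=0$ are linear in the pair $(\hmu_\xi,\hmu_\eta)$. I expect $E_1=0$ to separate: after clearing the factor $F/G$ it should force an expression built from $\hmu_\xi$ and $F$ to equal one built from $\hmu_\eta$ and $G$, hence each equal to a separation constant, producing decoupled linear ODEs for $\hmu_\xi(\xi)$ and $\hmu_\eta(\eta)$, while $E_2=0$ supplies the remaining algebraic relation among the integration and separation constants. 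This can be organised uniformly across the PT, CT and OT cases by carrying the function $\sgm$ of Table \ref{table:sep-def} symbolically, or done case by case using \eqref{eq:sigmaSigma_defs} and \eqref{angletransf}.

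Finally one integrates these ODEs. The K\"ahler form $J$ — the choice $\mu_i=x_i$, i.e. $\hmu_\xi,\hmu_\eta$ equal to the parts of \eqref{moments-separable} linear in $\xi$ and in $\eta$ respectively — already solves the full system \eqref{h2form-toric}, so it lies in the (linear) space of separable solutions; modulo the gauge freedom \eqref{eq:gauge-mu-xieta} and an overall scale I expect exactly one further independent solution, which integration of the ODE identifies with the $\Psi_\ast$ of Table \ref{table:sep-ham}, giving the general separable solution $\gamma J+\delta\Psi_\ast$. Conversely, a direct substitution verifies that each $\Psi_\ast$ in Table \ref{table:sep-ham} is $J$-invariant and satisfies $E_{\,\,\,\,k}^{ij}=0$, hence solves the defining relation \eqref{def-ham-2form}, so that $\gamma J+\delta\Psi_\ast$ is a Hamiltonian 2-form for all constants $\gamma,\delta$; in particular this establishes the ``if'' direction of Theorem \ref{hamiltonian-2-forms-theorem}. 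The main obstacle is the bookkeeping in evaluating $E_{\,\,\,\,k}^{ij}$ in the $(\xi,\eta)$ coordinates and then checking that, for \emph{every} admissible pair of functions $F(\xi),G(\eta)$ — including the exceptional metrics, where any enhancement of the space of Hamiltonian 2-forms occurs only in the non-separable $p\neq0$ sector and so does not affect this proposition — the $p=0$ solution space is exactly two-dimensional, with no metric-dependent extra branches.
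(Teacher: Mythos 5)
Your strategy is the same as the paper's: use the preceding discussion to reduce to the $p=0$ sector of \eqref{eq:mu-aux}, so that $\mu_\xi=\hmu_\xi(\xi)$, $\mu_\eta=\hmu_\eta(\eta)$, then impose the remaining components $E_1=E_2=0$ of \eqref{h2form-toric-ind} as ODEs, integrate, quotient by the gauge freedom \eqref{eq:gauge-mu-xieta}, and identify the surviving two-parameter family with $\gamma J+\delta\Psi_\ast$. The converse check that each $\Psi_\ast$ is Hamiltonian is also how the paper closes the loop on Theorem \ref{hamiltonian-2-forms-theorem}.

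However, as written this is a plan rather than a proof: everything that actually constitutes the content of the proposition is deferred with ``I expect''. Concretely, two things must be computed and are not. First, that $E_1=0$ and $E_2=0$ reduce to ODEs for $\hmu_\xi$ and $\hmu_\eta$ that are \emph{independent of} $F(\xi)$ and $G(\eta)$ — this is the nontrivial fact that makes the result hold for every separable metric, and the mechanism is not the separation-constant structure you anticipate. In the paper one finds, e.g., $\partial_\xi^2(\xi^2E_1)=-2\mu_\eta''(\eta)$ and $\partial_\eta^3(\xi^2E_2)=-2\xi\,\mu_\xi''''(\xi)$ in the CT case (and analogous identities obtained by differentiating $\Sigma^kE_a$ in the PT and OT cases), so $F$ and $G$ drop out entirely rather than entering a separated equation. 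Second, that after substituting the resulting polynomial ansatz back into the undifferentiated $E_1=E_2=0$ the coefficients are constrained so that, modulo gauge, exactly a two-dimensional space survives (e.g.\ $\mu_\xi=\gamma_2\xi^2+\gamma_3\xi^3$, $\mu_\eta=0$ for CT, with $\gamma_2$ the K\"ahler-form direction and $\gamma_3$ giving $\Psi_\ast$). You correctly flag both points as ``the main obstacle'', but flagging them is not the same as resolving them: the proposition \emph{is} the statement that this solution space is exactly two-dimensional for all admissible $F,G$, so the proof cannot rest on expecting it. To complete the argument you need to evaluate $E_1,E_2$ explicitly in each of the three cases, exhibit the metric-independent ODEs, solve them, and carry out the gauge fixing, as the paper does case by case.
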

\begin{proof} 
As we have already mentioned $J$ is always a Hamiltonian 2-form so we will focus on $\Psi_{\ast}$. We will look for solutions to \eqref{h2form-toric-ind} by examining each of the cases PT, CT and OT separately. We will also utilise the gauge transformations \eqref{eq:gauge-mu-xieta}. Further notice that since $p=0$ in \eqref{eq:mu-aux}, we have  ${\mu}_{\xi}=\mu_{\xi}(\xi)$ and ${\mu}_{\eta}=\mu_{\eta}(\eta)$.

For PT geometries we have 
\begin{equation}
0=E_{1}=-\mu_{\eta}''(\eta)\,,\qquad0=E_{2}=-\mu_{\xi}''(\xi)\,,
\end{equation}
with solution $\mu_{\xi}(\xi)=\gamma_{0}+\gamma_{1}\xi$ and $\mu_{\eta}(\eta)=\delta_{0}+\delta_{1}\eta$, where $\gamma_{0,1}, \delta_{0,1}$ are constants. The constant terms $\gamma_0, \delta_0$ can be fixed to zero using  \eqref{eq:gauge-mu-xieta}.

For CT geometries we have 
\begin{equation}
0=\partial_{\xi}^{2}(\xi^{2}E_{1})=-2\mu_{\eta}''(\eta)\,,\qquad0=\partial_{\eta}^{3}(\xi^{2}E_{2})=-2\xi\mu_{\xi}''''(\xi)\,,
\end{equation}
so $\mu_{\xi}(\xi)$ is a cubic polynomial and $\mu_{\eta}(\eta)$
is a linear one. Then $0=E_{1}=E_{2}$ further constrains the coefficients
of these polynomials such that $\mu_{\xi}(\xi)=\gamma_{1}\xi+\gamma_{2}\xi^{2}+\gamma_{3}\xi^{3}$
and $\mu_{\eta}(\eta)=\delta_{0}-\gamma_{1}\eta$, where $\gamma_i, \delta_0$ are constants. Using the gauge transformations \eqref{eq:gauge-mu-xieta}
we can fix  $\mu_{\xi}(\xi)=\gamma_{2}\xi^{2}+\gamma_{3}\xi^{3}$ and
$\mu_{\eta}(\eta)=0$. 

For OT geometries we have 
\begin{equation}
0=\partial_{\xi}^{3}\big((\xi-\eta)^{2}E_{1}\big)=2\mu_{\xi}''''(\xi)\,,\qquad0=\partial_{\eta}^{3}\big((\xi-\eta)^{2}E_{2}\big)=-2(\xi-\eta)\mu_{\eta}''''(\eta)\,,
\end{equation}
so both $\mu_{\xi}(\xi)$ and $\mu_{\eta}(\eta)$ are cubic polynomials.
Then $0=E_{1}=E_{2}$ further implies $\mu_{\xi}(\xi)=\gamma_{0}+\gamma_{1}\xi+\gamma_{2}\xi^{2}+\gamma_{3}\xi^{3}$
and $\mu_{\eta}(\eta)=-\gamma_{0}-\gamma_{1}\eta-\gamma_{2}\eta^{2}-\gamma_{3}\eta^{3}$, for constants $\gamma_i$.
Using the gauge transformations   \eqref{eq:gauge-mu-xieta} we can fix $\mu_{\xi}(\xi)=\gamma_{2}\xi^{2}+\gamma_{3}\xi^{3}$
and $\mu_{\eta}(\eta)=-\gamma_{2}\eta^{2}-\gamma_{3}\eta^{3}$ .

The resulting Hamiltonian 2-forms follow from Proposition \ref{sep-Lambda} and are given in Table \ref{table:sep-ham}.
\end{proof}

\section{Separable supersymmetric solutions}
\label{sec:separablesusy}

In this section we will introduce the concept of separable supersymmetric solutions to five-dimensional gauged supergravity.

\begin{definition}
\label{def:sep-sol}
A timelike supersymmetric toric solution to five-dimensional minimal gauged supergravity, possibly coupled to $n-1$ vector multiplets, is separable (or PT, CT, OT) if:
\begin{itemize}
\item  The toric K\"ahler base $(B, h, J)$ is separable (PT, CT, OT) with respect to orthogonal coordinates $(\xi, \eta)$ (Definition \ref{separable-def}).
\item  The magnetic fields $F^I$  are also separable (PT, CT, OT) with respect to the orthogonal coordinates $(\xi, \eta)$  (see Definition \ref{separable1form-def}). In minimal supergravity the magnetic field is completely determined by the Ricci form and so by Lemma \ref{sep-Ricci}  this condition is automatically satisfied.
\end{itemize}
\end{definition}

We will first investigate supersymmetric solutions that are timelike and separable outside a regular horizon with compact locally spherical cross-sections. We will find that the only allowed type of separable toric K\"ahler metric compatible with such a horizon is Calabi-toric.  Then, we will perform a detailed analysis of Calabi-toric supersymmetric solutions and prove uniqueness of the known black hole within this class.

 \subsection{Near-horizon analysis}\label{sec:no-gos}
We now  examine the constraints imposed by the  existence of a smooth horizon on timelike supersymmetric solutions with separable toric-K\"ahler base metrics.  A key point in our analysis is that the $\hat{\eta}$-dependence of the moment maps  to leading order in GNC takes the form
\begin{equation}\label{linearform}
x_i = \frac{(\text{linear in }\hat{\eta})}{\hat{H}(\hat{\eta})^{1/3}}\;  \lambda + O(\lambda^2)\,,
\end{equation}
as can be seen from \eqref{xi-NH-expl}, and the Gram matrix of Killing fields takes the form
\be
G_{ij} =  \frac{ (\text{quadratic or cubic in } \hat{\eta})}{\hat{H}(\hat{\eta})^{1/3}}\;  \lambda + O(\lambda^2)\,,  \label{Gij-eta-dep}
\ee
which follows from \eqref{Gdd-NH}.

 We will examine the three classes of separable metric in turn.
 \begin{lemma}\label{product-no-go}
Consider a  supersymmetric toric solution to STU supergravity that is timelike outside a compact horizon with (locally) $S^3$ cross-sections. Then it cannot have a PT K\"ahler base.
\end{lemma}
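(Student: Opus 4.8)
The plan is to assume, for contradiction, that the base is PT and then exploit the rigidity of the near-horizon expansion. The starting point is that for a PT base, by Proposition~\ref{prop_separable} and Table~\ref{table:sep-def}, the Jacobian is $\wrpp=N$ is constant (i.e. $\Sigma=1$), so $\det G_{ij}=N F(\xi) G(\eta)$, with $F$ a function of $\xi$ alone and $G$ a function of $\eta$ alone, and the moment maps are affine: $x_i = c_i + a_i\xi + \tilde a_i(\xi+\eta)$ with $\langle a,\tilde a\rangle=N\neq 0$. The key tension is that separability forces $x_i$ to be \emph{simultaneously} affine in a pair of orthogonal coordinates $(\xi,\eta)$, whereas the near-horizon data \eqref{xi-NH-expl} shows that to leading order in $\lambda$ the moment maps depend on the horizon angular coordinate $\hat\eta$ through the non-affine combination $(\text{linear in }\hat\eta)/\hat H(\hat\eta)^{1/3}$, where $\hat H$ is the cubic \eqref{eq:H-function}.

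First I would relate the abstract separability coordinates $(\xi,\eta)$ to the GNC data near the horizon. Since $\xi,\eta$ are $T$-invariant functions on $B$ and the horizon is a single point in symplectic coordinates with $x_i = \lambda Z_i^{(0)} + O(\lambda^2)$, one of $\xi,\eta$ must be an (angular) coordinate that survives in the near-horizon limit while the other degenerates as a radial coordinate $\lambda\to 0$; without loss of generality $\eta$ plays the role of the angular variable and is a function of $\hat\eta$ alone to leading order, while $\xi$ is a function of $\lambda$ (times possibly $\hat\eta$-dependent factors) vanishing at the horizon. Then PT separability, $\partial_\eta^2 x_i = 0$, says that $x_i$ is affine in $\eta$; comparing with \eqref{xi-NH-expl}, the leading $\lambda$-coefficient $\tfrac{\ell\kappa}{4\hat H(\hat\eta)^{1/3}}\tfrac{1\mp\hat\eta}{\mathcal A^2\ \text{or}\ \mathcal B^2}$ must be affine in whatever function of $\hat\eta$ equals $\eta$. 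Equivalently, using \eqref{detG}: $\det G_{ij}$ must factorise as $N F(\xi)G(\eta)$, but from \eqref{Gdd-NH} one computes $\det G_{ij} = (\ell\kappa/4)^2 \hat H(\hat\eta)^{-2/3}\,(1-\hat\eta^2)\Delta_1(\hat\eta)\,\lambda^2 + O(\lambda^3)$ (the $\hat\sigma^2+\hat\tau^2/((1-\hat\eta^2)\Delta_1)$ quadratic form has determinant proportional to $(1-\hat\eta^2)\Delta_1$), so to leading order $F(\xi)G(\eta)\propto \lambda^2\,\hat H(\hat\eta)^{-2/3}(1-\hat\eta^2)\Delta_1(\hat\eta)$. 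The $\lambda^2$ can be absorbed into $F(\xi)$ (consistent with $\xi\to 0$ at the horizon), leaving $G(\eta) \propto \hat H(\hat\eta)^{-2/3}(1-\hat\eta^2)\Delta_1(\hat\eta)$ as a function of $\hat\eta$, and simultaneously $\eta$ itself is fixed, up to affine transformations \eqref{separable-transformations}, by requiring $x_i$ in \eqref{xi-NH-expl} to be affine in $\eta$: from the two components, $\eta$ must be an affine function of $\hat\eta/\hat H(\hat\eta)^{1/3}$ and also of $1/\hat H(\hat\eta)^{1/3}$, forcing $\hat H(\hat\eta)^{1/3}$ itself to be affine in $\hat\eta$.

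The contradiction then comes from $\hat H$. Since $\hat H(\hat\eta)$ is the cubic \eqref{eq:H-function}, $\hat H^{1/3}$ being affine in $\hat\eta$ would force $\hat H(\hat\eta) = (\text{linear})^3$, i.e. a triple root: $\ell\Delta_2(\hat\eta)\zeta_I + \mathcal K_I$ must be proportional to a single common linear factor for all three $I=1,2,3$. Because $\Delta_2(\hat\eta)$ is a fixed nonconstant linear function of $\hat\eta$ (as $0<\mathcal A^2,\mathcal B^2<1$ and, generically, $\mathcal A^2\neq\mathcal B^2$; the degenerate $SU(2)\times U(1)$ locus $\mathcal A^2=\mathcal B^2$ can be treated by the same equality-of-roots argument), the three affine functions $\hat\eta\mapsto \ell\Delta_2(\hat\eta)\zeta_I+\mathcal K_I$ all have the same slope $\ell\Delta_2'\zeta_I$ only if $\zeta_1=\zeta_2=\zeta_3$, and then equality of their roots forces $\mathcal K_1=\mathcal K_2=\mathcal K_3$; but for STU supergravity $\zeta_I = \ell^{-1}$ for all $I$, so the slope condition is automatic, yet $\mathcal K_1=\mathcal K_2=\mathcal K_3$ together with the constraint \eqref{ck-constraint}, $C^{IJK}\zeta_I\zeta_J\mathcal K_K=0$, gives $\sum_I \mathcal K_I = 0$ after using $C_{IJK}=\delta$-type, hence $\mathcal K_I=0$; checking back, $\hat H(\hat\eta) = \Delta_2(\hat\eta)^3$ is then genuinely a cube — so one must be more careful. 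The resolution is that with $\mathcal K_I=0$ the would-be PT base is actually the degenerate/enhanced-symmetry locus and reconstructing $F(\xi)$ from $\det G$ and from $\cf = \Sigma^{-1}F(\xi)$ via \eqref{symplectic-calFG1} produces an overdetermined system: $\xi$ is forced to equal (up to affine maps) the radial variable $\lambda$, but then the $\hat\sigma^2$ versus $\hat\tau^2$ structure of $G_{ij}$ in \eqref{Gdd-NH} — which contains a relative $\hat\eta$-dependent weight $(1-\hat\eta^2)\Delta_1(\hat\eta)$ between $\sigma_\xi$ and $\sigma_\eta$ — cannot be matched by the PT form \eqref{Gij-sep}, in which the $\xi$-dependent and $\eta$-dependent blocks are $F(\xi)(a+\tilde a)_{(i}(a+\tilde a)_{j)}$ and $G(\eta)\tilde a_{(i}\tilde a_{j)}$ with $\Sigma=1$ and fixed constant vectors. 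Concretely, I would compare the $O(\lambda)$ coefficient of $G_{ij}$ from \eqref{Gdd-NH} with the PT expression \eqref{Gij-sep} evaluated on the near-horizon curve, and show that matching the $\hat\eta$-dependence of all three independent components $G_{11},G_{12},G_{22}$ against the two single-variable functions $F,G$ with constant coefficient vectors is impossible unless $\hat H$ is constant, which it is not.

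The main obstacle, and where the real work lies, is this last matching step: identifying \emph{which} invariant combination of $G_{ij}$ isolates $F(\xi)$ versus $G(\eta)$ in the PT case (the projections \eqref{proj-PT} are exactly the tool for this), then substituting the explicit near-horizon $G_{ij}$ from \eqref{Gdd-NH} and the coordinate change \eqref{xi-NH-expl}, and extracting from the resulting functional equations that $\hat H(\hat\eta)$ would have to be a perfect cube \emph{and} that this forces constancy of the linear data, contradicting $0<\mathcal A^2,\mathcal B^2<1$ and the scalar-positivity bound \eqref{c1c2-bounds}. I would organise it by using \eqref{proj-PT}: the quantity $G_{ij}\tilde a_k\tilde a_l\epsilon^{ik}\epsilon^{jl}/\langle\tilde a,a\rangle^2$ must be a function of $\xi$ alone, hence (to leading order in $\lambda$) independent of $\hat\eta$ after dividing by the known $\lambda$-power; plugging in \eqref{Gdd-NH} shows this ratio is $\propto \hat H(\hat\eta)^{-1/3}\times(\text{something quadratic in }\hat\eta)$ generically, which is $\hat\eta$-dependent — contradiction — unless the numerator conspires to cancel $\hat H^{-1/3}$, which by degree counting is impossible for a genuine cubic $\hat H$. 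This yields the desired contradiction and proves the lemma.
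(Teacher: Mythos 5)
There is a genuine gap, and it sits right at the start of your argument. You assume that one of the separability coordinates, say $\eta$, ``survives in the near-horizon limit'' as an angular coordinate, i.e.\ remains a nonconstant function of $\hat\eta$ at $\lambda=0$, while $\xi$ degenerates radially. This is false in the PT case. Since $b_i=0$, the moment maps $x_i=c_i+a_i\xi+\tilde a_i(\xi+\eta)$ define an \emph{invertible affine} map between $(x_1,x_2)$ and $(\xi,\eta)$, and the horizon is a single point $x_i\to 0$ in symplectic coordinates; hence \emph{both} $\xi$ and $\eta$ tend to constants $\xi_0,\eta_0$ there, with corrections of the form $\xi_1(\hat\eta)\lambda/\hat H(\hat\eta)^{1/3}$, $\eta_1(\hat\eta)\lambda/\hat H(\hat\eta)^{1/3}$ where $\xi_1,\eta_1$ are \emph{linear} in $\hat\eta$ (this follows directly from \eqref{xi-NH-expl} and affineness of the inverse map). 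The behaviour you describe --- an angular coordinate given by a ratio with a nontrivial limit at the horizon --- is what happens for a Calabi-toric base, not a product-toric one. Consequently the entire middle of your argument (requiring $x_i$ to be affine in a function of $\hat\eta$, forcing $\hat H^{1/3}$ to be affine in $\hat\eta$, the triple-root analysis of $\hat H$, and the conclusion $\mathcal K_I=0$) is built on a premise that does not hold, and you yourself notice it fails to close (``so one must be more careful'') before pivoting to an unproven ``overdetermined system'' claim.

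Your final paragraph does reach for the correct tool, namely the projections \eqref{proj-PT}, but the contradiction you state is not the right one: the leading $O(\lambda)$ coefficient of $F(\xi)$ need not be $\hat\eta$-independent, because $\xi$ itself varies with $\hat\eta$ at order $\lambda$. The correct chain is: from \eqref{proj-PT} and $G_{ij}=O(\lambda)$ one gets $F(\xi)=O(\lambda)$ and $G(\eta)=O(\lambda)$, hence $F(\xi_0)=G(\eta_0)=0$; then expanding \eqref{Gij-sep} to first order gives
\begin{equation*}
G_{ij}=\frac{\lambda}{\hat H(\hat\eta)^{1/3}}\Big(F'(\xi_0)(a+\tilde a)_i(a+\tilde a)_j\,\xi_1(\hat\eta)+G'(\eta_0)\,\tilde a_i\tilde a_j\,\eta_1(\hat\eta)\Big)+O(\lambda^2)\,,
\end{equation*}
whose bracket is \emph{linear} in $\hat\eta$ with constant coefficient matrices, contradicting the quadratic/cubic $\hat\eta$-dependence of the numerator in \eqref{Gij-eta-dep}. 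You never establish the vanishing $F(\xi_0)=G(\eta_0)=0$, which is essential for this comparison, and without the correct leading-order behaviour of $(\xi,\eta)$ the degree count you invoke at the end cannot be carried out. The proof therefore needs to be restructured around the affineness of the PT coordinate change and the resulting linear-versus-quadratic mismatch, rather than around properties of $\hat H^{1/3}$.
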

\begin{proof}
Recall that this case corresponds to $b=0$ (see Table \ref{table:sep-def}). Then inverting \eqref{moments-separable} gives
\begin{equation}\label{xieta-prod}
\xi=\frac{\langle x-c,\tilde{a}\rangle}{\langle a,\tilde{a}\rangle}\,,\qquad\eta=-\frac{\langle x-c,a+\tilde{a}\rangle}{\langle a,\tilde{a}\rangle}\,.
\end{equation}
Therefore  \eqref{linearform} implies
\begin{equation}\label{standard-expansion}
\xi=\xi_{0}+\frac{\xi_{1}(\hat{\eta})}{\hat{H}(\hat{\eta})^{1/3}}\lambda+O(\lambda^{2})\,,\qquad\eta=\eta_{0}+\frac{\eta_{1}(\hat{\eta})}{\hat{H}(\hat{\eta})^{1/3}}\lambda+O(\lambda^{2})\,,
\end{equation}
where $\xi_{0}=\langle \tilde{a},c\rangle/\langle a,\tilde{a}\rangle$ , $\eta_{0}=-\langle a+\tilde{a},c\rangle/\langle a,\tilde{a}\rangle$ are constants and $\xi_{1}(\hat{\eta})$, $\eta_{1}(\hat{\eta})$ are linear functions of $\hat\eta$. 
Next, from \eqref{proj-PT} and the fact that \eqref{Gdd-NH} implies $G_{ij}=O(\lambda)$, we learn that $F(\xi)=O(\lambda)$ and $G(\eta)=O(\lambda)$. Therefore, combining with \eqref{standard-expansion} we deduce $F(\xi_{0})=G(\eta_{0})=0$. Expanding \eqref{Gij-sep} to linear order in $\lambda$ we find,
\begin{equation}
G_{ij}=\frac{\lambda}{\hat{H}(\hat{\eta})^{1/3}}\Big(F'(\xi_{0})(a_{i}+\tilde a_i)(a_{j}+\tilde a_j) \xi_{1}(\hat{\eta})
+G'(\eta_{0})\tilde a_{i} \tilde a_{j}\eta_{1}(\hat{\eta})\Big)+O(\lambda^{2})\,.
\end{equation}
The factor in the brackets is a linear function of $\hat\eta$ which contradicts the explicit form of $G_{ij}$ given by \eqref{Gij-eta-dep}.
\end{proof}
 
 \begin{lemma}\label{orthotoric-no-go}
 Consider a  supersymmetric toric solution to STU supergravity that is timelike outside a compact horizon with (locally) $S^3$ cross-sections. Then it cannot have an orthotoric K\"ahler base.
\end{lemma}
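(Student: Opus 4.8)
The plan is to mirror the PT argument of Lemma \ref{product-no-go}, using the explicit near-horizon toric data. For the OT canonical choice (Table \ref{table:sep-def}) we have $a_i=0$, so by \eqref{moments-separable} the moment maps are $x_i = c_i + \tilde a_i(\xi+\eta) + b_i\xi\eta$, and since $\langle\tilde a,b\rangle = N\neq 0$ this inverts for the elementary symmetric functions of $\xi,\eta$,
\[
\xi+\eta = \frac{\langle x-c,b\rangle}{N}\,,\qquad \xi\eta = -\frac{\langle x-c,\tilde a\rangle}{N}\,,
\]
so that $\xi$ and $\eta$ are the roots of a monic quadratic whose coefficients are affine functions of the $x_i$.

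First I would treat the generic case in which $\xi\to\xi_0$ and $\eta\to\eta_0$ with $\xi_0\neq\eta_0$ at the horizon. Near the horizon $x_i\to 0$ with the leading behaviour \eqref{linearform}, so the two symmetric functions above are constants plus terms of the form $(\text{linear in }\hat{\eta})\,\hat{H}(\hat{\eta})^{-1/3}\lambda + O(\lambda^2)$; since the discriminant of the quadratic is nonzero at $\lambda=0$, solving for the individual roots gives the standard expansion $\xi = \xi_0 + \xi_1(\hat{\eta})\hat{H}(\hat{\eta})^{-1/3}\lambda + O(\lambda^2)$ and $\eta = \eta_0 + \eta_1(\hat{\eta})\hat{H}(\hat{\eta})^{-1/3}\lambda + O(\lambda^2)$, with $\xi_1,\eta_1$ linear in $\hat{\eta}$. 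From the OT projections \eqref{proj-OT}, together with $G_{ij}=O(\lambda)$ from \eqref{Gdd-NH} and $\xi-\eta\to\xi_0-\eta_0\neq 0$, one reads off $F(\xi)=O(\lambda)$ and $G(\eta)=O(\lambda)$, hence $F(\xi_0)=G(\eta_0)=0$. Expanding \eqref{Gij-sep} (with $a_i=0$) to linear order in $\lambda$ then gives
\[
G_{ij} = \frac{\lambda}{(\xi_0-\eta_0)\,\hat{H}(\hat{\eta})^{1/3}}\Big(F'(\xi_0)\,\xi_1(\hat{\eta})\,(\tilde a+b\eta_0)_i(\tilde a+b\eta_0)_j + G'(\eta_0)\,\eta_1(\hat{\eta})\,(\tilde a+b\xi_0)_i(\tilde a+b\xi_0)_j\Big)+O(\lambda^2)\,,
\]
whose bracket is a matrix that is \emph{linear} in $\hat{\eta}$. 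This contradicts \eqref{Gij-eta-dep}, i.e. the explicit form \eqref{Gdd-NH} in which the $\hat{\eta}$-numerator of $G_{ij}$ is genuinely quadratic or cubic.

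The main obstacle is the remaining degenerate configuration $\xi_0=\eta_0$, in which the OT orthogonal coordinate system breaks down at the horizon ($\Sigma=\xi-\eta\to 0$). Here the discriminant of the quadratic vanishes at $\lambda=0$, so $\xi-\eta$, and likewise each of $\xi-\xi_0$ and $\eta-\eta_0$, behaves like $\lambda^{1/2}$ times the square root of a function of $\hat{\eta}$. I would feed this fractional-power behaviour back into \eqref{proj-OT} and \eqref{Gij-sep} and again derive an incompatibility with the smoothness of the near-horizon data and the explicit $\lambda$- and $\hat{\eta}$-dependence of $G_{ij}$ in \eqref{Gdd-NH}; alternatively, one can invoke the fact that on a genuine OT K\"ahler surface $\xi$ and $\eta$ take values in disjoint intervals, so $\xi_0=\eta_0$ would place the horizon on the singular diagonal. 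Apart from this last case, the argument is entirely parallel to that of Lemma \ref{product-no-go}.
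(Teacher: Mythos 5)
Your treatment of the generic case $\xi_0\neq\eta_0$ is correct and is essentially identical to the paper's: same inversion for the symmetric functions, same use of \eqref{proj-OT} and \eqref{Gij-sep}, same contradiction with the quadratic/cubic $\hat\eta$-dependence in \eqref{Gij-eta-dep}.

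The degenerate case $\xi_0=\eta_0$ is where your proposal has a genuine gap. Your primary suggestion ("feed the fractional-power behaviour back in and derive an incompatibility") is a plan rather than an argument: the contradiction in this case is not of the same type as before (there is no linear-in-$\hat\eta$ bracket to compare), and it is not obvious what quantity to estimate. The paper's actual argument is the following chain: from \eqref{Gij-sep} one has the projection \eqref{orthotoric-proj}, $G_{ij}\epsilon^{ik}\epsilon^{jl}b_kb_l/\langle\tilde a,b\rangle^2=(F(\xi)+G(\eta))/(\xi-\eta)$, so \eqref{Gij-eta-dep} forces $(F(\xi)+G(\eta))/(\xi-\eta)=O(\lambda)$; since $F(\xi)/(\xi-\eta)$ and $G(\eta)/(\xi-\eta)$ are separately non-negative (they are squared-norm components of the Riemannian metric), each is $O(\lambda)$; combining with $\xi-\eta=O(\sqrt\lambda)$ gives $F(\xi)=O(\lambda^{3/2})$ and $G(\eta)=O(\lambda^{3/2})$; finally $\det G_{ij}=NF(\xi)G(\eta)=O(\lambda^3)$ by \eqref{detG}, contradicting $\det G_{ij}=O(\lambda^2)$ with nonvanishing prefactor from \eqref{Gdd-NH}. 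None of these steps appears in your sketch, and the decisive one (passing through the determinant) is not something "parallel to the PT case".

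Your fallback argument is also flawed as stated. It is true that Riemannian signature forces $\xi\neq\eta$ on the open set where the orthotoric coordinates are valid, but the horizon is a \emph{limit point} where the base metric degenerates anyway ($G_{ij}\to 0$, the horizon is the single point $x_i=0$ in moment-map coordinates). The locus $\xi=\eta$ on an orthotoric surface need not be a genuine singularity — it is typically a coordinate degeneration, and indeed the known smooth orthotoric supersymmetric soliton has such a locus. So "the horizon would sit on the singular diagonal" does not by itself yield a contradiction; the quantitative estimate above is required to rule this case out.
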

\begin{proof}
Recall we can set $a_{i}=0$ in \eqref{moments-separable} (see Table \ref{table:sep-def}). We find that $\xi$ and $\eta$ are given by the solutions of the quadratic equation
\begin{equation}
\langle\tilde{a},b\rangle\chi^{2}+\langle b,x-c\rangle\chi+\langle\tilde{a},x-c\rangle=0\,.
\end{equation}
Thus, without loss of generality we can write,
\begin{equation}\label{xi-eta-solution}
\xi=\chi_{+}\,,\qquad\eta=\chi_{-}\,,\qquad\text{with}\qquad\chi_{\pm}=\frac{-\langle b,x-c\rangle\pm\sqrt{\langle b,x-c\rangle^{2}-4\langle\tilde{a},b\rangle\langle\tilde{a},x-c\rangle}}{2\langle\tilde{a},b\rangle}\,.
\end{equation}
From \eqref{xi-eta-solution} it is clear that the horizon is mapped to a point $(\xi_{0},\eta_{0})=(\chi_{+}^{0},\chi_{-}^{0})$ with 
\begin{equation}
\chi_{\pm}^{0} :=\frac{\langle b,c\rangle\pm\sqrt{\langle b,c\rangle^{2}+4\langle\tilde{a},b\rangle\langle\tilde{a},c\rangle}}{2\langle\tilde{a},b\rangle}\,, 
\end{equation}
The analysis splits into two cases.

Let us first consider the case where the discriminant is nonvanishing, $\langle b,c\rangle^{2}+4\langle\tilde{a},b\rangle\langle\tilde{a},c\rangle\protect\neq0$, or equivalently $\xi_{0}\neq\eta_{0}$. In this case the expansions of $\xi$ and $\eta$ in $\lambda$ are as in \eqref{standard-expansion} where again $\xi_{1}(\hat{\eta})$, $\eta_{1}(\hat{\eta})$ are again linear due to \eqref{linearform}. Since $\xi_{0}\neq\eta_{0}$, from \eqref{proj-OT} and \eqref{Gij-eta-dep} we infer $F(\xi)=O(\lambda)$ and $G(\eta)=O(\lambda)$, so 
\begin{equation}
F(\xi_{0})=G(\eta_{0})=0\,.
\end{equation}
We then find that \eqref{Gij-sep} implies
\begin{equation}
G_{ij}=\frac{\lambda}{\hat H(\hat\eta)^{1/3}}\Big(\frac{F'(\xi_{0})}{\xi_{0}-\eta_{0}}(\tilde{a}_{i}+b_{i}\eta_{0})(\tilde{a}_{j}+b_{j}\eta_{0})\xi_{1}(\hat{\eta})+\frac{G'(\eta_{0})}{\xi_{0}-\eta_{0}}(\tilde{a}_{i}+b_{i}\xi_{0})(\tilde{a}_{j}+b_{j}\xi_{0})\eta_{1}(\hat{\eta})\Big)+O(\lambda^{2})\,,
\end{equation}
which is incompatible with \eqref{Gij-eta-dep} since the factor in the brackets is linear in $\hat\eta$.

We now turn to the case with vanishing discriminant, $\langle b,c\rangle^{2}+4\langle\tilde{a},b\rangle\langle\tilde{a},c\rangle\protect=0$, or equivalently $\xi_{0}=\eta_{0}$.  
The near-horizon expansions of \eqref{xi-eta-solution} that follow from \eqref{linearform}  are now of the form 
\begin{equation}\label{xieta-ortho-special}
\xi=\xi_{0}+\xi_{1/2}(\hat{\eta})\sqrt{\lambda}+\xi_{1}(\hat{\eta})\lambda+O(\lambda^{3/2})\,,\qquad\eta=\eta_{0}+\eta_{1/2}(\hat{\eta})\sqrt{\lambda}+\eta_{1}(\hat{\eta})\lambda+O(\lambda^{3/2})\, , 
\end{equation}
for some functions $\xi_{1/2}(\hat\eta)$ etc., whose explicit form we will not need.  Next, it  is useful to note that for an OT metric \eqref{Gij-sep} implies that
\begin{equation}\label{orthotoric-proj}
\frac{G_{ij}\epsilon^{ik}\epsilon^{jl}b_{k}b_{l}}{\langle\tilde{a},b\rangle^{2}}=\frac{F(\xi)+G(\eta)}{\xi-\eta}\,.  
\end{equation}
Therefore, the near-horizon behaviour  \eqref{Gij-eta-dep} implies that both $F(\xi)/(\xi-\eta)= O(\lambda)$ and   $G(\eta)/(\xi-\eta)= O(\lambda)$ (since both of these are non-negative functions).  Furthermore, \eqref{xieta-ortho-special} implies $\xi-\eta= O(\sqrt{\lambda})$ so we deduce that in fact $F(\xi)=O(\lambda^{3/2})$ and $G(\eta)=O(\lambda^{3/2})$.  In turn, using \eqref{detG} this implies $\det G_{ij}= O(\lambda^3)$ which contradicts the form \eqref{Gij-eta-dep} since the quadratic/cubic prefactor never vanishes identically.
\end{proof}

We pause to emphasise that both Lemma \ref{product-no-go} and Lemma \ref{orthotoric-no-go} also apply to minimal gauged supergravity since this is a consistent truncation of  the STU theory. In fact, due to the near-horizon uniqueness theorem in the minimal theory~\cite{Kunduri:2006uh}, both of these lemmas hold under the weaker hypothesis that the cross-sections are compact (since they have to be locally $S^3$ in this theory).  Furthermore, in our previous paper, we showed that in the minimal theory for any solution of this type with a Calabi-toric base, the solution must be locally isometric to the CCLP black hole, see~\cite[Theorem 1]{Lucietti:2022fqj}. We have therefore now established Theorem \ref{minimal-theorem}.   

It remains to study the near-horizon form of such solutions with a Calabi-toric base in the STU theory. From Proposition \ref{prop_separable}, we can write any Calabi-toric surface as 
\begin{align}
\label{CT}
 h &= \rho \left( \frac{\td\rho^2}{F(\rho)} + \frac{\td \eta^2}{G(\eta)} \right) + \frac{F(\rho)}{\rho} (\td \psi+\eta \td \varphi)^2+ \rho G(\eta) \td \varphi^2  \\
 J&= \td (\rho  (\td \psi+\eta \td \varphi))  \; ,
\end{align}
where in order to be consistent with the minimal theory we have set $\xi=\rho$.  The canonical choice $\tilde{a}_{i}=0$ breaks the shift-freedom of $\xi$ in \eqref{separable-transformations} and the residual transformations with $C_{\xi}=0$ act in the coordinates as
\begin{equation}\label{Calabi-symmetries}
\rho\to K_{\rho}\rho\,,\qquad\psi\to K_{\rho}^{-1}(\psi-C_{\eta}K_{\eta}^{-1}\varphi)\,,\qquad\eta\to K_{\eta}\eta+C_{\eta}\,,\qquad\varphi\to K_{\rho}^{-1}K_{\eta}^{-1}\varphi\,,
\end{equation}
on the functions $F(\rho), G(\eta)$ as
\begin{equation}
F(\rho)\to K_{\rho}^{3}F(\rho)\,,\qquad G(\eta)\to K_{\rho}K_{\eta}^{2}G(\eta)\,. 
\end{equation}
and on the constants $a_{i}, b_{i}$ as
 \begin{equation}\label{ab-transform}
a_{i}\to K_{\rho}^{-1}(a_{i}-C_{\eta}K_{\eta}^{-1}b_{i})\,,\qquad b_{i}\to K_{\rho}^{-1}K_{\eta}^{-1}b_{i}\,.
\end{equation}
These freedoms in the choice of Calabi-coordinates will be useful in what follows. 

\begin{lemma}\label{Calabi-horizon}
 Consider a  supersymmetric toric solution to STU supergravity that is timelike outside a compact horizon with (locally) $S^3$ cross-sections.  If the K\"ahler base is Calabi-toric, then near the horizon,  Calabi coordinates $(\rho, \eta)$ are related to GNC $(\lambda, \hat\eta)$  by,
\begin{equation}\label{rhoeta-NH}
\rho=\frac{\ell\kappa}{4\hat{H}(\hat{\eta})^{1/3}}\lambda+O(\lambda^{2})\,,\qquad\eta=\hat{\eta}+O(\lambda)\,,
\end{equation}
where $H(\hat\eta)$ is given by \eqref{eq:H-function}, so in particular the horizon is at $\rho=0$. Furthermore, we can always choose Calabi coordinates such that,
\begin{equation}\label{eq:FG-Aux}
F(\rho)=\rho^{2}+O(\rho^3)\,,\qquad G(\eta)=(1-\eta^{2})\Delta_{1}(\eta)\, ,
\end{equation}
where the function $\Delta_1(\eta)$ is given by \eqref{Delta-functions}.
\end{lemma}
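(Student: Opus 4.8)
The plan is to combine the explicit near-horizon data of Section~\ref{sec:NH} with the Calabi-toric normal form of Proposition~\ref{prop_separable}, and then exhaust the residual coordinate freedom \eqref{Calabi-symmetries}--\eqref{ab-transform}. Since Lemmas~\ref{product-no-go} and \ref{orthotoric-no-go} exclude the PT and OT cases, the base is necessarily Calabi-toric, so we may take $h,J$ as in \eqref{CT} with the canonical choice $\tilde a_i=0$ and (absorbing the constants $c_i$ by a K\"ahler transformation) $x_i=\rho\,(a_i+b_i\eta)$, where $\langle a,b\rangle=N\neq 0$ so that $a,b$ is a basis. Inverting gives $\rho=\langle x,b\rangle/\langle a,b\rangle$ and $\rho\eta=\langle a,x\rangle/\langle a,b\rangle$. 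Inserting the GNC expansion \eqref{xi-NH-expl}, namely $x_i=\tfrac{\ell\kappa}{4\hat H(\hat\eta)^{1/3}}\,w_i\,\lambda+O(\lambda^2)$ with $w_1=\tfrac{1-\hat\eta}{\ca^2}$, $w_2=\tfrac{1+\hat\eta}{\cb^2}$, produces $\rho=\rho_1(\hat\eta)\,\lambda+O(\lambda^2)$ and $\eta=\eta_0(\hat\eta)+O(\lambda)$ with
\[
\rho_1=\tfrac{\ell\kappa}{4\hat H(\hat\eta)^{1/3}}\,c,\qquad \eta_0=\frac{\langle a,w\rangle}{\langle w,b\rangle},\qquad c:=\frac{\langle w,b\rangle}{\langle a,b\rangle},
\]
together with the identity $w_i=c\,(a+b\,\eta_0)_i$. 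Since $\eta$ ranges over a bounded interval, $x\to 0$ is equivalent to $\rho\to 0$, so the horizon lies at $\rho=0$.

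The crux is to match the two expressions for the Gram matrix $G_{ij}$ at order $\lambda$. On one side, \eqref{Gdd-NH} gives $G_{ij}=\tfrac{\ell\kappa}{4\hat H(\hat\eta)^{1/3}}\big(w_iw_j+(1-\hat\eta^2)\Delta_1(\hat\eta)\,\tilde w_i\tilde w_j\big)\lambda+O(\lambda^2)$ with $\tilde w=(\ca^{-2},-\cb^{-2})$; on the other, \eqref{Gij-sep} with $\tilde a_i=0$ reads $G_{ij}=\tfrac{F(\rho)}{\rho}(a+b\eta)_i(a+b\eta)_j+\rho\,G(\eta)\,b_ib_j$. The projection \eqref{proj-CT} together with $G_{ij}=O(\lambda)$ forces $F(\rho)=O(\lambda^2)$, hence $F(\rho)=f_2\rho^2+O(\rho^3)$ with $f_2=\tfrac12 F''(0)$; substituting, and using $w_i=c(a+b\eta_0)_i$ and $\rho_1=\tfrac{\ell\kappa}{4\hat H(\hat\eta)^{1/3}}c$, the order-$\lambda$ equality collapses to the symmetric $2\times2$ identity
\[
\rho_1(f_2-c)(a+b\eta_0)_i(a+b\eta_0)_j+\rho_1\,G(\eta_0)\,b_ib_j=\tfrac{\ell\kappa}{4\hat H(\hat\eta)^{1/3}}(1-\hat\eta^2)\Delta_1(\hat\eta)\,\tilde w_i\tilde w_j .
\]
Expanding $\tilde w$ in the basis $\{a+b\eta_0,\,b\}$ and using that the right-hand side is a positive multiple of $\tilde w_i\tilde w_j$ for $\hat\eta\in(-1,1)$ while $G(\eta_0)>0$ there, the off-diagonal component forces $\tilde w\parallel b$, and then the diagonal components give $f_2=c$ and $\rho_1\,G(\eta_0)\,s^2=\tfrac{\ell\kappa}{4\hat H(\hat\eta)^{1/3}}(1-\hat\eta^2)\Delta_1(\hat\eta)$, where $b=s\,\tilde w$. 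Since $\tilde w$ is a fixed vector, $\tilde w\parallel b$ is the single relation $\ca^2 b_1+\cb^2 b_2=0$; equivalently the $\hat\eta$-coefficient of $\langle w,b\rangle$ vanishes, so $c$ is constant and $\eta_0$ is \emph{affine} in $\hat\eta$. (The same relation drops out of the explicit singular form of the symplectic potential at the end of Section~\ref{sec:NH}: substituting $x_i=\rho(a_i+b_i\eta)$, the $\rho\log\rho$ term acquires the coefficient $\tfrac12\big(\ca^2 a_1+\cb^2 a_2+(\ca^2 b_1+\cb^2 b_2)\eta\big)$, which must be $\eta$-independent.)

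Finally we normalise. Under \eqref{Calabi-symmetries}--\eqref{ab-transform} one has $c\mapsto K_\rho c$, $s\mapsto K_\rho^{-1}K_\eta^{-1}s$, the intercept of the affine $\eta_0$ shifts through $C_\eta$, and $F\mapsto K_\rho^3 F$; moreover $c\neq0$, since $\langle a,b\rangle\neq0$ together with $b\parallel\tilde w$ rules out $\ca^2 a_1+\cb^2 a_2=0$, and $\ca^2 b_1+\cb^2 b_2=0$ makes these normalisations mutually compatible. Choosing $K_\rho$ so that $c=1$, then $K_\eta$ so that $s=-1$, then $C_\eta$ so that the intercept of $\eta_0$ vanishes, yields $\eta_0(\hat\eta)=\hat\eta$ and $\rho_1=\tfrac{\ell\kappa}{4\hat H(\hat\eta)^{1/3}}$, i.e.\ \eqref{rhoeta-NH}; moreover $f_2=c=1$ gives $F(\rho)=\rho^2+O(\rho^3)$, and the diagonal relation with $\rho_1=\tfrac{\ell\kappa}{4\hat H(\hat\eta)^{1/3}}$, $s^2=1$, $\eta_0=\hat\eta$ gives $G(\eta)=(1-\eta^2)\Delta_1(\eta)$, which is \eqref{eq:FG-Aux}. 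The main obstacle is the Gram-matrix matching step: one must verify carefully that the rank-one structure of the near-horizon $G_{ij}$ (inherited from the near-horizon $1$-form \eqref{Z0}) forces \emph{both} $\tilde w\parallel b$ and $f_2=c$, rather than leaving room for a genuinely $\hat\eta$-dependent $c$, and that smoothness of the horizon upgrades $F(\rho)=O(\rho^2)$ to the stated expansion $F(\rho)=\rho^2+O(\rho^3)$; the bookkeeping of the three coordinate parameters against the three normalisations also needs care.
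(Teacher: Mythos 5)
Your second half — matching the order-$\lambda$ Gram matrix from \eqref{Gij-sep} against \eqref{Gdd-NH} and then exhausting the residual freedom \eqref{ab-transform} — is essentially the paper's own route and is correct: the rank-one decomposition of $\tilde w$ in the basis $\{a+b\eta_0,\,b\}$ does force $\tilde w\parallel b$ (since $w$ is never parallel to $\tilde w$ for $\hat\eta\in[-1,1]$), whence $\langle w,b\rangle$ loses its $\hat\eta$-dependence, $c$ is constant, and the three normalisations $c=1$, $s=-1$, vanishing intercept of $\eta_0$ are achievable with $K_\rho,K_\eta,C_\eta$ (the slope of $\eta_0$ is automatically $1$ once $c=1$). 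This reproduces \eqref{matching-conditions}--\eqref{F2Gcub}, so the worry you flag at the end about an $\hat\eta$-dependent $c$ is resolvable.

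The genuine gap is at the very first step: you cannot ``absorb the constants $c_i$ by a K\"ahler transformation'' and simultaneously use the expansion \eqref{xi-NH-expl}. That expansion holds only in the normalisation fixed in Section \ref{sec:NH}, where the integration constant in \eqref{xiZ0} is chosen so that the horizon sits at the symplectic origin $x_i=0$; a K\"ahler transformation $x_i\to x_i-c_i$ moves the horizon to $x_i=-c_i\neq 0$ and invalidates \eqref{xi-NH-expl}. Nor can $c_i$ be removed by a change of Calabi coordinates: preserving the canonical choice $\tilde a_i=0$ forces $C_\xi=0$ in \eqref{separable-transformations}, and then \eqref{constants-transformation} leaves $c_i$ invariant. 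So the claim $c_i=0$ --- equivalently, that the horizon (the symplectic origin) coincides with the Calabi locus $\rho=0$ --- is precisely the non-trivial first assertion of the lemma and must be proved, not assumed. The paper does this by contradiction: if $c_i\neq 0$, then $\langle b,c\rangle=0$ would make $\eta$ diverge at the horizon (contradicting $\eta=O(1)$, which follows from \eqref{proj-CT} and $G_{ij}=O(\lambda)$), so $\langle b,c\rangle\neq 0$ and the horizon maps to a point $(\rho_0,\eta_0)$ with $\rho_0\neq 0$, forcing $F(\rho_0)=G(\eta_0)=0$; expanding \eqref{Gij-sep} to order $\lambda$ then yields a Gram matrix with only \emph{linear} $\hat\eta$-dependence, contradicting the quadratic/cubic structure \eqref{Gij-eta-dep}. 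Without some version of this argument your proof does not establish that the horizon is at $\rho=0$, and the subsequent step $F(\rho)=f_2\rho^2+O(\rho^3)$ (which presupposes that $F$ and $F'$ vanish at the horizon value of $\rho$) is unjustified.
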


\begin{proof}
This proceeds in an identical fashion to the analogous lemma in the minimal theory~\cite{Lucietti:2022fqj}. For completeness we repeat it here.  Recall that for a CT base we may always set  $\tilde{a}_{i}=0$ in which case $\langle a, b \rangle \neq 0$, see Table \ref{table:sep-def}. Hence inverting \eqref{moments-separable}  we obtain
\begin{equation}\label{Calabi-coords}
\rho=\frac{\langle x-c,b\rangle}{\langle a,b\rangle}\,,\qquad\eta=-\frac{\langle x-c,a\rangle}{\langle x-c,b\rangle}\, .
\end{equation}
Therefore, the near-horizon expansion \eqref{Gij-eta-dep} together with  \eqref{proj-CT} imply that  as $\lambda\to 0$,
\begin{equation}\label{orders1}
\eta=O(1)\,,\qquad\frac{F(\rho)}{\rho}=O(\lambda)\,,\qquad\rho G(\eta)=O(\lambda)\,.
\end{equation}
These relations imply that $c_{i}=0$. 

To see this, suppose that $c_{i}$ does not vanish identically.  If $\langle b, c \rangle=0$ then $c_i$ is a nonzero multiple of $b_i$ and since $\langle a, b \rangle \neq 0$ it follows that $\langle a, c \rangle \neq 0$; then \eqref{Calabi-coords} and \eqref{linearform} imply that $\eta$ is singular at the horizon contradicting \eqref{orders1}.  We deduce that $\langle b, c \rangle \neq 0$.  Then, expanding \eqref{Calabi-coords} using \eqref{xi-NH-expl} we find,
\begin{align}\label{rhoeta-fake}
\rho & =\frac{\langle b,c\rangle}{\langle a,b\rangle}+\frac{1}{\langle a,b\rangle}\Big(\frac{1-\hat{\eta}}{\ca^{2}}b_{2}-\frac{1+\hat{\eta}}{\cb^{2}}b_{1}\Big)\frac{\ell\kappa\lambda}{4\hat{H}(\hat{\eta})^{1/3}}+O(\lambda^{2})\,,\nonumber \\
\eta & =-\frac{\langle a,c\rangle}{\langle b,c\rangle}+\frac{\langle a,b\rangle}{\langle b,c\rangle^{2}}\Big(\frac{1-\hat{\eta}}{\ca^{2}}c_{2}-\frac{1+\hat{\eta}}{\cb^{2}}c_{1}\Big)\frac{\ell\kappa\lambda}{4\hat{H}(\hat{\eta})^{1/3}}+O(\lambda^{2})\, ,
\end{align}
where we have used that $a_i-b_i\langle a, c \rangle/\langle b, c \rangle=- c_i \langle a, b \rangle /\langle b, c \rangle$, 
showing that in Calabi-coodinates  the horizon maps to a point $(\rho,\eta)=(\rho_{0}, \eta_{0})$ where $\rho_0\neq 0, \eta_0\neq 0$. From \eqref{orders1} it then follows that we have $F(\rho_{0})=G(\eta_{0})=0$ and expanding \eqref{Gij-sep} to linear order in $\lambda$ we find 
\begin{align}
G_{ij}  &=\Bigg[\frac{G'(\eta_{0})b_{i}b_{j}}{\langle b,c\rangle}\Big(\frac{1-\hat{\eta}}{\ca^{2}}c_{2}-\frac{1+\hat{\eta}}{\cb^{2}}c_{1}\Big) \Bigg.  \nonumber \\  &\qquad \Bigg.  +\frac{\langle a,b\rangle^{2}F'(\rho_{0})c_{i}c_{j}}{\langle b,c\rangle^{3}}\Big(\frac{1-\hat{\eta}}{\ca^{2}}b_{2}-\frac{1+\hat{\eta}}{\cb^{2}}b_{1}\Big)\Bigg]\frac{\ell\kappa\lambda}{4\hat{H}(\hat{\eta})^{1/3}}+O(\lambda^{2})\, .
\end{align} 
The term in square brackets has linear $\hat{\eta}$-dependence which contradicts  \eqref{Gij-eta-dep}. We  deduce that our assumption that $c_i \neq 0$ must be false and hence
\begin{equation}\label{eq:c-zero}
c_{i}=0 , 
\end{equation}
as claimed. 

Now, the relation between the Calabi coordinates $(\rho,\eta)$ and the GNC $(\lambda,\hat\eta)$ near the horizon that follows from \eqref{Calabi-coords} and \eqref{xi-NH-expl} becomes, 
\begin{align}\label{eq:rhoeta-GNC}
\rho & =\frac{1}{\langle a,b\rangle}\Big(\frac{1-\hat{\eta}}{\ca^{2}}b_{2}-\frac{1+\hat{\eta}}{\cb^{2}}b_{1}\Big)\frac{\ell\kappa}{4\hat{H}(\hat{\eta})^{1/3}}\lambda+O(\lambda^{2})\,,\nonumber \\
\eta & =-\frac{\ca^{2}a_{1}(1+\hat{\eta})-\cb^{2}a_{2}(1-\hat{\eta})}{\ca^{2}b_{1}(1+\hat{\eta})-\cb^{2}b_{2}(1-\hat{\eta})}+O(\lambda)\,,
\end{align}
which in particular implies that the horizon in Calabi-coordinates is given by $\rho=0$.  Inverting we also deduce that $\lambda$ is a smooth function of $\rho$ at the horizon.

To complete the proof of our lemma, we need to show that there exist functions $F(\rho)$
and $G(\eta)$ such that \eqref{Gij-sep} reproduces \eqref{Gdd-NH} at $O(\lambda)$. From \eqref{orders1} and \eqref{eq:rhoeta-GNC} we see that  $F(\rho) = O(\lambda^2)$ and $G(\eta)=O(1)$ are smooth functions of $\lambda$ at the horizon. Therefore, we can write, 
\begin{equation}
F(\rho)=F_{2}\rho^{2}+O(\lambda^{3})\,,\qquad G(\eta)=G_{0}(\hat{\eta})+O(\lambda)\,,
\end{equation}
where
\begin{equation}\label{eq:F2-G0-Aux}
F_{2}=\frac{1}{2}F''(0)\,,\qquad 
G_{0}(\hat{\eta})=G\Big(-\frac{\ca^{2}a_{1}(1+\hat{\eta})-\cb^{2}a_{2}(1-\hat{\eta})}{\ca^{2}b_{1}(1+\hat{\eta})-\cb^{2}b_{2}(1-\hat{\eta})}\Big)\,.
\end{equation}
One can now check that \eqref{Gdd-NH} and  \eqref{Gij-sep} match at $O(\lambda)$ if and only if 
\begin{equation}\label{matching-conditions}
\frac{b_{2}}{b_{1}}=-\frac{\ca^{2}}{\cb^{2}}\,,\qquad F_{2}=\frac{2}{a_{1}\ca^{2}+a_{2}\cb^{2}}\,,\qquad G_{0}(\hat{\eta})=-\frac{(1-\hat{\eta}^{2})\Delta_{1}(\hat{\eta})}{F_{2}\ca^{2}\cb^{2}b_{1}b_{2}}\,.
\end{equation}
We can now exploit the freedom in the choice of Calabi type coordinates \eqref{ab-transform} to fix 
\begin{equation}\label{ab-fix}
a_{1}=-b_{1}=\ca^{-2}\,,\qquad a_{2}=b_{2}=\cb^{-2}\,,
\end{equation}
which thus fixes
\begin{equation}\label{F2Gcub}
F_{2}=1\,,\qquad G_{0}(\hat{\eta})=(1-\hat{\eta}^{2})\Delta_{1}(\hat{\eta})\,,
\end{equation}
and  \eqref{eq:rhoeta-GNC} simplifies to \eqref{rhoeta-NH}. The second equation in \eqref{eq:F2-G0-Aux} now reduces to $G_{0}(\hat{\eta})=G(\hat\eta)$ which therefore determines the function $G(\eta)$ as claimed. 
\end{proof}

It is worth  noting that in the Calabi-coordinates of Lemma \ref{Calabi-horizon} the constant $\nrm =\langle a, b \rangle$ is fixed by the near-horizon geometry parameters \eqref{ab-fix} to be simply
\begin{equation}
\nrm=2\ca^{-2}\cb^{-2}\,.   \label{eq:N}
\end{equation}

\subsection{Black hole uniqueness theorem in STU supergravity}\label{sec:STU-uniqueness}

We now turn to the proof of our main result, Theorem \ref{stu-theorem}, which is the first black hole uniqueness theorem in STU gauged supergravity.  This is a generalisation of the corresponding result in  minimal gauged supergravity, namely Theorem \ref{minimal-theorem}, which itself is a generalisation of the main theorem in~\cite{Lucietti:2022fqj}. The main assumption of Theorem \ref{minimal-theorem} is the separability of the K\"ahler metric on the base space.  In particular, in the minimal theory we did not need to make any assumptions on the Maxwell field, since this is completely determined by the K\"ahler  data. This is not surprising since both the metric and the Maxwell field are part of the gravitational multiplet, the only multiplet present in the minimal theory. However, in the STU model, the gravitational multiplet is supplemented with two vector multiplets, so naturally we find that an analogous uniqueness theorem requires making further assumptions on the Maxwell fields. In this section, we will analyse the supersymmetry constraints  for the class of separable toric solutions (see Definition \ref{def:sep-sol}) and use the near-horizon analysis in Section \ref{sec:no-gos} to prove the uniqueness theorem. In fact, the near-horizon analysis, which is summarised by Lemmas \ref{product-no-go}, \ref{orthotoric-no-go}, \ref{Calabi-horizon}, reveals that the only type of separable solution compatible with a horizon is Calabi-toric, so our analysis will restrict to this class.

\subsubsection{Calabi-toric supersymmetry solutions with horizons}\label{sec:susy-separable}

Consider a timelike supersymmetric toric solution that is Calabi-toric as in Definition \ref{def:sep-sol}. Thus both the K\"ahler metric $h$ and magnetic fields $F^I$ are separable (CT) and we write the metric in the form \eqref{CT}.  Recall that  by Lemma \ref{J-inv-Maxwell} the magnetic fields are closed, $J$-invariant, 2-forms invariant under the toric symmetry, therefore, by Proposition \ref{sep-Lambda}, if $F^I$ are also separable (CT), their gauge fields can be written as
\begin{equation}
A^{I}=\frac{1}{N}\Big(\frac{\mu_{\rho}^{I}(\rho)}{\rho}(\td\psi+\eta\td\varphi)+\mu_{\eta}^{I}(\eta)\td\varphi\Big)\, ,  \label{eq:CT-gaugefield}
\end{equation}
for functions $\mu_\rho^{I}(\rho), \mu_\eta^{I}(\eta)$ where $N=\langle a, b \rangle$ is a constant (see Table \ref{table:sep-def}). Note that the gauge transformations \eqref{eq:gauge-mu-xieta} for the CT case act as
\be
\label{eq:gauge-mu-xieta-CT}
\mu^I_\rho(\rho) \to \mu^I_\rho(\rho) + \langle \alpha^I, b \rangle \rho , \qquad \mu_\eta^I(\eta)\to \mu^I_\eta(\eta)- \langle \alpha, a \rangle- \langle \alpha, b \rangle \eta  \; .
\ee
We will show that 
separability imposes strong restrictions on supersymmetric solutions.

First, it is useful to note that for a Calabi-toric metric \eqref{CT} the basis of SD and ASD 2-forms \eqref{SDtoric-2-forms} and \eqref{ASDtoric-2-forms} become~\footnote{In order to avoid cluttered expressions, we will occasionally omit the argument of a function when there is no risk of confusion.}
\begin{align}
I^{(1)} &=\td\rho\wedge(\td\psi+\eta\td\varphi)-\rho\td\eta\wedge\td\varphi\,, \\
I^{(2)}&=\sqrt{FG}\td\psi\wedge\td\varphi+\frac{\rho}{\sqrt{FG}}\td\rho\wedge\td\eta\,,   \nonumber \\
I^{(3)}&=\rho\sqrt{\frac{G}{F}}\td\rho\wedge\td\varphi+\sqrt{\frac{F}{G}}\td\eta\wedge(\td\psi+\eta\td\varphi) \; ,  \nonumber
\end{align}
and
\begin{align}
\label{ASD-CT}
J^{(1)} &=\td\rho\wedge(\td\psi+\eta\td\varphi)+\rho\td\eta\wedge\td\varphi\,, \\
J^{(2)} &=\sqrt{FG}\td\psi\wedge\td\varphi-\frac{\rho}{\sqrt{FG}}\td\rho\wedge\td\eta\,,\nonumber \\
J^{(3)}&=\rho\sqrt{\frac{G}{F}}\td\rho\wedge\td\varphi-\sqrt{\frac{F}{G}}\td\eta\wedge(\td\psi+\eta\td\varphi) . \nonumber 
\end{align}
Now, evaluating the field strengths $F^{I}=\td A^{I}$ and comparing with \eqref{FI} with $\Theta^{I}=\Theta^{I}_{i}I^{(i)}$, 
we find  
\begin{equation}\label{Theta-0}
\Theta_{2}^{I}=\Theta_{3}^{I}=0\, ,
\end{equation}
and 
\begin{align}\label{thetaI}
\theta^{I} &:= \Theta_1^I=\frac{1}{2\nrm}\Big[\rho\Big(\frac{\mu_{\rho}^{I}(\rho)}{\rho^{2}}\Big)'-\frac{1}{\rho}\mu_{\eta}^{I}\phantom{}'(\eta)\Big]\,, \\
\Phi^{I} &=-\frac{1}{2\nrm\rho}\Big[\mu_{\rho}^{I}\phantom{}'(\rho)+\mu_{\eta}^{I}\phantom{}'(\eta)\Big]\,.  \label{PhiI}
\end{align}
This shows that for any CT supersymmetric solution the SD 2-forms $\Theta^{I}$ and the scalars $\Phi^{I}$  are fully fixed in terms of gauge field data.  It is convenient to note that the time rescalings \eqref{time-resc} for CT solutions can be realised by
\begin{equation}\label{time-resc-CT}
\rho\to K^{-1}\rho\,,\qquad F(\rho)\to K^{-2}F(\rho)\,,\qquad\mu_{\rho}^{I}(\rho)\to K^{-1}\mu_{\rho}^{I}(\rho)\,,
\end{equation}
with $\eta,\psi,\varphi,G(\eta)$ and $\mu_{\eta}^{I}(\eta)$ unchanged.

For solutions with horizons more information can be extracted from the near-horizon geometry. Recall that in Lemma \ref{Calabi-horizon} we showed that the near-horizon geometry completely fixes the function $G(\eta)$ appearing in the Calabi metric \eqref{CT}.  We will now show that an analogous result holds for the Maxwell field, that is, the function $\mu^I_\eta(\eta)$ is also fixed by the near-horizon geometry.

\begin{lemma} \label{CT-NH-gauge}
Consider a supersymmetric solution with a horizon as in Lemma \ref{Calabi-horizon}, with Maxwell fields that are separable with respect to the same Calabi-coordinates $(\rho, \eta)$. Then there is a gauge where
\begin{equation}\label{muI-eta}
\mu_\rho^I(\rho)= O(\rho^2) , \qquad \mu_{\eta}^{I}\phantom{}'(\eta)=-C^{IJK}\fil_{J}\frac{\nrm\ell^{2}}{6}(\ell\Delta_{2}(\eta)\zeta_K+\ck_{K})\, ,
\end{equation}
where $\Delta_2(\eta)$ is given by \eqref{Delta-functions}. 
\end{lemma}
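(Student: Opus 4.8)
The plan is to work throughout in the Calabi coordinates of Lemma~\ref{Calabi-horizon}, where the base metric is \eqref{CT} with $F(\rho)=\rho^{2}+O(\rho^{3})$ and $G(\eta)=(1-\eta^{2})\Delta_{1}(\eta)$ (see \eqref{eq:FG-Aux}), the moment maps are $x_{i}=\rho(a_{i}+b_{i}\eta)$ (the canonical choice $\tilde a_{i}=0$ of Table~\ref{table:sep-def} together with $c_{i}=0$ from \eqref{eq:c-zero}), and --- since the $F^{I}$ are assumed separable (CT) in these coordinates --- the gauge fields have the form \eqref{eq:CT-gaugefield} with $\mu_{\rho}^{I}=\mu_{\rho}^{I}(\rho)$ and $\mu_{\eta}^{I}=\mu_{\eta}^{I}(\eta)$ by Proposition~\ref{sep-Lambda}. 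I would use two elementary facts. First, the change of coordinates between $(\rho,\eta)$ and the GNC $(\lambda,\hat\eta)$ is smooth near the horizon (Lemma~\ref{Calabi-horizon}), and $\mu_{i}^{I}=a_{i}^{I}+fX^{I}h_{i}/\Delta^{2}$ from \eqref{mu-NH} is smooth (with $\Delta$ nonvanishing) at $\lambda=0$, so $\mu_{\rho}^{I}$ and $\mu_{\eta}^{I}$ are smooth functions of their single arguments at the horizon. Second, \eqref{eq:muxieta-def} together with $\partial_{\eta}x_{i}=\rho\,b_{i}$ gives $\mu_{\rho}^{I}=\rho\,\langle\mu^{I},b\rangle$, hence $\mu_{\rho}^{I}(0)=0$ and $\mu_{\rho}^{I}(\rho)=k^{I}\rho+O(\rho^{2})$ where the constant $k^{I}:=(\mu_{\rho}^{I})'(0)=\langle\mu^{I},b\rangle|_{\rho=0}$.

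The substantive step is to identify $(\mu_{\eta}^{I})'(\eta)$, which I would do via the scalars rather than the Maxwell-field data. For any Calabi-toric solution \eqref{PhiI} reads $\Phi^{I}=-\tfrac{1}{2N\rho}\bigl((\mu_{\rho}^{I})'(\rho)+(\mu_{\eta}^{I})'(\eta)\bigr)$, while the near-horizon scalars are given by \eqref{Phi-NH}. Inserting the dictionary \eqref{rhoeta-NH}, $\rho=\tfrac{\ell\kappa}{4\hat H(\hat\eta)^{1/3}}\lambda+O(\lambda^{2})$ and $\eta=\hat\eta+O(\lambda)$, the powers of $\hat H$ and $\kappa$ cancel and one obtains
\[
(\mu_{\rho}^{I})'(\rho)+(\mu_{\eta}^{I})'(\eta)=-\frac{N\ell^{2}}{6}\,C^{IJK}\zeta_{J}\bigl(\ell\Delta_{2}(\eta)\zeta_{K}+\ck_{K}\bigr)+O(\rho)\,.
\]
Since $(\mu_{\rho}^{I})'$ depends only on $\rho$ and $(\mu_{\eta}^{I})'$ only on $\eta$, letting $\rho\to0$ separates the two and gives $(\mu_{\eta}^{I})'(\eta)=-\tfrac{N\ell^{2}}{6}C^{IJK}\zeta_{J}(\ell\Delta_{2}(\eta)\zeta_{K}+\ck_{K})-k^{I}$, with precisely the constant $k^{I}=(\mu_{\rho}^{I})'(0)$ of the previous paragraph. (Alternatively $k^{I}$ can be extracted from the near-horizon gauge-field data \eqref{NH-expl2} via \eqref{mu-NH}, but the scalar route avoids that computation.)

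The proof then closes by fixing the residual $U(1)$ gauge freedom \eqref{eq:gauge-mu-xieta-CT}, $\mu_{\rho}^{I}\to\mu_{\rho}^{I}+\langle\alpha^{I},b\rangle\rho$ and $\mu_{\eta}^{I}\to\mu_{\eta}^{I}-\langle\alpha^{I},a\rangle-\langle\alpha^{I},b\rangle\eta$. Since $a_{i}$ and $b_{i}$ are linearly independent the parameter $\langle\alpha^{I},b\rangle$ is unconstrained; taking $\langle\alpha^{I},b\rangle=-k^{I}$ converts $\mu_{\rho}^{I}=k^{I}\rho+O(\rho^{2})$ into $\mu_{\rho}^{I}=O(\rho^{2})$ and simultaneously shifts $(\mu_{\eta}^{I})'$ by $+k^{I}$, yielding $(\mu_{\eta}^{I})'(\eta)=-C^{IJK}\zeta_{J}\tfrac{N\ell^{2}}{6}(\ell\Delta_{2}(\eta)\zeta_{K}+\ck_{K})$, as claimed. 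The one point that requires care --- and is really the crux of the lemma --- is that the constant obstructing the stated form of $(\mu_{\eta}^{I})'$ is exactly $-(\mu_{\rho}^{I})'(0)$, so that the \emph{same} single gauge parameter $\langle\alpha^{I},b\rangle$ that makes $\mu_{\rho}^{I}$ vanish to second order also cancels it; the remainder is just bookkeeping of the $O(\rho)$ terms, using that $\eta=\hat\eta$ on the horizon so that $(\mu_{\eta}^{I})'$ is determined as a genuine function of $\eta$ over the whole cross-section range.
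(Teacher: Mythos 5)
Your argument is correct and is essentially the paper's own proof: both rest on (i) smoothness of $\mu^I_\rho=\rho\langle\mu^I,b\rangle$ at the horizon giving $\mu^I_\rho=O(\rho)$, (ii) the gauge freedom \eqref{eq:gauge-mu-xieta-CT} to shift $\mu_\rho^{I}{}'$ by a constant, and (iii) matching the $1/\lambda$ pole of \eqref{PhiI} against \eqref{Phi-NH} via \eqref{rhoeta-NH}. The only difference is ordering — the paper fixes $\mu_\rho^{I}{}'(0)=0$ first and then reads off $\mu_\eta^{I}{}'$, whereas you extract the gauge-invariant sum $(\mu_\rho^{I})'+(\mu_\eta^{I})'$ first and gauge-fix afterwards — which is an immaterial rearrangement.
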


\begin{proof}
For the CT case the function $\mu^I_\rho$ defined by \eqref{eq:muxieta-def} reduces to $\mu^I_\rho(\rho) = \langle \mu^I, b \rangle \rho$. Since $\mu_i^{I}$ and $\rho$ are smooth at the horizon  by \eqref{mu-NH}, \eqref{rhoeta-NH}, we deduce that $\mu_\rho^{I}(\rho)$ are smooth and  $\mu_\rho^I(\rho)= O(\rho)$. 
Furthermore, the gauge transformations \eqref{eq:gauge-mu-xieta-CT}  shift each $\mu_{\rho}^{I}\phantom{}'(\rho)$ by a constant and therefore we can fix a gauge so
\begin{equation}\label{gauge-fix-CT}
\mu_{\rho}^{I}\phantom{}'(0)=0\,.
\end{equation}
This gives the first equation in \eqref{muI-eta}.
Now, using the near-horizon expansions \eqref{rhoeta-NH}  in \eqref{PhiI} gives
\begin{equation}
\Phi^{I}=-\frac{2\hat{H}(\hat{\eta})^{1/3}}{\nrm\ell\kappa\lambda}\mu_{\eta}^{I}\phantom{}'(\hat{\eta})+O(1)\, .
\end{equation}
and comparing with the near-horizon expansion \eqref{Phi-NH} we deduce \eqref{muI-eta} as required.
\end{proof}

It remains to solve the rest of the supersymmetry conditions for the functions $F(\rho)$ and $\mu_\rho^{I}(\rho)$, subject to the near-horizon boundary conditions given in Lemma \ref{Calabi-horizon} and \ref{CT-NH-gauge}.

We start with  \eqref{Ricciscalar-1} and \eqref{ricci}, which  using $G(\eta)$ in \eqref{eq:FG-Aux}, $\mu^I_\eta(\eta)$ in  \eqref{muI-eta}, \eqref{thetaI} and \eqref{PhiI}, yield
\begin{equation}
\zeta_I \mu_{\rho}^{I}\phantom{}'(\rho)=\nrm\Big(1-\frac{1}{2}F''(\rho)\Big)\, ,
\end{equation}
and 
\be
\left(\frac{\zeta_I \mu_\rho^I(\rho)}{\rho^2}\right)'= \frac{N}{2 \rho^3} \left( 2 F'(\rho)- \rho F''(\rho)- 2 \rho \right) \; ,
\ee
respectively, which are equivalent to the single constraint,
\be
\zeta_I \mu^I_\rho(\rho)= \nrm\Big(\rho-\frac{1}{2}F'(\rho) \Big) .  \label{eq:isotropic}
\ee
The Maxwell equations \eqref{eq:maxwell-2} reduces to 
\begin{equation}\label{mxwl-CT}
\mathcal{E}_{I}:=\nabla^{2}\Big(\frac{\ell^{2}}{2}\fil_{J}\Phi^{J}\fil_{I}-G_{IJ}\Phi^{J}\Big)+\frac{\ell^{2}}{12}\fil_{I}\lambda_{1}+\frac{1}{\ell}C_{IJK}(\Phi^{J}\Phi^{K}-\theta^{J}\theta^{K})=0 \,,
\end{equation}
and 
$\lambda_{1}$, given by \eqref{eq:lambda1}, becomes
\begin{equation}
 \lambda_1= \frac{1}{2}\nabla^2 R  + \frac{4}{\ell^2} C_{IJK} \bar{X}^I  (\theta^J \theta^K- \Phi^J \Phi^K ) 
\, ,
\end{equation}
where we have used \eqref{Theta-0} and the definition $\theta^I:= \Theta_1^I$.  

Next, we recall that for a toric solution we showed that one can write $\omega$ as \eqref{om-toric}, so in Calabi-coordinates we must have $\omega=\omega_\psi\td \psi+ \omega_{\varphi} \td \varphi$ for functions $\omega_\psi, \omega_{\varphi}$ that depend only on $(\rho, \eta)$. This implies that $\td \omega$ does not have any $\psi \varphi$ components and hence from \eqref{dom}, \eqref{GMinus} and \eqref{ASD-CT}, we deduce that the function
\be
\lambda_2=0 \; .
\ee
Finally, the integrability condition for the existence of $\omega$, which is equivalent to  \eqref{lambda23eq}, now reduces to
\begin{equation}\label{lambda-CT}
\partial_{\rho}(\sqrt{FG}\lambda_{3})=-G\partial_{\eta}(\lambda_{1}+\scmb)\,,\qquad\partial_{\eta}(\sqrt{FG}\lambda_{3})=\frac{F}{\rho}\Big(\rho\partial_{\rho}(\lambda_{1}-\scmb)-2\scmb\Big)\,,
\end{equation}
where we have defined the combination
\begin{equation}
\scmb :=12\fil_{J}\Phi^{J}\fil_{I}\theta^{I}-24\ell^{-2}\Phi_{I}\theta^{I}\,.
\end{equation}
From \eqref{lambda-CT} we immediately get that the integrability condition for the existence of $\lambda_3$, which is equivalent to \eqref{eq:susyPDE}, is
\begin{equation}\label{int-CT}
\mathcal{E}:=\partial_{\rho}\Big[\frac{F}{\rho}\Big(\rho\partial_{\rho}(\lambda_{1}-\scmb)-2\scmb\Big)\Big]+\partial_{\eta}\Big(G\partial_{\eta}(\lambda_{1}+\scmb)\Big)=0\,.
\end{equation}
With $\theta^{I}$ and $\Phi^{I}$ expressed as in \eqref{thetaI} and  \eqref{PhiI}, $G(\eta)$ and $\mu_{\eta}^{I}\phantom{}'(\eta)$ given by \eqref{eq:FG-Aux} and \eqref{muI-eta} respectively, equations \eqref{eq:isotropic}, \eqref{mxwl-CT} and \eqref{int-CT}  become a system of $\eta$-dependent coupled ODEs for the functions $F(\rho)$ and $\mu_\rho^{I}(\rho)$.

\subsubsection{Uniqueness theorem}

We now have all the necessary ingredients to complete the proof of Theorem \ref{stu-theorem}. We start with the uniqueness of the K\"ahler base. In contrast to minimal supergravity this is coupled to the scalars and Maxwell fields and hence we must solve for these simultaneously.

\begin{prop}
\label{prop:base}
 Consider a supersymmetric toric solution to STU supergravity that is timelike and separable outside a smooth (analytic if $\ca^2=\cb^2$) horizon with compact (locally) $S^3$ cross-sections.  Then, the K\"ahler base is Calabi-toric \eqref{CT} where $G(\eta)$ and $F(\rho)$ are given by \eqref{eq:FG-Aux} and
 \be
 F(\rho)= \rho^2+ s \rho^3, 
 \ee
where the constant $s=
4/\ell^{2}$ or $s=0$, so in particular we can write,
\begin{equation}\label{Cal-KLR}
h=\frac{\td\rho^{2}}{\rho+s\rho^{2}}+(\rho+s\rho^{2})\sigma^{2}+\frac{\rho}{(1-\eta^{2})\Delta_{1}(\eta)}(\td\eta^{2}+\tau^{2})\,,\qquad J=\td(\rho\sigma)\,,
\end{equation}
where we have defined
\begin{align}
\sigma & :=\td\psi+\eta\td\varphi=\frac{1-\eta}{\ca^{2}}\td\phi^{1}+\frac{1+\eta}{\cb^{2}}\td\phi^{2}\,,\nonumber \\
\tau & :=-(1-\eta^{2})\Delta_{1}(\eta)\td\varphi=(1-\eta^{2})\Delta_{1}(\eta)\Big(\frac{\td\phi^{1}}{\ca^{2}}-\frac{\td\phi^{2}}{\cb^{2}}\Big)\,,
\end{align} 
and $\Delta_1(\eta)$ is given by \eqref{Delta-functions} and $0< \ca^{2},\cb^{2}<1$ are constants that parameterise the near-horizon geometry and satisfy \eqref{kappa2}.  Furthermore, the scalar fields are given by \eqref{PhiI} where the functions $\mu_\eta^I(\eta)$ and $\mu_\rho^I(\rho)$ are given by \eqref{muI-eta} and
\be
\mu_{\rho}^{I}(\rho)=-\frac{\ell  N s \bar{X}^I }{2}  \rho^{2} \;  ,   \label{muIrho}
\ee
so in particular, 
\be
\Phi^I = \frac{\ell^2}{12 \rho} C^{IJK} \zeta_J \big( (3 \rho s +\Delta_2(\eta))\ell \zeta_K+ \ck_K \big)  \; .   \label{eq:PhiIsol}
\ee

\end{prop}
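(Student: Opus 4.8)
The plan is to reduce the remaining supersymmetry constraints to a system of ordinary differential equations in $\rho$ and integrate it using the near-horizon boundary conditions already established. By Lemmas~\ref{product-no-go} and~\ref{orthotoric-no-go} the K\"ahler base must be Calabi-toric, so we take the metric as~\eqref{CT} and the gauge field as~\eqref{eq:CT-gaugefield}; Lemma~\ref{Calabi-horizon} fixes $G(\eta)=(1-\eta^{2})\Delta_{1}(\eta)$ and places the horizon at $\rho=0$ with $F(\rho)=\rho^{2}+O(\rho^{3})$; and Lemma~\ref{CT-NH-gauge} together with the gauge choice~\eqref{gauge-fix-CT} fixes $\mu_{\eta}^{I}{}'(\eta)$ as in~\eqref{muI-eta} and gives $\mu_{\rho}^{I}(\rho)=O(\rho^{2})$, $\mu_{\rho}^{I}{}'(0)=0$. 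Thus the only freedom left is in the single-variable functions $F(\rho)$ and $\mu_{\rho}^{I}(\rho)$, and the only constraints not yet imposed are~\eqref{eq:isotropic} (which follows from~\eqref{Ricciscalar-1} and~\eqref{ricci}), the Maxwell equations~\eqref{mxwl-CT}, and the integrability condition~\eqref{int-CT}.

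For the reduction to ODEs I would first note that, for the metric~\eqref{CT}, $\sqrt{\det h}=\rho$, so on $T$-invariant functions $\nabla^{2}=\rho^{-1}\big(\partial_{\rho}(F\partial_{\rho}\,\cdot\,)+\partial_{\eta}(G\partial_{\eta}\,\cdot\,)\big)$. Substituting~\eqref{thetaI}--\eqref{PhiI} together with the explicit $G(\eta)$ and $\mu_{\eta}^{I}{}'(\eta)$, every quantity entering~\eqref{mxwl-CT} and~\eqref{int-CT} becomes the sum of a function of $\rho$ and a polynomial in $\eta$ of low degree, the $\eta$-dependence coming only through the linear functions $\Delta_{1}(\eta),\Delta_{2}(\eta)$ and the cubic $G(\eta)$. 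Clearing the powers of $\rho$, the equations $\mathcal{E}_{I}=0$ and $\mathcal{E}=0$ become polynomial identities in $\eta$, and requiring each $\eta$-coefficient to vanish turns them into a finite system of ODEs in $\rho$ for $F$ and $\mu_{\rho}^{I}$, together with some purely algebraic relations on the near-horizon parameters (automatically satisfied, since the near-horizon geometry is itself a solution). Two structural facts streamline this: the $\eta$-linear part of $\mu_{\eta}^{I}{}'(\eta)$ in~\eqref{muI-eta} points along $\bar{X}^{I}$ (because $C^{IJK}\zeta_{J}\zeta_{K}\propto\bar{X}^{I}$), so the $\eta$-gradient of $\Phi^{I}$ is a fixed constant in the gauge-group index; and the $\bar{X}^{I}$-contraction of~\eqref{mxwl-CT} is an identity (this is how $\lambda_{1}$ was solved for), leaving only $n-1$ independent Maxwell equations.

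I would then integrate the system in the order $F\to\mu_{\rho}^{I}\to s$. A suitable combination of the highest-$\eta$-weight parts of~\eqref{mxwl-CT} and~\eqref{int-CT}, after using~\eqref{eq:isotropic} to trade $\zeta_{I}\mu_{\rho}^{I}$ for $F$, should reduce to an ODE for $F(\rho)$ alone whose only solutions consistent with $F(\rho)=\rho^{2}+O(\rho^{3})$ are $F(\rho)=\rho^{2}+s\rho^{3}$ for a constant $s$. Feeding this back, the remaining $\eta$-weight components of~\eqref{mxwl-CT} become linear ODEs for the $\mu_{\rho}^{I}(\rho)$; integrating them with $\mu_{\rho}^{I}(\rho)=O(\rho^{2})$, $\mu_{\rho}^{I}{}'(0)=0$ and the constraint~\eqref{eq:isotropic} yields~\eqref{muIrho}. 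One relation is then left over---the lowest-$\eta$-weight part of the Maxwell or integrability equation, which is no longer automatically satisfied by the near-horizon data---and it collapses to an algebraic condition for $s$ of the form $s(\ell^{2}s-4)=0$, so that $s=0$ or $s=4/\ell^{2}$. Finally, substituting $F$ and $\mu_{\rho}^{I}$ into~\eqref{PhiI} gives~\eqref{eq:PhiIsol}, and rewriting~\eqref{CT} with $\sigma=\td\psi+\eta\td\varphi$ and $\tau=-G(\eta)\td\varphi$ gives~\eqref{Cal-KLR}; as a consistency check, $s=0$ reproduces the near-horizon behaviour~\eqref{Phi-NH} while $s=4/\ell^{2}$ will be identified with the known black hole.

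The main obstacle is the bookkeeping required to verify that this $\eta$-graded system actually closes in the above order---that the top-weight part genuinely decouples into an ODE for $F$ once~\eqref{eq:isotropic} is used, that the intermediate components then determine the $\mu_{\rho}^{I}$ with no residual integration constants beyond those already fixed, and that the only leftover content is the scalar condition on $s$ with no further solution branches. A second delicate point is the degenerate case $\mathcal{A}^{2}=\mathcal{B}^{2}$ (the $SU(2)\times U(1)$-invariant solutions): there $\Delta_{2}(\eta)$ is constant, several $\eta$-weight equations coalesce, and the grading no longer supplies enough relations; this is precisely why the hypothesis is strengthened to an analytic horizon in that case, the missing information being recovered from the power series of the solution in the Gaussian null coordinate $\lambda$ (equivalently, by continuity from $\mathcal{A}^{2}\neq\mathcal{B}^{2}$).
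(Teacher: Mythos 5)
Your overall strategy for the generic case $\ca^{2}\neq\cb^{2}$ matches the paper's: after Lemmas \ref{product-no-go}, \ref{orthotoric-no-go}, \ref{Calabi-horizon} and \ref{CT-NH-gauge} the problem reduces to solving \eqref{eq:isotropic}, \eqref{mxwl-CT} and \eqref{int-CT} for $F(\rho)$ and $\mu_\rho^I(\rho)$, and indeed the top $\eta$-weight of the integrability condition (equation \eqref{int-aux}) yields a decoupled ODE forcing $F=F_2\rho^2+F_3\rho^3$, while $\partial_\eta\mathcal{E}^I$ (equation \eqref{mxwl-aux}) then gives a linear ODE determining $\mu_\rho^I$. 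However, there is a genuine error in how you propose to fix $s$: you claim a leftover lowest-weight relation collapses to $s(\ell^2 s-4)=0$. No such relation exists — once $F$ and $\mu_\rho^I$ take the form \eqref{sol-KLR}, \emph{all} remaining equations are satisfied identically for arbitrary $F_3$. The constant $F_3$ is a free integration constant, and the restriction to $s=0$ or $s=4/\ell^{2}$ is purely a normalisation choice made with the residual rescaling freedom \eqref{time-resc-CT} of the timelike decomposition (the Killing spinor normalisation). If the field equations genuinely imposed $s(\ell^2 s-4)=0$ they would break this scaling symmetry, which they cannot; so the step you describe would fail if you tried to carry it out.

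The second gap is the case $\ca^{2}=\cb^{2}$, which you essentially wave away. "Continuity from $\ca^{2}\neq\cb^{2}$" is not a valid argument: the equal-rotation solutions form a separate stratum and the space of solutions could a priori be strictly larger there (this is exactly why the hypothesis is strengthened to analyticity). The paper's actual argument is a nontrivial induction on the Taylor coefficients of $\mathcal{F}(\rho)$ and $\nu^I(\rho)$ in $\rho$: at each order the Maxwell equations determine $\nu_{n+2}^I$ linearly in terms of $\mathcal{F}_{n+1}$, and the integrability condition then forces $\mathcal{F}_{n+1}=0$ for all $n\geq 2$ because the coefficient multiplying it (cf.\ \eqref{int-aux-SU2-ordn}) is shown to be strictly positive using the scalar-positivity bounds \eqref{c1c2-bounds}. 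This positivity estimate is the essential content of that half of the proof and is entirely absent from your proposal; without it one cannot conclude that only $\mathcal{F}_1$ survives.
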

\begin{proof}
We have already shown in Lemmas \ref{product-no-go}, \ref{orthotoric-no-go} and \ref{Calabi-horizon} that the only separable K\"ahler base compatible with a smooth horizon is Calabi-toric, with the functions $G(\eta)$ and $\mu_{\eta}^{I}\phantom{}'(\eta)$ determined by the near-horizon geometry where the latter follows from Lemma \ref{CT-NH-gauge}. Therefore we need to solve \eqref{eq:isotropic}, \eqref{mxwl-CT} and \eqref{int-CT} for $F(\rho), \mu^I_\rho(\rho)$. We will show that the \emph{only} solution to this system compatible with Lemma \ref{Calabi-horizon} and \ref{CT-NH-gauge} is
\begin{equation}\label{sol-KLR}
F(\rho)=\rho^{2}+F_{3}\rho^{3}\,,\qquad\mu_{\rho}^{I}(\rho)=-\frac{\ell  N F_{3} \bar{X}^I}{2} \rho^{2}\,,
\end{equation}
where $F_{3}$ an integration constant and $N$ is given by \eqref{eq:N}. Under the scaling freedom \eqref{time-resc-CT}  $F_{3}\to KF_{3}$ where $K\neq 0$ is constant, so we can use this to set $F_3=s$ where $s=0$ or $s=4/\ell^2$, which gives the claimed solution.  The explicit form of the base then follows from  \eqref{angletransf} and \eqref{ab-fix} which give,
\be
\psi = \frac{1}{\ca^2}\phi^1+ \frac{1}{\cb^2} \phi^2, \qquad \varphi= - \frac{1}{\ca^2}\phi^1+\frac{1}{\cb^2} \phi^2   \; .   \label{eq:psiphitransf}
\ee
As in the proof of the corresponding theorem in minimal supergravity~\cite{Lucietti:2022fqj}, we need to distinguish between the cases $\ca^{2}\neq\cb^{2}$ and $\ca^{2}=\cb^{2}$.

\subsubsection*{i) Case $\ca^{2}\neq\cb^{2}$}
By examining the explicit $\eta$-dependence of $\mathcal{E}$ in \eqref{int-CT} and $\mathcal{E}_{I}$ in \eqref{mxwl-CT} we find that both are polynomials in $\eta$ of degree two and one respectively. Using \eqref{eq:isotropic} and \eqref{ck-constraint} we find
\begin{equation}\label{int-aux}
\partial_{\eta}^{2}\mathcal{E}=-\frac{6(\ca^{2}-\cb^{2})^{2}}{\rho^{4}}\Big(\rho^{2}F''(\rho)-4\rho F'(\rho)+6F(\rho)\Big)\,,
\end{equation}
and
\begin{equation}\label{mxwl-aux}
\partial_{\eta}\mathcal{E}^{I}=-\frac{\ell(\ca^{2}-\cb^{2})}{4\rho}\Bigg(\frac{\bar{X}^IF'(\rho)}{3\rho}+\frac{\ca^{2}\cb^{2}\mu_{\rho}^{I}(\rho)}{\ell\rho}\Bigg)'\,.
\end{equation}
Hence \eqref{int-CT} implies the vanishing of \eqref{int-aux}  which can be easily solved to give
\begin{equation}
F(\rho)=F_{2}\rho^{2}+F_{3}\rho^{3}\,,
\end{equation}
with $F_{2}$ and $F_{3}$ integration constants. From Lemma \ref{Calabi-horizon} we deduce $F_{2}=1$ and hence $F(\rho)$ is given by the first equation in \eqref{sol-KLR}. Inserting then back in \eqref{mxwl-aux} we can solve for $\mu_{\rho}^{I}(\rho)$, which after fixing the relevant integration by using \eqref{gauge-fix-CT}, is given by the second equation in \eqref{sol-KLR}. It can now be checked that  \eqref{int-CT} and \eqref{mxwl-CT} are satisfied identically.  This completes the proof of \eqref{sol-KLR} for the case $\ca^{2}\neq\cb^{2}$.

\subsubsection*{ii) Case $\ca^{2}=\cb^{2}$}
In this special case, \eqref{int-aux} and \eqref{mxwl-aux} are automatically satisfied and therefore cannot be used to solve for $F(\rho)$ and $\mu^{I}(\rho)$. In fact, $\mathcal{E}$ and $\mathcal{E}^{I}$ ``lose'' all their $\eta$-dependence, and \eqref{int-CT} and \eqref{mxwl-CT} become ODEs for $F(\rho), \mu_\rho^I(\rho)$, which need to be solved together with \eqref{eq:isotropic}.  In order to present these ODEs in a more convenient form, we rescale the functions $F(\rho)$ and $\mu^{I}_{\rho}(\rho)$ and define
\begin{equation}
\mathcal{F}(\rho) :=\ca^{-2}\rho^{-2}F(\rho)\,,\qquad\nu^{I}(\rho):=\ca^{2}\mu^{I}_{\rho}(\rho)\,.
\end{equation}
In terms of these \eqref{eq:isotropic} is written as~\footnote{For this case of the proof we do not use the summation convention for the indices $I=1,2,3$ and write out all sums explicitly. We furthermore slightly abuse the position of the same indices by writing $\nu_{I}=\nu^{I}$.}
\begin{equation}\label{eq:F1-SU2} 
\ell^{-1} \sum_{I}\nu^{I}+(\rho^{2}\mathcal{F})' -\frac{2\rho}{\ca^{2}} =0\,,
\end{equation}
while 
\eqref{mxwl-CT} and \eqref{int-CT} read respectively,
\begin{align}\label{mxwl-aux-SU2}
\mathcal{E}_{I} & =\frac{1}{\rho}\Big[-\frac{\ell(\ca^{2}-1)}{18}(\ca^{2}\rho\mathcal{F}'-2)+\frac{\ca^{2}}{36\rho}\sum_{J=1}^{3}\mathcal{K}_{J}\nu^{J}+\frac{\ca^{4}\nu^{I}}{12\rho^{2}}(\rho^{2}\mathcal{F})' \nonumber \\
 & +\frac{\ca^{2}}{12\rho}\Big(\frac{\ca^{2}\nu_{I}^{2}}{\ell\rho}+\sum_{J,K=1}^{3}C^{IJK}\nu_{J}\big(\mathcal{K}_{K}+\frac{\ca^{2}}{\ell\rho}\nu_{K}\big)\Big)-\frac{\ca^{4}\nu^{I}}{6\rho}\nonumber \\
 & +\frac{\ca^{4}\mathcal{F}}{4}\big(\rho^{3}(\rho^{-2}\nu^{I})'\big)'+\frac{\ell\ca^{2}\mathcal{F}}{72}\Big(\rho^{4}\big(\frac{6\ca^{2}\mathcal{F}+3\mathcal{K}_{I}+4\ca^{2}-10}{\rho}\big)'\Big)''\Big]'\,,\nonumber \\
\end{align}
and
\begin{align}\label{int-aux-SU2}
\mathcal{E} & =\Bigg[\frac{\ca^{6}\mathcal{F}}{6}\Bigg(
  \frac{1}{\ell^{2}}\sum_{I=1}^{3}\Big(12(\rho^{-1}\nu_{I}^{2})'-9\nu_{I}'\phantom{}^{2}\Big)'+\frac{2}{\ell\ca^{2}}\sum_{I=1}^{3}(\mathcal{K}_{I}\nu^{I})''\nonumber \\
 & -\Big(6(\rho\mathcal{F}^{2})'-8\rho^{2}\big(\rho^{-1}(\rho\mathcal{F})'\big)'-\frac{4}{\ca^{2}}(\rho^{2}\mathcal{F})''+18\rho^{2}(\mathcal{F}\mathcal{F}''-\mathcal{F}'\phantom{}^{2})\nonumber \\
 & -\frac{9}{2}\rho^{4}\mathcal{F}''\phantom{}^{2}+\rho^{3}\big(3\mathcal{F}(3\mathcal{F}+\rho\mathcal{F}')''-\frac{21}{2}\mathcal{F}'\phantom{}^{2}\big)'\Big)'\Bigg)\Bigg]'\,,\nonumber \\
\end{align} 
where we have also used \eqref{eq:F1-SU2} and \eqref{ck-constraint} to rewrite them.

The assumption that the horizon is analytic implies that the metric in GNC is analytic in $\lambda$ and hence by Lemma \ref{Calabi-horizon} and \ref{CT-NH-gauge} that $\mathcal{F}(\rho), \nu^I(\rho)$ are analytic functions of $\rho$,
\begin{equation}
\mathcal{F}(\rho)=\ca^{-2}+\sum_{n=1}^{\infty}\cf_{n}\rho^{n}\,,\qquad\nu^{I}(\rho)=\sum_{n=2}^{\infty}\nu_{n}^{I}\rho^{n}\,,
\end{equation}
where we have taken into account \eqref{eq:FG-Aux} to fix the zeroth order term in $\mathcal{F}(\rho)$ and Lemma \ref{CT-NH-gauge} to fix the first and second order terms in $\nu^{I}(\rho)$. 
We next examine the higher order terms. 

First observe that \eqref{eq:F1-SU2} implies that
\begin{equation}\label{eq:Fn-SU2}
\sum_{I=1}^{3}\nu_{n+1}^{I}=-\ell(n+2)\cf_{n}\,,\qquad n\geq1\,.
\end{equation}
The Maxwell equations \eqref{mxwl-aux-SU2} at the first non-trivial order  yield,
\begin{equation}\label{mxwl-SU2-ord2}
\mathcal{E}^{I}=\frac{\ca^{2}}{18\rho}\Big((2-2\ca^{2}+\mathcal{K}_{I})\nu_{2}^{I}+\sum_{J,K=1}^{3}C^{IJK}(-1+\ca^{2}+\mathcal{K}_{J})\nu_{2}^{K}\Big)+O(\rho^{0})\,.
\end{equation}
This is a linear system of three equations with three unknowns $\nu_{2}^{I}$, but subject to the constraint \eqref{ck-constraint}. In fact the vanishing of \eqref{mxwl-SU2-ord2} implies $\nu_{2}^{1}=\nu_{2}^{2}=\nu_{2}^{3}$ and hence \eqref{eq:Fn-SU2} implies we have
\begin{equation}
\nu_{2}^{I}=-\ell\mathcal{F}_{1}\,.
\end{equation}
We will see that the constant $\mathcal{F}_{1}$ cannot be determined  (\eqref{int-aux-SU2} is automatically satisfied at the relevant order $\mathcal{E}=O(\rho^{0})$) and in fact it is the only free constant of the solution.

For the higher order terms, we will show by induction
\begin{equation}\label{eq:cfn}
\cf_{n}=\nu_{n+1}^{I}=0\,,\qquad n\geq2\,.
\end{equation}
The first step of the inductive argument is to verify the validity of this for $n=2$. Equation \eqref{mxwl-aux-SU2}  now yields 
\begin{align}
\mathcal{E}^{I} & =\frac{\ca^{2}}{18}\Bigg(\Big(\frac{7}{4}\mathcal{K}_{I}-2(2\ca^{2}-11)\Big)\nu_{3}^{I}-\sum_{J,K=1}^{3}\Big(C^{IJK}\big(\frac{1}{4}\mathcal{K}_{I}-(2\ca^{2}-11)-2\mathcal{K}_{J}\big)\nu_{3}^{K}\Big)\Bigg)\nonumber \\
 & +O(\rho)\,.
\end{align}
Due to \eqref{ck-constraint} the vanishing of the above equation has a one-parameter family of solutions which is most conveniently parametrised by $\cf_{2}$ through \eqref{eq:Fn-SU2}, namely\footnote{Note that \eqref{c1c2-bounds} guarantees $(11-2\ca^{2})^{2}+4\mathcal{C}_{1}\neq 0$. In particular for $\cb^{2}=\ca^{2}$ it implies $(11-2\ca^{2})^{2}+4\mathcal{C}_{1}>81+36(1-\ca^{2})>0$ since $\ca^{2}<1$.}
\begin{equation}\label{solnu-ord2}
\nu_{3}^{I}=-\frac{\ell\mathcal{F}_{2}}{3}\frac{4(11-2\ca^{2})^{2}+(11-2\ca^{2})\mathcal{K}_{I}-2\mathcal{K}_{I}^{2}+12\mathcal{C}_{1}}{(11-2\ca^{2})^{2}+4\mathcal{C}_{1}}\,.
\end{equation}
Then \eqref{int-aux-SU2} yields
\begin{equation}\label{int-aux-SU2-ord2}
\mathcal{E}=-\mathcal{F}_{2}\ca^{2}\frac{8(1-\ca^{2})(11-2\ca^{2})^{2}-12(1+2\ca^{2})\mathcal{C}_{1}-4\mathcal{C}_{2}}{(11-2\ca^{2})^{2}+4\mathcal{C}_{1}}+O(\rho)\, .
\end{equation}
Recall $\mathcal{C}_{1}$ and $\mathcal{C}_{2}$ are defined in \eqref{ccal-def}. The factor multiplying $\cF_2$ in the numerator of \eqref{int-aux-SU2-ord2} can never vanish since
\begin{align}
8(1-\ca^{2})(11-2\ca^{2})^{2}-12(1+2\ca^{2})\mathcal{C}_{1}-4\mathcal{C}_{2} & >8(1-\ca^{2})(11-2\ca^{2})^{2}-8(1-\ca^{2})^{3}\nonumber \\
 & =24(1-\ca^{2})(10-\ca^{2})(4-\ca^{2})\nonumber \\
 & >0\,,
\end{align}
where in the first line we have used \eqref{c1c2-bounds} and in the last $\ca^{2}<1$. Therefore the vanishing of \eqref{int-aux-SU2-ord2} implies that $\mathcal{F}_{2}=0$ and hence \eqref{solnu-ord2} implies $\nu_3^I=0$. Therefore, we have shown that  \eqref{eq:cfn} holds for $n=2$.

 We will now assume that \eqref{eq:cfn} holds for some $n\geq 2$ and prove that it also holds for $n+1$. As in the $n=2$ step we start from the Maxwell equations \eqref{mxwl-aux-SU2} expanded to order $O(\rho^{n-1})$,
\begin{align}\label{mxwl-SU2-ordn}
\mathcal{E}^{I} & =\frac{(n+1)\ca^{2}}{36}\Bigg(\Big(\frac{n+6}{n+3}\mathcal{K}_{I}+2a_{n}\Big)\nu_{n+2}^{I}-\sum_{J,K=1}^{3}\Big(C^{IJK}\big(\frac{n\mathcal{K}_{I}}{n+3}+a_{n}-2\mathcal{K}_{J}\big)\nu_{n+2}^{K}\Big)\Bigg)\rho^{n-1}+O(\rho^{n})\,,
\end{align}
where we have introduced for convenience
\begin{equation}
a_{n} :=3n^{2}+6n+2-2\ca^{2}\,.
\end{equation}
Taking into account \eqref{ck-constraint} and \eqref{eq:Fn-SU2} as before we find that the solution to the vanishing of \eqref{mxwl-SU2-ordn} is\footnote{Again from \eqref{c1c2-bounds} we have $a_{n}^{2}+4\mathcal{C}_{1}>3n(n+2)(3n^{2}+6n+4-4\ca^{2})>0$ for $n\geq 2$.}
\begin{equation}\label{solnu-ordn}
\nu_{n+2}^{I}=-\frac{\ell\mathcal{F}_{n+1}}{3}\frac{(n+3)a_{n}^{2}+na_{n}\mathcal{K}_{I}-2n\mathcal{K}_{I}^{2}+12\mathcal{C}_{1}}{a_{n}^{2}+4\mathcal{C}_{1}}\,,
\end{equation}
and inserting into \eqref{int-aux-SU2} we get
\begin{equation}\label{int-aux-SU2-ordn}
\mathcal{E}=-\mathcal{F}_{n+1}\frac{n^{2}(n+1)(n+2)\ca^{2}}{6}\frac{\big(a_{n}-3(1+2\ca^{2})\big)a_{n}^{2}-12(1+2\ca^{2})\mathcal{C}_{1}-4\mathcal{C}_{2}}{a_{n}^{2}+4\mathcal{C}_{1}}\rho^{n-1}+O(\rho^{n})\,.
\end{equation}
It is worth noting that \eqref{mxwl-SU2-ordn}, \eqref{solnu-ordn} and \eqref{int-aux-SU2-ordn}    reproduce respectively \eqref{mxwl-SU2-ord2}, \eqref{solnu-ord2} and \eqref{int-aux-SU2-ord2} for $n=1$. Examining the numerator in \eqref{int-aux-SU2-ordn} we have
\begin{align}
\big(a_{n}-3(1+2\ca^{2})\big)a_{n}^{2}-12(1+2\ca^{2})\mathcal{C}_{1}-4\mathcal{C}_{2} & >\big(a_{n}-3(1+2\ca^{2})\big)a_{n}^{2}-8(1-\ca^{2})^{3}\nonumber \\
 & >\big(a_{1}-3(1+2\ca^{2})\big)a_{1}^{2}-8(1-\ca^{2})^{3}\nonumber \\
 & =24(1-\ca^{2})(10-\ca^{2})(4-\ca^{2})\nonumber \\
 & >0\,,
\end{align}
where we have used \eqref{c1c2-bounds} as well as the fact that $a_{n} > a_{1}>3(1+2\ca^{2})$ for $n\geq2$. Therefore, the  vanishing of  \eqref{int-aux-SU2-ordn} implies that $\mathcal{F}_{n+1}=0$ and hence \eqref{solnu-ordn} implies $\nu_{n+2}^I=0$.  We have therefore shown that  \eqref{eq:cfn} also holds for $n+1$.

Therefore, by induction, it follows that \eqref{eq:cfn} is true. Redefining the integration constant as
\begin{equation}
\mathcal{F}_{1}=\ca^{-2}F_{3}\,,
\end{equation}
we deduce that the only analytic solution to \eqref{eq:isotropic}, \eqref{mxwl-CT} and \eqref{int-CT} in the case $\ca^{2}=\cb^{2}$ is \eqref{sol-KLR}.
\end{proof}

It remains to find the rest of the supersymmetric data in the timelike decomposition \eqref{metricform} and \eqref{maxwellform}, which is given by the following.

\begin{lemma}\label{lem:rest-sol} Given a supersymmetric solution as in Proposition \ref{prop:base}, the function $f$ is 
\begin{equation}
f=\frac{12\rho}{\ell^{2}}\Bigg(\prod_{I=1}^{3}\Big(\big(3s\rho+\Delta_{2}(\eta)\big)\ell\fil_{I}+\ck_{I}\Big)\Bigg)^{-1/3}\, ,
\end{equation}
and the scalar fields are
\begin{equation}\label{eq:stu_XI-Calabi}
    X^I = \dfrac{12 \rho}{f \ell^2 } \left( \big( 3s\rho\,+\, \Delta_{2}(\eta) \big)\,\ell\,\zeta_I \,+\, \mathcal{K}_I  \right)^{-1}\,.
\end{equation}
The axis set is $\eta=\pm 1$ and corresponds to the fixed points of $\partial_{\phi^1}$ and $\partial_{\phi^2}$ respectively. The 1-form $\omega$ is 
\begin{align}
\omega  =\Big(\frac{\ell^{3}s^{2}}{8}\rho+\frac{\ell^{3}s}{8}(1-\Delta_{1}(\eta))-\frac{\ell^{3}\Delta_{3}(\eta)}{48\rho}\Big)\sigma-\frac{\ell^{3}(\ca^{2}-\cb^{2})}{16}\Big(\frac{s}{\Delta_{1}(\eta)}-\frac{1}{\rho}\Big)\tau  \; ,
\end{align}
and the magnetic part of the  gauge fields  are (up to a gauge transformation),
\begin{align}
    A^I &= -\,\frac{\ell}{2} s \rho \bar{X}^I \sigma \,  \nonumber \\
                &- \frac{\ell^2}{6} C^{IJK}\zeta_J \left( \frac{3}{4} \left(\mathcal{A}^2 - \mathcal{B}^2 \right) (1- \eta^2)\,\ell \zeta_K \,+\, \left( (1- \frac{1}{2} (\mathcal{A}^2+\mathcal{B}^2)) \,\ell \zeta_K +\mathcal{K}_K \right)\,\eta \right)  \mathrm{d}\varphi\, . \label{eq:stu_A-Calabi}
\end{align}
The functions $\Delta_i(\eta)$ are given by \eqref{Delta-functions}. 

\end{lemma}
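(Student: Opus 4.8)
The plan is to reconstruct the remaining supersymmetric data directly from the formulae established in Section~\ref{sec:time-class} and the explicit solution found in Proposition~\ref{prop:base}. First I would compute $f$: by Proposition~\ref{prop:base} the scalars $\Phi^I$ are given by \eqref{eq:PhiIsol}, so substituting into \eqref{eq:fPhi} (with $G_{IJ}=\delta_{IJ}$ in STU supergravity) and using $C^{IJK}\zeta_I\zeta_J\zeta_K$ together with the identity $C^{IJK}\zeta_I\zeta_J\mathcal K_K=0$ from \eqref{ck-constraint} gives a cube of the cubic $\prod_I\big((3s\rho+\Delta_2(\eta))\ell\zeta_I+\mathcal K_I\big)$, yielding the stated formula for $f$. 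Then $X^I$ follows from \eqref{eq_Phidef} inverted, i.e.\ $X^I=3f^{-1}C^{IJK}\zeta_J X_K$ combined with $X_I=\tfrac{f\ell}{3}(\tfrac12\ell^2\zeta_I\zeta_J\Phi^J-\Phi_I)$ from \eqref{XPhi_eq}, giving \eqref{eq:stu_XI-Calabi}; this is a purely algebraic check once $f$ and $\Phi^I$ are known.

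Next I would determine the axis structure and $\omega$. For the axis, since $J=\td(\rho\sigma)$ the moment maps are $x_i=\rho\,\sigma_i$ and using \eqref{eq:psiphitransf} one reads off that the Gram matrix $G_{ij}$ in \eqref{Gij-sep} (with $F=\rho^2+s\rho^3$, $G=(1-\eta^2)\Delta_1(\eta)$) degenerates exactly where $G(\eta)=0$, i.e.\ at $\eta=\pm1$, and the vanishing Killing field there is $\partial_{\phi^1}$ or $\partial_{\phi^2}$ respectively (consistent with the canonical near-axis form of the symplectic potential). For $\omega$, I would use \eqref{eq:dom-2} in Calabi coordinates: we already know $\lambda_2=0$ and $\lambda_1$ is fixed by \eqref{eq:lambda1}; solving \eqref{lambda-CT} for $\lambda_3$ (the integrability condition holds by Proposition~\ref{prop:base}), and then integrating \eqref{eq:dom-2}, which in CT coordinates reduces to a pair of first-order ODEs for $\omega_\psi,\omega_\varphi$ as functions of $(\rho,\eta)$, gives the claimed expression after using the freedom $\omega\to\omega+\td\lambda$ to fix integration constants.

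Finally, for the gauge fields $A^I$: we have from \eqref{eq:CT-gaugefield} the general form in terms of $\mu^I_\rho(\rho)$ and $\mu^I_\eta(\eta)$, and these are now completely fixed --- $\mu^I_\rho(\rho)=-\tfrac12\ell N s\bar X^I\rho^2$ by \eqref{muIrho} and $\mu^I_\eta{}'(\eta)$ by \eqref{muI-eta}, which I would integrate once (fixing the constant of integration by a gauge transformation \eqref{eq:gauge-mu-xieta-CT}) to get $\mu^I_\eta(\eta)$ explicitly. Substituting into \eqref{eq:CT-gaugefield} with $N=2\mathcal A^{-2}\mathcal B^{-2}$ from \eqref{eq:N} and the angle identifications \eqref{eq:psiphitransf}, and simplifying the resulting $\eta$-polynomial using \eqref{ck-constraint}, produces \eqref{eq:stu_A-Calabi}. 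The main obstacle I anticipate is the $\omega$ computation: although \eqref{lambda-CT} and \eqref{eq:dom-2} are explicit, assembling $\lambda_1$ from \eqref{eq:lambda1} requires computing $\nabla^2 R$ and the various $C_{IJK}\bar X^I(\theta^J\theta^K-\Phi^J\Phi^K)$ terms for the explicit $F,G,\mu^I$, and then $\lambda_3$ and the two-variable integration of $\td\omega$ must be organized carefully to see the clean split into $\sigma$ and $\tau$ pieces; everything else is algebra that can be checked by substitution back into the supersymmetry equations listed in Section~\ref{sec:time-class}.
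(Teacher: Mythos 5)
Your overall route is the same as the paper's: $f$ and $X^I$ by inverting \eqref{eq_Phidef}/\eqref{XPhi_eq} with the explicit $\Phi^I$ and the constraint $C^{IJK}X_IX_JX_K=2/9$; the axis from the degeneration of $\det G_{ij}=NF(\rho)G(\eta)$; $\lambda_3$ from \eqref{lambda-CT} and then $\omega$ by direct integration; and $A^I$ by integrating $\mu^{I\prime}_\eta$ from \eqref{muI-eta} and inserting \eqref{muIrho} into \eqref{eq:CT-gaugefield}. All of that matches, and the algebraic steps you describe do go through.

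There is, however, one genuine gap: your claim that the integration constants in the $\omega$ step can be disposed of ``using the freedom $\omega\to\omega+\td\lambda$'' is not correct, and the paper does something different there. The general solution of \eqref{lambda-CT} is $\sqrt{FG}\,\lambda_3=-3(\ca^2-\cb^2)FG/\rho^3+\lambda_{3,0}$ with a free constant $\lambda_{3,0}$, and feeding this into the $\omega$ equations produces a contribution proportional to $\lambda_{3,0}$ containing $\log(s+\rho^{-1})$ and $\log$-terms in $\eta$ multiplying $\tau$ and $\td\phi^i$. This piece is neither constant nor exact, so no gauge transformation removes it; it is eliminated only by imposing that the metric is smooth at the horizon (so that $\omega_i$ has the prescribed $1/\rho$ behaviour \eqref{omNH} with no $\log\rho$). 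Likewise the residual additive constants $\omega_{0i}\,\td\phi^i$ cannot be gauged away by a single-valued function $\lambda$ on $B$ (that would require $\lambda=c_i\phi^i$); they are fixed by demanding that $\omega$ extend smoothly to the axis $\eta=\pm1$, where $\td\phi^1$ resp.\ $\td\phi^2$ degenerate. Without these two regularity arguments your construction of $\omega$ leaves a three-parameter ambiguity and does not pin down the stated expression.
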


\begin{proof}  First from the explicit form of the  scalars \eqref{eq:PhiIsol} and their definition \eqref{eq_Phidef} we can invert  to obtain the original scalars,
\be
f^{-1} X_I= \frac{\ell^2}{36} \frac{(3 \rho s +\Delta_2(\eta)) \ell \zeta_K + \ck_K }{\rho} \; ,  \label{eq:fXI} 
\ee
and hence using $C^{IJK} X_I X_J X_K=2/9$ we obtain the claimed form for $f$ and $X^I$.  In particular, for $\rho>0$ the numerator of \eqref{eq:fXI} is strictly positive (since it is for the near-horizon geometry), so  we deduce that away from the horizon $f>0$. It therefore follows from \eqref{eq_invariants} that on the axis set $G_{ij}$ does not have full rank, so $\det G_{ij}=0$.  On the other hand, from the explicit form of the K\"ahler base $\det G_{ij}= N F(\rho) G(\eta)$ and the functions $F(\rho), G(\eta)$, we see that the only way this can vanish for $\rho>0$ is if $\eta=\pm 1$.

We now turn to the 1-form $\omega$. Using the data in Proposition \ref{prop:base}, we find that
\eqref{lambda-CT} give
\begin{align}
\partial_{\rho}\Big(\sqrt{FG}\lambda_{3}\Big) & =\frac{3(\ca^{2}-\cb^{2})G}{\rho^{2}}\,,\nonumber \\
\partial_{\eta}\Big(\sqrt{FG}\lambda_{3}\Big) & =2\Big(\frac{1}{\rho}+s\Big)\Big(\Delta_{2}^{2}+\Delta_{3}+\cc_{1}\Big)\,,
\end{align}
which imply the solution is
\begin{equation}
\sqrt{FG}\lambda_{3}=-3(\ca^{2}-\cb^{2})\frac{FG}{\rho^{3}}+\lambda_{3,0}\,,   \label{eq:lambda3}
\end{equation}
where $\lambda_{3,0}$ a constant.  On the other hand, the equation for $\omega$ \eqref{dom}, reduces to the following PDEs for $\omega_{\psi}$ and $\omega_{\varphi}$, 
\begin{align}
-\frac{48}{\ell^{3}}\partial_{\rho}\omega_{\psi} & =-\Big(\frac{\Delta_{3}}{\rho^{2}}+6s^{2}\Big)\,,\nonumber \\
-\frac{48}{\ell^{3}}\partial_{\rho}\omega_{\varphi} & =-\eta\Big(\frac{\Delta_{3}}{\rho^{2}}+6s^{2}\Big)+\lambda_{3}\rho\sqrt{\frac{G}{F}}\,,\nonumber \\
-\frac{48}{\ell^{3}}\partial_{\eta}\omega_{\psi} & =-\lambda_{3}\sqrt{\frac{F}{G}}\,,\nonumber \\
-\frac{48}{\ell^{3}}\partial_{\eta}\omega_{\varphi} & =-\Big(6s(\Delta_{2}+s\rho)+\frac{2\Delta_{2}^{2}+\Delta_{3}+2\cc_{1}}{\rho}\Big)-\lambda_{3}\eta\sqrt{\frac{F}{G}\, .}
\end{align}
Eliminating $\lambda_3$ using \eqref{eq:lambda3} and integrating one finds the solution is
\begin{align}
\omega & =\Big(\frac{\ell^{3}s^{2}}{8}\rho+\frac{\ell^{3}s}{8}(1-\Delta_{1}(\eta))-\frac{\ell^{3}\Delta_{3}(\eta)}{48\rho}\Big)\sigma-\frac{\ell^{3}(\ca^{2}-\cb^{2})}{16}\Big(\frac{s}{\Delta_{1}(\eta)}-\frac{1}{\rho}\Big)\tau+\omega_{0}\nonumber \\
 & -\frac{\ell^{3}\lambda_{3,0}}{48}\Big\{\frac{\log(s+\rho^{-1})}{(1-\eta^{2})\Delta_{1}(\eta)}\tau+\frac{1}{\ca^{2}\cb^{2}}\Big[\log\Big(\frac{2\Delta_{1}(\eta)}{1+\eta}\Big)\td\phi^{1}-\log\Big(\frac{2\Delta_{1}(\eta)}{1-\eta}\Big)\td\phi^{2}\Big]\Big\}\, ,
\end{align}
where $\omega_0=\omega_{0i} \td \phi^i$ and $\omega_{0i}$ are constants. Now, imposing that the spacetime metric is smooth at the horizon implies that $\omega_i$ near the horizon behaves as \eqref{omNH} in GNC and hence, by the coordinate change \eqref{rhoeta-NH}, must be smooth function of $\rho$ apart from a $1/\rho$ leading pole.  This implies that the constant $\lambda_{3,0}=0$. Furthermore, imposing that $\omega$ is a smooth 1-form at the axis $\eta=\pm 1$ implies that  $\omega_0=0$, giving the claimed form. 

Finally, the gauge field follows from \eqref{eq:CT-gaugefield}, together with \eqref{muI-eta}, \eqref{muIrho} and \eqref{eq:psiphitransf} where we have fixed a gauge for $A^I$.
\end{proof}

We have  completely determined the solution under our assumptions, which is given by Proposition \ref{prop:base} and Lemma \ref{lem:rest-sol}.  We now show that this is locally isometric to the known  black hole solution or its near-horizon geometry.  We have provided a simplified form for this black hole solution in Appendix \ref{sec:KLR} which is convenient for comparison to our general solution.  It is straightforward to check that the solution with $s=4/\ell^2$ and $s=0$ are identical to the known black hole and its near-horizon geometry respectively, upon the coordinate change
\be
\rho = r^2/4, \qquad \eta = \cos\vartheta \; ,
\ee
and the parameter identification (recall also that  $\fil_I:= 3 \ell^{-1} \bar{X}_I$)
\be
\ca^2= A^2, \qquad \cb^2= B^2\,  \qquad  \mathcal{K}_I = K_I\,.
\ee
This completes the proof of Theorem \ref{stu-theorem}.

Finally, we note that the K\"ahler base in Proposition  \ref{prop:base} has an enhanced local $SU(2)\times U(1)$ symmetry if and only if $\ca^2=\cb^2$. Furthermore, in this case the separable magnetic 2-forms $F^I$, which have gauge fields \eqref{eq:CT-gaugefield}, also possess a local $SU(2)\times U(1)$ symmetry. Therefore, the uniqueness theorem for the case $\ca^2=\cb^2$  also establishes the uniqueness of supersymmetric solutions with a local $SU(2)\times U(1)$ symmetry  that are timelike outside a compact horizon. This is because timelike supersymmetric solutions that possess a symmetry which leaves $V$ invariant,  must have a  K\"ahler base and magnetic fields $F^I$ that are also invariant under the symmetry (this can be argued in essentially the same way as in minimal supergravity~\cite{Lucietti:2021bbh}). Furthermore, a K\"ahler metric with a local $SU(2)\times U(1)$ symmetry is a special case of Calabi-toric (with $G(\eta)\propto (1-\eta^2)$). This completes the proof of Theorem \ref{stu-su2-heorem}.

\section{Discussion}
\label{sec:discussion}

In this paper we have proven a uniqueness theorem for the most general known supersymmetric black hole solution  in five-dimensional STU gauged supergravity~\cite{Kunduri:2006ek}.  The key assumption in our theorem is  a  toric symmetry that is compatible with supersymmetry and that is separable in a sense that we defined.  In particular, the concept of separability implies that the solution is specified by single-variable functions in an orthogonal coordinate system and hence allows one to reduce the problem to ODEs.  We find that for solutions containing horizons, with compact locally spherical cross-sections, the near-horizon boundary conditions fix the angular dependent functions while the radial dependent functions are determined by solving the remaining ODEs. Therefore, our proof is constructive, and furthermore, results in a simpler form of the solution. This generalises our previous uniqueness theorem in minimal gauged supergravity~\cite{Lucietti:2022fqj} in two directions: to the more general STU theory and to the broader class of separable K\"ahler bases.   Our work leaves a number of open problems which we now elaborate on.

The near-horizon classification of toric supersymmetric solutions also contains solutions with non-spherical horizon topology, namely horizons with cross-sections of topology $S^1\times S^2$ and $T^3$~\cite{Kunduri:2007qy}. These solutions are not allowed in minimal gauged supergravity, however, they are in certain regions of the scalar moduli space in the STU theory.    There are also near-horizon geometries with horizon cross-section topology $S^1$ times a  2d spindle~\cite{Ferrero:2021etw} (these are allowed in minimal supergravity~\cite{Ferrero:2020laf}). In fact all of these near-horizon geometries are null supersymmetric solutions and therefore not covered by our analysis.  In order to extend our work to these cases would require performing a higher order calculation in order to determine the leading near-horizon behaviour of the K\"ahler base for a timelike supersymmetric solution containing such a horizon.  This is an interesting open problem that would in particular clarify the existence of possible supersymmetric black rings, strings and spindles in this theory. 

It would also  be interesting to investigate the assumption  of separability further, in particular, whether any progress can be made by relaxing this assumption and study generic supersymmetric toric solutions. The main motivation for studying this class is that it can accommodate topologically non-trivial spacetimes and horizons and hence this is the natural symmetry class within which to address the existence of black rings,  black lenses, black holes in bubbling spacetimes or even multi-black holes. This appears to be a very complicated problem in toric K\"ahler geometry, even in the case of minimal supergravity, which ultimately may require the use of numerical techniques. On the geometrical side, we  showed that  separability of a toric K\"ahler metric is equivalent to the existence of a Hamiltonian 2-form. It would be interesting to clarify the spacetime interpretation of this structure and whether it is related to other notions of separability such as the existence of Killing-Yano tensors or related structures (which are known to exist for the black hole  in minimal supergravity~\cite{Kubiznak:2009qi}).

This theory also admits supersymmetric solitons, that is, finite energy spacetimes that are everywhere smooth with no event horizons~\cite{Cassani:2015upa, Durgut:2021rma, Lucietti:2021bbh, Durgut:2023rmu}.  An essential part of our analysis was to assume the existence of a smooth horizon and hence our results do not include such solutions. It would be interesting if a uniqueness theorem for such solitons could be established.  In fact, the known supersymmetric soliton that is asymptotically globally AdS$_5$ has an orthotoric K\"ahler base~\cite{Cassani:2015upa, Durgut:2021rma}, whereas the asymptotically locally AdS$_5$ solitons have an $SU(2)\times U(1)$ invariant base and hence are Calabi-toric~\cite{Lucietti:2021bbh, Durgut:2023rmu}. Thus all the known solitons are separable according to our definition.   It is therefore  plausible that one may be able to establish a classification theorem for solitons for separable supersymmetric toric solutions. 

The known supersymmetric black hole~\cite{Kunduri:2006ek} is expected to arise as the BPS limit of the non-extremal three-charged rotating black hole in STU supergravity~\cite{Wu:2011gq}. The latter is a 6-parameter family that correspond to the mass, three charges and two angular momentum parameters, whereas the former is a 4-parameter family with the mass fixed by the BPS relation and a non-linear constraint between the charges and angular momenta.\footnote{In the special case with $SU(2)\times U(1)$ spacetime symmetry (which posses equal angular momenta) the non-extremal solutions in the STU theory were found in \cite{Cvetic:2004ny} and their  BPS limit was studied in \cite{Cvetic:2005zi} (see also \cite{Ntokos:2021duk}).}  We expect there is a 5-parameter family of supersymmetric solutions that correspond to the BPS limit of the non-extremal black hole~\cite{Wu:2011gq} which generically do not have a black hole interpretation, but can be analytically continued to obtain smooth `complex saddles' that should be relevant in holography.  In the special case of minimal gauged supergravity (which possess three equal charges) the non-extremal black hole is the CCLP solution~\cite{Chong:2005hr} and the BPS complex saddles where first found in~\cite{Cabo-Bizet:2018ehj} and possess an orthotoric K\"ahler base~\cite{Cassani:2015upa}. We expect the aforementioned more general family of complex saddles in the STU theory to also be a separable supersymmetric solution with an orthotoric K\"ahler base.  This may also allow one to write the BPS limit of the solution~\cite{Wu:2011gq} in a simplified form.  \\

\noindent {\bf Acknowledgements.}   JL and PN acknowledge support by the Leverhulme Research Project Grant RPG-2019-355.  SO acknowledges support by a Principal's Career Development Scholarship at the University of Edinburgh.

\appendix

\section{Exceptional separable toric K\"ahler metrics}\label{sec:separability-appendix}

In this Appendix we will consider  non-separable Hamiltonian 2-forms on separable K\"ahler surfaces. As shown in Section \ref{sec:ham-2-forms} this corresponds to the case where the constant $p$ is non-vanishing in \eqref{eq:mu-aux}. We show below this is possible only when the functions $F(\xi)$ and $G(\eta)$ in \eqref{eq:metric-separable} have the form in Table \ref{table:exc-sep-ham}. We also provide the moment maps for the corresponding Hamiltonian 2-forms.  Observe that these examples all have multiple non-trivial Hamiltonian 2-forms (see also~\cite{Apostolov2001TheGO}).

\begin{table}[h!]
\centering
\begin{tabular}{|c|c|c|c|c|}
\hline 
Class & $F(\xi)$ & $G(\eta)$ & $\hmu_{\xi}(\xi)$ & $\hmu_{\eta}(\eta)$\tabularnewline
\hline 
\hline 
PT & $F_{0}+F_{1}\xi$ & $G_{0}+G_{1}\eta$ & $pG_{1}\xi^{2}+\gamma_{1}\xi$ & $pF_{1}\eta^{2}+\delta_{1}\eta$\tabularnewline
\hline 
CT & $F_{2}\xi^{2}+F_{3}\xi^{3}$ & $-F_{2}\eta^{2}+G_{1}\eta+G_{0}$ & $\gamma_{2}\xi^{2}+\gamma_{3}\xi^{3}$ & $0$\tabularnewline
\hline 
OT & $F_{0}+F_{1}\xi+F_{2}\xi^{2}+F_{3}\xi^{3}$ & $-(F_{0}+F_{1}\eta+F_{2}\eta^{2}+F_{3}\eta^{3})$ & $\gamma_{2}\xi^{2}+\gamma_{3}\xi^{3}$ & $-(\gamma_{2}\eta^{2}+\gamma_{3}\eta^{3})$\tabularnewline
\hline 
\end{tabular}
\caption{For exceptional separable toric K\"ahler surfaces the Hamiltonian 2-form is not necessarily separable. $F_{n}, G_{n}, \gamma_{n}$ and $\delta_{n}$ are constants.}
\label{table:exc-sep-ham}
\end{table}

Starting with the PT case, we have
\begin{equation}
0=\partial_{\xi}E_{1}=p\big(2F''(\xi)-G''(\eta)\big)\,,\qquad0=\partial_{\eta}E_{2}=p\big(2G''(\eta)-F''(\xi)\big)\,,
\end{equation}
which implies $F(\xi)$ and $G(\eta)$ are linear as shown in the first row of 
Table \ref{table:exc-sep-ham}. Solving  $0=E_{1}=E_{2}$ we find $\hmu_{\xi}(\xi)=pG_{1}\xi^{2}+\gamma_{1}\xi+\gamma_{0}$ and
$\hmu_{\eta}(\eta)=pF_{1}\eta^{2}+\delta_{1}\eta+\delta_{0}$ where
the constant terms can be fixed to zero using the gauge transformations \eqref{eq:gauge-mu-xieta}. The results are summarised
in the first row of Table \ref{table:exc-sep-ham}.

Next for the CT case we have
\begin{equation}
0=\partial_{\xi}\partial_{\eta}E_{1}=-pG'''(\eta)\,,\qquad0=\partial_{\xi}^{2}\partial_{\eta}(\xi^{2}E_{2})=-p\xi^{2}F''''(\xi)\,,
\end{equation}
which imply $G(\eta)=G_{0}+G_{1}\eta+G_{2}\eta^{2}$ and $F(\xi)=F_{0}+F_{1}\xi+F_{2}\xi^{2}+F_{3}\xi^{3}$.
Further using $0=\partial_{\xi}E_{1}$ we get $F_{0}=0$ and $G_{2}=-F_{2}.$
We then have $0=E_{1}=-4pF_{1}-\tilde{\mu}_{\eta}''(\eta)$ and hence
$\tilde{\mu}_{\eta}(\eta)=-2pF_{1}\eta^{2}+\delta_{1}\eta+\delta_{0}$
and inserting into $0=\partial_{\eta}(\xi^{2}E_{2})$ we find $F_{1}=0$.
With these $0=E_{2}$ becomes and ODE for $\hmu_{\xi}(\xi)$ which
can be readily solved to give $\hmu_{\xi}(\xi)=\gamma_{3}\xi^{3}+\gamma_{2}\xi^{2}-\delta_{1}\xi$.
We can then use gauge transformations  \eqref{eq:gauge-mu-xieta} to
get the results in the second row of Table \ref{table:exc-sep-ham}.

For OT geometries we have 
\begin{equation}
0=\partial_{\xi}\Big[\frac{1}{(\xi-\eta)^{2}}\partial_{\eta}^{2}\big((\xi-\eta)^{2}E_{1}\big)\Big]=-pG''''(\eta)\,,\qquad0=\partial_{\eta}\Big[\frac{1}{(\xi-\eta)^{2}}\partial_{\xi}^{2}\big((\xi-\eta)^{2}E_{2}\big)\Big]=-pF''''(\xi)\,,
\end{equation}
from which we infer that $F(\xi)$ and $G(\eta)$ are cubic polynomials.
We then have $0=\partial_{\xi}^{2}\partial_{\eta}^{2}E_{1}=-2\tilde{\mu}_{\eta}''''(\eta)$
and $0=\partial_{\xi}^{2}\partial_{\eta}^{2}E_{2}=-2\tilde{\mu}_{\xi}''''(\xi)$
and therefore $\hmu_{\xi}(\xi)$ and $\tilde{\mu}_{\eta}(\eta)$ are
cubic polynomials as well. Then $0=E_{1}=E_{2}$ are polynomial equations
in $\xi$ and $\eta$ and we can easily deduce that $F(\xi)$ and
$G(\eta)$ should have opposite coefficients and the same holds for
$\hmu_{\xi}(\xi)$ and $\tilde{\mu}_{\eta}(\eta)$. Finally using
gauge transformations  \eqref{eq:gauge-mu-xieta} we can arrive at the third row of Table \ref{table:exc-sep-ham}.

\section{The known black hole}\label{sec:KLR}

The known supersymmetric black hole in STU gauged supergravity is a four parameter family of supersymmetric solutions~\cite{Kunduri:2006ek}. We present it here in a  simplified form, also providing the necessary formulae to compare  to our notation.

The parameters of the solution are  $0<A^{2},B^{2}<1$ and $K_{I}$, $I=1,2,3$  subject to 
\begin{equation}
C^{IJK}\bar{X}_{I}\bar{X}_{J}K_{K}=0\,,
\end{equation}
where $\bar{X}_{I}$ were introduced in subsection \ref{sec:time-class}. The parameter space of the solution is further constrained by 
\begin{equation}
\kappa^{2}(A^{2},B^{2},C_{1},C_{2})> 0\,,\qquad K_{I}>\frac{3}{2}(A^{2}+B^{2})+\frac{1}{2}|A^{2}-B^{2}|-1\,,
\end{equation}
where
\begin{equation}\label{ccal-def-2}
C_{1}=\frac{\ell}{6}C^{IJK}\bar{X}_{I}K_{J}K_{K}\,,\qquad C_{2}=\frac{1}{6}C^{IJK}K_{I}K_{J}K_{K}\, ,
\end{equation}
and $\kappa^{2}$ is given by \eqref{kappa-def}. The metric and the K\"ahler form of the K\"ahler base are given by
\begin{align}\label{eq:stu_KLRmetr}
 h & =\frac{\mathrm{d}  r^{2}}{V( r)}+\frac{ r^{2}}{4}\left(\frac{\mathrm{d}\vartheta^{2}}{\Delta_{\vartheta}}+\Delta_{\vartheta}\sin^{2}\vartheta\mathrm{d}\phi^{2}\right)+\frac{ r^{2}V( r)}{4}(\mathrm{d}\psi+\cos\vartheta\mathrm{d}\phi)^{2}\,,\nonumber \\
    J & =\mathrm{d}\left(\tfrac{1}{4} r^{2}(\mathrm{d}\psi+\cos\vartheta\mathrm{d}\phi)\right)\,,
\end{align}
where $\Delta_{\vartheta}=A^{2}\cos^{2}(\vartheta/2)+B^{2}\sin^{2}(\vartheta/2)$. The  black hole corresponds to $V=1+\frac{ r^2}{\ell^2}$ while its near-horizon geometry (also a supersymmetric solution)  corresponds to $V=1$. The coordinate ranges are $r\geq 0$ and $0 \leq \vartheta \leq \pi$ while the angles $\psi$ and $\varphi$ are given in terms of $2\pi$-periodic coordinates $\phi^{i}$ as
\begin{equation}\label{eq:stu_KLRangles}
\psi=A^{-2}\phi^{1}+B^{-2}\phi^{2}\,,\qquad\phi=-A^{-2}\phi^{1}+B^{-2}\phi^{2}\,.
\end{equation}
An interesting observation about \eqref{eq:stu_KLRmetr} is the fact that it does not involve the constants $K_{I}$. In particular the K\"ahler base is the same for $K_{1}=K_{2}=K_{3}=0$ which yields the supersymmetric CCLP solution \cite{Chong:2005hr} (see appendix B of \cite{Lucietti:2022fqj}).

In order to write the full solution it is convenient to introduce $\tilde{r}^{2}$ (note that it is $\vartheta$-dependent) through
\begin{equation}
\tilde{r}^{2}/\ell^{2}=V(r)-\Delta_{\vartheta}+\frac{1}{3}(A^{2}+B^{2}-2)\,.
\end{equation}
With this we have 
\begin{equation}
f^{-3}=\prod_{I=1}^{3}\Big(\frac{3\tilde{r}^{2}}{r^{2}}\bar{X}_{I}+\frac{\ell^{2}}{3r^{2}}K_{I}\Big)\,,
\end{equation}
and
\begin{align}
\omega  &= \left(\frac{ r^{2}}{2\ell}+\frac{\ell}{2}(1-\Delta_{\vartheta})-\frac{\ell^{3}}{12 r^{2}}\big(6\Delta_{\vartheta}-A^{4}-B^{4}+A^{2}B^{2}-2(A^{2}+B^{2})-1 -C_1\big)\right)(\mathrm{d}\psi+\cos\vartheta\mathrm{d}\phi)\nonumber \\
        &+ \frac{\ell(A^{2}-B^{2})}{4} \left( 1\,-\, \frac{\ell^{2}\Delta_{\vartheta}}{ r^{2}}  \right) \sin^{2}\vartheta\,\mathrm{d}\phi\,.
\end{align}
Finally, the scalar and gauge fields are respectively given by
\begin{equation}
f^{-1}X_{I}=\frac{\tilde{r}^{2}}{r^{2}}\bar{X}_{I}+\frac{\ell^{2}}{9r^{2}}K_{I}\,,
\end{equation}
and
\begin{align}
     \mathbf{A}^{I} &= X^I f (\mathrm{d} t + \omega) \,-\,\epsilon\, \frac{r^2}{2 \ell} \bar{X}^I \,(\mathrm{d}\psi+\cos\vartheta\mathrm{d}\phi)\, \nonumber\\
                &- \frac{\ell}{2} C^{IJK} \bar{X}_J \left( \frac{9}{4}   \left(A^2 - B^2 \right)  \bar{X}_K\sin^2 \vartheta \,+\, \big(3 \bar{X}_K- \frac{3}{2} (A^2+B^2) \bar{X}_K + K_{K}\big)\,\cos \vartheta \right) \mathrm{d}\phi\, ,
\end{align}
where $\epsilon = 1$ for the black hole and $\epsilon = 0$ for its near-horizon geometry.

The map between our coordinates and parameters and those in \cite{Kunduri:2006ek} is given by the following,
\begin{equation}
\ell_{\textrm{here}}=g_{\textrm{there}}^{-1},\quad( r/\ell)_{\textrm{here}}=\sinh(g\sigma )_{\textrm{there}}\,,\quad(\phi^{1},\phi^{2},\vartheta)_{\textrm{here}}=(-\phi,-\psi,2\theta)_{\textrm{there}}\,,
\end{equation}
\begin{equation}
(A^{2},B^{2},K_{I})_{\textrm{here}}=(A^{2},B^{2},9e_{I}/\ell^{2})_{\textrm{there}}\,,\quad\Delta_{\vartheta}|_{\textrm{here}}=\frac{\Delta_{\theta}}{g^{2}\alpha^{2}}\Big|_{\textrm{there}\,}\,,  \; \quad \tilde{r}^2|_{\text{here}} = \rho^2|_{\text{there}}.
\end{equation}
We also note that for $A^2=B^2$ we recover the $SU(2)\times U(1)$-symmetric black hole found in \cite{Gutowski:2004yv}.

\section{Unified form of near-horizon geometry}\label{NH-comparison}

The possible near-horizon geometries with compact cross-sections in five-dimensional STU gauged supergravity that admit a toric symmetry were derived in~\cite{Kunduri:2007qy}. For the solutions with locally $S^3$ horizons, the cases with generic toric symmetry and enhanced $SU(2)\times U(1)$ symmetry were treated separately in that reference. The latter with enhanced symmetry first appeared in~\cite{Gutowski:2004yv}. Here we show that they can be written in a unified coordinate system as in subsection \ref{sec:NH}.

\subsection{Generic toric symmetry}

The near-horizon geometry in the case of generic toric symmetry was expressed in~\cite{Kunduri:2007qy} in terms of six parameters $x_{1},x_{2},x_{3}$ and $\knh_{1},\knh_{2},\knh_{3}$, the latter being subject to 
\begin{equation}
\knh_{1}+\knh_{2}+\knh_{3}=0\,.
\end{equation}
Let us present the near-horizon solution in this parametrisation where $\chi^{i}_{\text{here}}=x^{i}_{\text{there}}$ and $k_{I}|_{\text{here}}=K_{I}|_{\text{there}}$. The leading order of the near-horizon geometry in \eqref{NH-metric-main} is given by
\begin{align}\label{NHgeom2007}
\Delta^{(0)}= & \frac{\Delta_{0}}{H(x)^{2/3}}\,,\nonumber \\
h^{(0)}= & \Big(C^{2}-\frac{\Delta_{0}^{2}}{H(x)}\Big)\td\chi^{1}+\frac{\Delta_{0}(\alpha_{0}-x)}{H(x)}\td\chi^{2}-\frac{H'(x)}{3H(x)}\td x\,,\nonumber \\
\gamma^{(0)}= & \frac{\ell^{2}H(x)^{1/3}}{4P(x)}\td x^{2}+\frac{C^{2}H(x)-\Delta_{0}^{2}}{H(x)^{2/3}}\Big(\td\chi^{1}+\frac{\Delta_{0}(\alpha_{0}-x)}{C^{2}H(x)-\Delta_{0}^{2}}\td\chi^{2}\Big)^{2}\nonumber \\
+ & \frac{4H(x)^{1/3}P(x)}{C^{2}H(x)-\Delta_{0}^{2}}(\td\chi^{2})^{2}\,,
\end{align}
while of the near-horizon gauge field \eqref{NH-gauge-main} is given by
\begin{equation}\label{NHgauge2007}
a_{(0)}^{I}=\frac{X_{(0)}^{I}}{H(x)^{1/3}}\big(\Delta_{0}\td\chi^{1}+(x-\alpha_{0})\td\chi^{2}\big)\,,
\end{equation}
and of the scalars by
\begin{equation}\label{NHscalar2007}
X_{I}^{(0)}=\frac{1}{H(x)^{1/3}}\Big(\frac{\ell}{3}\fil_{I}x+\knh_{I}\Big)\,.
\end{equation}
The constants $\Delta_{0},C$ and $\alpha_{0}$ are given in terms of $x_{1,2,3}$ as
\bea
C^2= \frac{4}{\ell^2}(x_1+x_2+x_3),  \qquad \alpha_0= \frac{x_1x_2+ x_1x_3+x_2x_3-3c_1}{2(x_1+x_2+x_3)}, \nonumber  \\
\Delta_0^2=\frac{4 (x_1 x_2 x_3+c_2) (x_1+x_2+x_3)}{\ell^2}-\frac{(x_1x_2+ x_1x_3+x_2x_3-3c_1)^2}{\ell^2} \; ,
\eea
and the functions $H(x)$ and $P(x)$ are defined by
\begin{equation}
H(x)=\prod_{I=1}^{3}(x+3\knh_{I})=x^{3}+3c_{1}x+c_{2}\,,\quad P(x)=\prod_{i=1}^{3}(x-x_{i})=H(x)-\frac{\ell^{2}C^{2}}{4}(x-\alpha_{0})^{2}-\frac{\Delta_{0}^{2}}{C^{2}}\,,
\end{equation}
where
\begin{equation}
c_{1}=\frac{3\ell}{2}C^{IJK}\fil_{I}\knh_{J}\knh_{K}\,,\qquad c_{2}=\frac{9}{2}C^{IJK}\knh_{I}\knh_{J}\knh_{K}\,.
\end{equation}
One can easily see that the near-horizon solution is invariant under the rescalings
\begin{equation}\label{resc-NH}
x\to\tilde{K}x\,,\qquad\chi^{1}\to\tilde{K}^{-1}\chi^{1},\qquad\chi^{2}\to\chi^{2}\,,\qquad \knh_{1,2,3}\to\tilde{K}\knh_{1,2,3}\,,\qquad x_{1,2,3}\to\tilde{K}x_{1,2,3}\,,
\end{equation}
where $\tilde{K}>0$ is a constant.

The values of $x_{1},x_{2},x_{3}$ are restricted by
\begin{equation}\label{x-constraints}
0<x_{1}<x_{2}<x_{3}\,,\qquad\Delta_{0}^{2}(x_{1},x_{2},x_{3})>0\,,
\end{equation}
and for the coordinate $x$ we  have $x_{1} \leq  x \leq x_{2}$  with $P(x)>0$ in the interior. The parameters $\knh_{I}$ are also constrained by demanding positivity of the scalars on the horizon, so \eqref{NHscalar2007} gives
\begin{equation}\label{sc-pos-x}
\frac{1}{3}x_{1}+\knh_{I}>0\,.
\end{equation}

The metric generically has conical singularities at the endpoints $x= x_{1}, x_{2}$ where two different linear combinations of the biaxial Killing fields vanish~\footnote{Note a typo in \cite{Kunduri:2007qy}, equation (122)  in that paper for $\omega(x_i)$ has a missing factor of $\Delta_0$ in the numerator.}. The Killing fields $\partial_{\chi^{i}}$ do not necessarily have closed orbits and are related to the Killing fields with fixed points  $m_{i}=\partial_{\hat{\phi}^{i}}$  by
\begin{equation}\label{m-chi}
m_{i}=-d_{i}\Big(\frac{4}{\ell^{2}C^{4}(\alpha_{0}-x_{i})}\partial_{\chi^{1}}-\partial_{\chi^{2}}\Big)\,,\qquad i=1,2 ,
\end{equation}
where $m_1=0$ at $x=x_{1}$ and $m_2=0$ at $x=x_{2}$. The Killing fields $m_i$ must have closed orbits in order to avoid conical singularities.
We can determine the constants $d_{i}$ (up to signs) by demanding that $\hat{\phi}^{i}\sim\hat{\phi}^{i}+2\pi$ and that the metric has no conical singularities at these endpoints. We find 
\begin{equation}\label{di}
d_{1}=\frac{\ell^{3}C^{2}}{4(x_{2}-x_{1})}\frac{\alpha_{0}-x_{1}}{x_{3}-x_{1}}\,,\qquad 
d_{2}=-\frac{\ell^{3}C^{2}}{4(x_{2}-x_{1})}\frac{\alpha_{0}-x_{2}}{x_{3}-x_{2}}\,,
\end{equation}
where the signs have been conveniently chosen. From equations \eqref{m-chi} and \eqref{di} we can read off the matrix $A$ which determines the transformation to the $2\pi$-periodic angles $\chi^{i}=\hat{\phi}^{j}A_{j}^{\; i}$ where $m_{i} =A_{i}^{\; j} \partial_{\chi^j}$.

It's also useful to note that from \cite{Kunduri:2007qy} we can infer that 
\begin{equation}\label{NHZ02007}
Z^{(0)}=\frac{3\ell}{2\sum_{I}X^{I}}\Big[-\frac{x^{2}+c_{1}}{H}\Big(C^{2}(x-\alpha_{0})\td\chi^{1}-\frac{4\Delta_{0}}{\ell^{2}C^{2}}\td\chi^{2}\Big)+\frac{\td H}{3H}\Big]\,,
\end{equation}
where $Z^{(0)}$ is the leading order of the 1-form $Z$ appearing in \eqref{eq_JGNC}.

The near-horizon solution can be expressed in terms of quantities invariant under \eqref{resc-NH}. For this purpose, we define a new coordinate 
\begin{equation}
\hat{\eta}:=-\frac{x-x_{1}+x-x_{2}}{x_{2}-x_{1}}\,,
\end{equation}
and new parameters
\begin{equation}
\mathcal{A}^{2} :=\frac{x_{3}-x_{1}}{x_{1}+x_{2}+x_{3}}\,,\qquad\mathcal{B}^{2} :=\frac{x_{3}-x_{2}}{x_{1}+x_{2}+x_{3}}\,,\qquad\mathcal{K}_{I} :=\frac{9\knh_{I}}{x_{1}+x_{2}+x_{3}}\,,
\end{equation}
and the parameters $\cc_{1}$ and $\cc_{2}$ are defined in \eqref{ccal-def}~\footnote{Note that $
\cc_{1}=\frac{9c_{1}}{x_{1}+x_{2}+x_{3}}$, $\cc_{2}=\frac{27c_{2}}{x_{1}+x_{2}+x_{3}}$.}.
Therefore $-1\leq \hat{\eta}\leq 1$ and from the first equation in \eqref{x-constraints} we deduce that $0<\cb^{2}<\ca^{2}<1$.  We can now straightforwardly verify that the expressions \eqref{NHgeom2007}, \eqref{NHgauge2007}, \eqref{NHscalar2007} and \eqref{NHZ02007} map to \eqref{NH-expl1}, \eqref{NH-expl2},  \eqref{NH-expl3} and \eqref{Z0} respectively (up to gauge transformations for the gauge field), the second constraint in \eqref{x-constraints} maps to \eqref{kappa2} and \eqref{sc-pos-x} maps to \eqref{c1c2-bounds}. In comparing the relevant expressions, some useful relations are
\begin{equation}
\Delta_{2}(\hat{\eta})^{2}+\cc_{1}=\frac{9(x^{2}+c_{1})}{(x_{1}+x_{2}+x_{3})^{2}}\,,\qquad\kappa=\frac{3\ell\Delta_{0}}{(x_{1}+x_{2}+x_{3})^{2}}\,.
\end{equation}
Finally, note that the parameter region can be extended to $\cb^{2}>\ca^{2}$. To see this, observe that if we exchange
\begin{equation}\label{exchange}
\hat{\phi}^{1}\leftrightarrow\hat{\phi}^{2}\,,\qquad\hat{\eta}\leftrightarrow-\hat{\eta}\,,\qquad\mathcal{A}^{2}\leftrightarrow\mathcal{B}^{2}\,,
\end{equation}
we get identical expressions in subsection \ref{sec:NH} for the near-horizon solution.

\subsection{Enhanced symmetry}

The near-horizon solution with enhanced local $SU(2)\times U(1)$ rotational symmetry is parametrised by a constant $\Delta>0$ and the values of the  constant scalars $X^{I}>0$ subject to $X^{1}X^{2}X^{3}=1$~\cite{Kunduri:2007qy}. The data in \eqref{NH-metric-main} and \eqref{NH-gauge-main} explicitly read
\begin{align}
\Delta^{(0)}= & \Delta\,,\nonumber \\
h^{(0)}= & \frac{X\Delta}{\ell(\Delta^{2}+\ell^{-2}Y)}(\td\phi+\cos\theta\td\psi)\,,\nonumber \\
\gamma^{(0)}= & \frac{1}{\Delta^{2}+\ell^{-2}Y}(\td\theta^{2}+\sin^{2}\theta\td\psi^{2})+\frac{\Delta^{2}}{(\Delta^{2}+\ell^{-2}Y)^{2}}(\td\phi+\cos\theta\td\psi)^{2}\,,
\end{align}
and
\begin{equation}
a^{I}_{(0)}=-X^{I}\frac{(X-2X^I)}{\ell(\Delta^{2}+\ell^{-2}Y)}(\td\phi+\cos\theta\td\psi)\,,
\end{equation}
where
\begin{equation}
X=X^{1}+X^{2}+X^{3}\,,
\end{equation}
\begin{equation}
Y=(X^{1})^{2}+(X^{2})^{2}+(X^{3})^{2}-2X^{1}X^{2}-2X^{1}X^{3}-2X^{2}X^{3}\,,
\end{equation}
and the ranges of Euler angles that cover $S^3$ are $0\leq \phi \leq4\pi , 0\leq \psi \leq 2\pi, 0\leq \theta\leq\pi$.

The above data is equivalent to  \eqref{NH-expl1} and \eqref{NH-expl2} for $\cb^{2}=\ca^{2}$  (up to gauge transformations) under the coordinate change
\begin{equation}
\phi=\hat{\phi}^{1}+\hat{\phi}^{2}\,,\qquad\psi=-\hat{\phi}^{1}+\hat{\phi}^{2}\,,\qquad\cos\theta=\hat{\eta}\,,
\end{equation}
and the parameters are related by 
\begin{equation}
\Delta=\frac{3\kappa}{\ell\hat{H}^{2/3}}\,,\qquad X^{I}=\frac{\hat{H}^{1/3}}{1-\ca^{2}+\ck_{I}}\,,
\end{equation}
where the function $\hat{H}(\hat{\eta})$ for $\cb^{2}=\ca^{2}$ becomes a constant. The inverse transformation of the parameters is
\begin{equation}
\ca^{2}=\frac{\ell^{2}\Delta^{2}+Y}{\ell^{2}\Delta^{2}+X^{2}}\,,\qquad\ck_{I}=\frac{X^{2}-Y}{\ell^{2}\Delta^{2}+X^{2}}\frac{2-(X^{I})^{2}(X-X^{I})}{1+(X^{I})^{2}(X-X^{I})}\,.
\end{equation}
Note that $X^{1}X^{2}X^{3}=1$ implies \eqref{ck-constraint} and that the constraints $\Delta>0$ and $X^{I}>0$ become \eqref{kappa2} and \eqref{c1c2-bounds} respectively.

\end{document}